\DeclareFontFamily{OT1}{pzc}{}
\DeclareFontShape{OT1}{pzc}{m}{it}{<-> s * [1.35] pzcmi7t}{}
\DeclareMathAlphabet{\mathpzc}{OT1}{pzc}{m}{it}
\algnewcommand\REQUIRED{\item[\textbf{Required:}]}%
\algnewcommand\INPUT{\item[\textbf{Input:}]}%
\algnewcommand\OUTPUT{\item[\textbf{Output:}]}%
\def\bK{{\bf K}}
\def\bn{{\bf n}}
\def\bZ{{\bf Z}}
\def\bD{{\bf D}}
\def\by{{\bf y}}
\def\bV{{\bf V}}
\def\bW{{\bf W}}
\def\bu{{\bf u}}
\def\bh{{\bf h}}
\def\bH{{\bf H}}
\def\bS{{\bf S}}
\def\bQ{{\bf Q}}
\def\bv{{\bf v}}
\def\bC{{\bf C}}
\def\bB{{\bf B}}
\def\bF{{\bf F}}
\def\bR{{\bf R}}
\def\br{{\bf r}}
\def\bI{{\bf I}}
\def\bE{{\bf E}}
\def\bv{{\bf v}}
\def\bA{{\bf A}}
\def\bB{{\bf B}}
\def\bU{{\bf U}}
\def\bs{{\bf s}}
\def\bx{{\bf x}}
\def\bT{{\bf T}}
\def\bPsi{{\bf \Psi}}
\def\bPhi{{\bf \Phi}}
\def\bLambda{{\bf \Lambda}}
\def\bSigma{{\bf \Sigma}}
\def\bGamma{{\bf \bgamma}}
\def\bUpsilon{{\bf \Upsilon}}
\def\bGamma{{\bf \Gamma}}
\def\b0{{\bf 0}}
\DeclareMathAlphabet\mathbfcal{OMS}{cmsy}{b}{n}
\DeclareMathAlphabet\mathbfbb{OMS}{cmsy}{b}{n}
\DeclareMathOperator*{\argmin}{arg\,min}
\DeclareMathAlphabet\mathbfcal{OMS}{cmsy}{b}{n}
\newcommand{\refappendix}[1]{\hyperref[#1]{Appendix~\ref*{#1}}}
\newcommand{\changeblack}[1]{{\color{black}#1}}
\newcolumntype{L}{>{\centering\arraybackslash}m{3cm}}
\newtheorem{lemma}{Lemma}
\title{\LARGE Hybrid Analog and Digital Beamforming Design for Channel Estimation in Correlated Massive MIMO Systems}
\author{ Javad Mirzaei$^{\star}$, Shahram ShahbazPanahi$^{\star\dagger}$, Foad Sohrabi$^{\star}$ and Raviraj Adve$^{\star}$ \\
 {\small  $^{\star }$Department of Electrical and Computer Engineering, University of Toronto, Canada

 $^{\star\dagger}$Department of Electrical and Computer Engineering, Ontario Tech University, Canada
  }

 \thanks{\small This work has been presented in part in \cite{Javad_ICASSP_2020}.}

}
\begin{document}

\maketitle
\begin{abstract}\label{Abs}
In this paper, we study the channel estimation problem in correlated massive multiple-input-multiple-output (MIMO) systems with a reduced number of radio-frequency (RF) chains. Importantly, other than the knowledge of channel correlation matrices, we make no assumption as to the structure of the channel. To address the limitation in the number of RF chains, we employ hybrid beamforming, comprising a low dimensional digital beamformer followed by an analog beamformer implemented using phase shifters. Since there is no dedicated RF chain per transmitter/receiver antenna, the conventional channel estimation techniques for fully-digital systems are impractical. By exploiting the fact that the channel entries are uncorrelated in its eigen-domain, we seek to estimate the channel entries in this domain. Due to the limited number of RF chains, channel estimation is typically performed in multiple time slots. Under a total energy budget, we aim to design the hybrid transmit beamformer (precoder) and the receive beamformer (combiner) in each training time slot, in order to estimate the channel using the minimum mean squared error criterion. To this end, we choose the precoder and combiner in each time slot such that they are aligned to transmitter and receiver eigen-directions, respectively. Further, we derive a water-filling-type expression for the optimal energy allocation at each time slot. This expression illustrates that, with a low training energy budget, only significant components of the channel need to be estimated. In contrast, with a large training energy budget, the energy is almost equally distributed among all eigen-directions. Simulation results show that the proposed channel estimation scheme can efficiently estimate correlated massive MIMO channels within a few training time slots.

\end{abstract}
\section{Introduction}
Massive multiple-input-multiple-output (MIMO) is proven to be the most promising technology for a wide range of applications, such as the Internet of things (IoT) \cite{6971234, 8752284, 7395392}, in the next generation of wireless networks. Conventionally, in wireless systems, signal processing is performed at baseband. This requires that, at the receiver for example, the analog signal be filtered, down-converted and properly sampled. These tasks are carried out by hardware modules known as radio-frequency (RF) chains. Therefore,  conventional baseband signal processing algorithms require a dedicated RF chain for each transmit/receive antenna. Given a large number of antennas in massive MIMO systems, such a large number of RF chains may not be available due to their high cost and power consumption. To overcome this challenge, a hybrid analog-digital beamforming structure has been proposed for massive MIMO systems\cite{1519678}. In such systems, the signal is first  processed at baseband, and is then up-converted to the RF domain using a reduced number of RF chains. The signal is then passed through a network of phase shifters connected to all transmit antennas before transmission. At the receiver, the received signals first go through a network of phase shifters, then, are down-converted using a reduced number of RF chains, and finally are fed to baseband processing blocks. Therefore, the conventional beamforming techniques, performed at the baseband \cite{Telatar, 4599181, Luo_TSP}, may not be applicable. Recently, several techniques  have been proposed to design hybrid analog-digital beamformers (HB) \cite{7160780, 7448873, 6717211, Foad_JSTSP_2016, 7913599, DBLP:journals/corr/abs-1712-03485}.

 Achieving the performance  of the aforementioned hybrid design approaches requires that accurate channel state information (CSI) be available.
 However, in such systems, channel estimation is challenging because the high dimensional channel is observable only through the limited number of RF chains \cite{7400949}. This implies that, the conventional fully-digital channel estimation methods cannot be directly applied to obtain the CSI for massive MIMO systems with a hybrid structure \cite{6777295, 1597555, 6940305, Javad_Tcom_2019, Javad_globcom_2018}. \textit{In this paper, we aim to address the channel estimation problem for such a system by exploiting  the correlation among the entries of channel matrix.} Other than the knowledge of channel correlation matrices, we make no further assumption on the structure of the channel. Before we elaborate on the contributions of this paper, we briefly review the related work.

\textbf{Related Work:} In the context of millimeter-wave (mmWave) communication systems, the CSI acquisition challenge has been addressed by exploiting channel sparsity in mmWave frequencies \cite{6489376, AlkhateebHBest, 7370753, DBLP:journals/corr/AlkhateebLH15, 8093607, Foad_TCOM2019}. By exploiting the limited-scattering nature of mmWave channels, the authors in \cite{AlkhateebHBest,7370753} show that the MIMO channel can be represented using a parametric model which is sparse in the underlying parameters~\cite{AlkhateebHBest}. Therefore, for the sake of channel estimation, instead of estimating the entire channel matrix, only the angles of departure/arrival (AoD/AoA) of dominant paths and the corresponding path gains are  estimated. In  this case, the channel estimation is formulated as a sparse problem where the measurement matrices are expressed in terms of the hybrid precoders/combiners in the training phase. In \cite{AlkhateebHBest, 6489376, 6148295, 7400949 }, the measurement matrices are designed based on adaptive compressive sensing (CS). In particular, the authors of \cite{AlkhateebHBest} propose an adaptive algorithm to design the measurement matrices. The channel estimate is obtained iteratively by scanning through a set of training beamforming vectors at both transmitter and receiver sides. The complexity of the work in \cite{AlkhateebHBest} is dictated mainly by the level of channel sparsity; however, in rich scattering environments, the algorithm requires more resources with higher complexity to achieve the channel estimation with desired angular resolution.
In \cite{7306533}, coarse channel estimation is performed using the beam training approach, while a subsequent CS algorithm refines the resolution. As an alternative to adaptive CS algorithms, traditional random CS approaches exploiting pseudo-random weights are used in a phased array MIMO system with a hybrid structure \cite{6181796, DBLP:journals/corr/AlkhateebLH15}. The authors of \cite{7370753} apply random CS with simpler analog beamformers, where the network of switches is replaced with the phase-shifter network\footnote{With switches, the entries of analog beamforming matrix are either $0$ or $1$.}. In \cite{Evans_ICASSP18}, the CS-based channel estimation algorithm extended to a mmWave communication system equipped with one-bit analog-to-digital converters (ADCs) and HB.

The authors of \cite{8093607,7938435, 7955996, DBLP:journals/corr/BuzziD17a} propose non-CS-based techniques to estimate the channel in massive MIMO systems with the hybrid precoder/combiner structure. Assuming the parametric channel model, the authors in \cite{7938435} exploit two-dimensional beamspace multiple signal classification (MUSIC) to estimate the path directions followed by a least-squares estimate of the path gains. In \cite{8093607}, estimating signal parameters via rotational invariance techniques (ESPRIT) is employed to estimate the CSI. Note that applying the ESPRIT technique to massive MIMO systems with the hybrid structure is challenging, since this technique requires the shift-invariance property of the array response in the observation samples. To circumvent this issue, the authors of \cite{8093607} first design a training signal  for channel estimation in a way that the low-dimensional effective channel has this shift-invariance property. Then, exploiting the angular sparsity of the mmWave channel, the AoA and AoD estimates are obtained. The authors of \cite{7955996} and \cite{DBLP:journals/corr/BuzziD17a} exploit projection approximation subspace tracking with deflation (PASTd) algorithm to  track the subspace and estimate the right (left) singular vector at the transmitter (receiver).

In \cite{8227727, Eldar_9026804}, the authors consider a multi-user multi-cell communication system where each user is equipped with one antenna. Assuming a Kronecker channel model, with a priori known channel’s covariance matrices, the authors estimate the channel matrix by jointly designing the pilot sequences (to mitigate the effect of pilot contamination) and analog combiners that yield the minimum channel estimation error. It is shown that, regardless of the choice of the pilot sequences, the analog combiners can be designed using existing algorithms, such as the one proposed in \cite{DBLP:journals/corr/abs-1712-03485}. Then, the pilot sequence is designed by solving an optimization problem that yields the minimum sum mean squared estimate (MSE) of the channel estimate. Table \ref{table_lit} summarizes the existing channel estimation techniques in massive MIMO systems.
\begin{table*}[t]
\caption {Summary of related works} \label{table_lit}
\centering
\resizebox{\textwidth}{!}{
        \begin{tabular}{|c |c |c |c |c|}
            \hline
            \multicolumn{1}{|c|}{Frequency} & \multicolumn{1}{c}{Channel Model} & \multicolumn{1}{|c|}{Assumption} & \multicolumn{1}{c|}{Estimation Technique} & \multicolumn{1}{c|}{Reference}\\
            \hline
               &  &  &  CS-based &  \cite{7400949, 6489376, AlkhateebHBest, 7370753, DBLP:journals/corr/AlkhateebLH15, 6148295, 6181796, Evans_ICASSP18} \\
             \cline{4-5}
              mmWave &  Parameteric &  Sparsity  & \multicolumn{1}{m{3.5cm}|}{Beam training followed by CS-based resolution refining} &  \cite{7306533} \\
              \cline{4-5}
              &  &  &  ESPRIT &  \cite{8093607} \\
              \cline{4-5}
              &  &  &  MUSIC &  \cite{7938435} \\
              \cline{4-5}
              &  &  &  PASTd &  \cite{7955996, DBLP:journals/corr/BuzziD17a} \\
              \hline
              Microwave and mmWave & Kronecker  & \multicolumn{1}{m{3.5cm}|}{Known Channel Correlations} &  MMSE &  \cite{ 8227727, Eldar_9026804} \\
            \hline
        \end{tabular}}
\end{table*}

\textbf{Contributions and Methodology:} In this paper, we propose a training-based channel estimation technique for \emph{correlated} massive MIMO systems with the hybrid precoder/combiner structure. Using the {Kronecker model} to represent the channel statistics, we assume that the long-term transmitter and receiver correlation matrices are available at both transmitter and receiver. Since a correlated MIMO channel has fewer degrees of freedom than an uncorrelated channel, we expect that fewer parameters need to be estimated to effectively reconstruct the channel.

Due to the reduced number of RF chains at both transmitter and receiver, the channel training is performed in multiple time slots. To minimize the time used for training, we express the channel in its eigen-domain, where the channel is projected onto the eigenvectors of correlation matrices at transmitter and receiver. This channel representation is referred to as a \emph{virtual channel} \cite{1261339}. The channel matrix and its virtual representation are unitarily equivalent. Importantly, the entries of the virtual channel are uncorrelated -- a fact that we use to efficiently estimate the channels. To this end, we use the minimum mean squared error (MMSE) criterion and design the needed hybrid beamformers. Due to the constraint on analog beamformers, the optimization problem is non-convex. To efficiently solve this problem, we first solve for optimal \emph{fully-digital} beamformers. Then, we split the so-obtained {fully-digital} beamformers into digital and analog beamformers using a least-squares approach.

\label{contrib}The main contribution of this paper lies in the design of hybrid beamformers in order to estimate the channel matrix of a correlated massive MIMO with reduced RF chains. The hybrid beamforming design is carried out in two phases. In the first phase, we assume that there is no hybrid structure for the beamformers and we propose an MMSE-based technique to design the fully digital beamformers. To realize the hybrid structure, the beamformers are then split into the digital and analog parts using the already-existing solutions, such as those of \cite{Foad_JSTSP_2016, 7397861}.

The designed fully-digital precoder and combiner in each training time slot are, respectively, aligned to transmitter and receiver eigen-directions,  thereby allowing us to estimate a portion of the virtual channel along its eigen-directions. To minimize the MSE, more of training energy budget is allocated to stronger eigen-directions. With a low training energy budget, instead of estimating the channel in all eigen-directions, only the components along the stronger eigen-directions are estimated within a fewer training time slots. This is because the rest of the eigen-directions are not as important from an MMSE point of view, and therefore, they are not estimated. On the other hand, given a large training energy budget, the energy is almost equally allocated to all eigen-directions.

 Our system setup is different from the studies in \cite{1033689, 4133041, 4608751, 5340650, 4803746 , 1597555}. Here due to a reduced number of RF chains, the signals received in each time slot, belong to a space with a much smaller dimension than the number of transmitter/receiver antennas. Therefore, channel estimation has to be performed in multiple time slots. We seek to optimally design the hybrid beamformers in each time slot in order to estimate the channel given a training energy budget. The work in this paper differs from that \cite{AlkhateebHBest, 6489376, 6148295, 7400949, 7306533, DBLP:journals/corr/AlkhateebLH15, 8093607} in the sense that here we do not rely on any assumption of channel sparsity. Instead, our approach in this paper considers a general correlated MIMO channel model where such a correlation is exploited to design the beamformers for channel training efficiently.

The work in our paper is substantially different from  those in \cite{8227727, Eldar_9026804, 6484165}
(which are the most closely related work in the literature). Here, we consider a point-to-point MIMO system where the transmitter and receiver are both equipped with a reduced number of RF chains. Unlike the study in \cite{8227727, Eldar_9026804}, we estimate the channel in its eigen-domain, where the channel entries in this domain are uncorrelated. Leveraging this fact, we show that our channel estimation technique can be performed with fewer training time slots. Furthermore, to improve channel
estimation accuracy, we design the hybrid beamformers for each training time slot, as opposed to \cite{8227727, Eldar_9026804}, where the combiner is fixed throughout the channel training process. \label{Aug_paper}

The  data and channel models as well as the design criterion in our work  are  somewhat similar to those in \cite{6484165}. The main difference in the data models stems from the fact the study in \cite{6484165} focuses on designing training sequences for systems with one RF chain per antenna, while our goal is to design  hybrid beamformers for systems with reduced number of RF chains. As such, the result, the proofs, and the derivations of   \cite{6484165} is not applicable to our system setup.

{\textbf{Organization}}: This paper is organized as follows: We introduce the system model, including the received signal model and the channel model in Section~\ref{sec:Sys_model}. Section \ref{sec:Ch_est} introduces the channel estimation problem formulation. To estimate the channel, in Section  \ref{sec:FD_design}, we propose an MMSE-based fully-digital beamforming design. In Section  \ref{sec:FD_design}, we explain how to design the hybrid precoder and combiner by splitting the fully-digital beamformers into analog and digital components. Section~\ref{sec:Simulations} presents simulation results to illustrate the efficacy of the proposed approach. Finally, Section~\ref{sec:conclusion} concludes the paper.

{\textbf{Notation}}: We use bold upper and lower-case letters to denote matrices and vectors, respectively. $\mathbb{E}\{\cdot\}$ denotes statistical expectation. The transpose, hermitian, conjugate, and pseudo-inverse operations are represented as $(\cdot)^T$, $(\cdot)^H$, $(\cdot)^*$, and $(\cdot)^\dag$, respectively; $|x|$ is the absolute value of complex $x$; $\lceil x \rceil$ returns the nearest integer greater than or equal to $x$; ${\rm mod} (x,m)$ returns the remainder after division of $x$ by $m$. ${\rm Tr}(\bA)$ is used to denote the summation of diagonal entries of matrix $\bA$. ${\rm diag}(\bx)$ represents a diagonal matrix whose diagonal entries are the elements of the vector $\bx$, while ${\rm diag\left(\bA\right)}$ is a vector that captures the diagonal entries of matrix $\bA$. $\bI_N$ denotes an $N \times N$ identity matrix. ${\bf 1}_{N}$ denotes an $N \times 1$ all one vector. ${\rm blkdg\left[\bA_1,\bA_2,\cdots, \bA_N\right]}$ is a block diagonal matrix whose $n$th diagonal entry is given by matrix $\bA_n$, for $n=1,2, \cdots, N$. The $(i, j)$th element of a matrix $\mathbf{A}$ is denoted as $\bA(i,j)$ or $\bA_{ij}$, $\otimes$ is used for the Kronecker product, ${\rm vec}\{\cdot\}$ represents the matrix vectorization operation, and $\mathbb{R}_+$ denotes the set of non-negative real numbers.

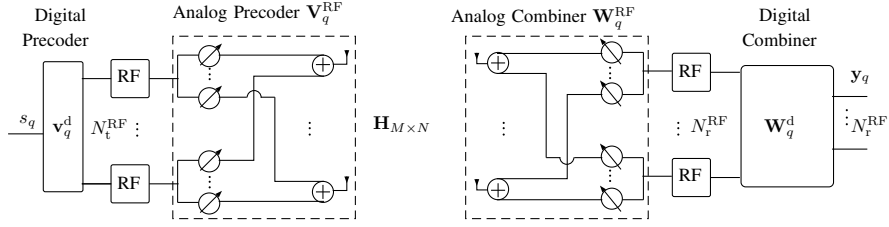
\begin{figure*}[t!]
    \centering

\psscalebox{0.7 .7} 
{
\begin{pspicture}(0,-2.0011537)(17.668888,2.0011537)
\psframe[linecolor=black, linewidth=0.02, dimen=outer, framearc=0.1](2.5185184,1.001555)(1.7453281,-1.4901876)
\psline[linecolor=black, linewidth=0.02](1.087066665,-0.37990654)(1.75510585,-0.37990654)

\psline[linecolor=black, linewidth=0.02](2.4870667,0.6702832)(3.0452063,0.6702832)
\psline[linecolor=black, linewidth=0.02](2.4870667,-1.3297168)(3.0452063,-1.3297168)
\psline[linecolor=black, linewidth=0.02](2.4870667,-1.3297168)(3.0452063,-1.3297168)
\psframe[linecolor=black, linewidth=0.02, dimen=outer, framearc=0.1](3.7760527,-0.9576238)(3.0318666,-1.7018099)
\rput[bl](3.1666667,-1.4232051){\rm RF}
\psline[linecolor=black, linewidth=0.02](3.7670667,0.6702832)(4.3252063,0.6702832)
\psline[linecolor=black, linewidth=0.02](4.306667,1.1167948)(4.306667,0.3167948)(4.306667,0.3167948)
\psline[linecolor=black, linewidth=0.02](4.306667,1.1167948)(4.7066665,1.1167948)
\psline[linecolor=black, linewidth=0.02](4.306667,0.3167948)(4.7066665,0.3167948)
\pscircle[linecolor=black, linewidth=0.02, dimen=outer](4.9066668,1.0967948){0.2}
\psline[linecolor=black, linewidth=0.02, arrowsize=0.05291667cm 2.0,arrowlength=1.4,arrowinset=0.0]{->}(4.7356324,0.9236914)(5.1494255,1.3374845)
\pscircle[linecolor=black, linewidth=0.02, dimen=outer](4.9066668,0.29679483){0.2}
\psline[linecolor=black, linewidth=0.02, arrowsize=0.05291667cm 2.0,arrowlength=1.4,arrowinset=0.0]{->}(4.7356324,0.123691365)(5.1494255,0.53748447)
\psline[linecolor=black, linewidth=0.02](5.0866666,1.1167948)(7.0866666,1.1167948)
\pscircle[linecolor=black, linewidth=0.02, dimen=outer](7.0666666,0.91679484){0.2}
\psline[linecolor=black, linewidth=0.02](7.0866666,0.7167948)(5.7533336,0.7167948)(5.7533336,-0.8832052)(5.0866666,-0.8832052)
\psline[linecolor=black, linewidth=0.02](7.0666,1.0701948)(7.0666,0.80352813)(7.0666,0.80352813)
\psline[linecolor=black, linewidth=0.02](6.9332666,0.93686146)(7.1999335,0.93686146)
\psline[linecolor=black, linewidth=0.02](3.7670667,-1.3297168)(4.3252063,-1.3297168)
\psline[linecolor=black, linewidth=0.02](4.306667,-0.8832052)(4.306667,-1.6832051)(4.306667,-1.6832051)
\psline[linecolor=black, linewidth=0.02](4.306667,-0.8832052)(4.7066665,-0.8832052)
\psline[linecolor=black, linewidth=0.02](4.306667,-1.6832051)(4.7066665,-1.6832051)
\pscircle[linecolor=black, linewidth=0.02, dimen=outer](4.9066668,-0.90320516){0.2}
\psline[linecolor=black, linewidth=0.02, arrowsize=0.05291667cm 2.0,arrowlength=1.4,arrowinset=0.0]{->}(4.7356324,-1.0763086)(5.1494255,-0.6625155)
\pscircle[linecolor=black, linewidth=0.02, dimen=outer](4.9066668,-1.7032052){0.2}
\psline[linecolor=black, linewidth=0.02, arrowsize=0.05291667cm 2.0,arrowlength=1.4,arrowinset=0.0]{->}(4.7356324,-1.8763087)(5.1494255,-1.4625155)
\rput{-1.1021739}(-0.0048984517,0.110853136){\psarc[linecolor=black, linewidth=0.02, dimen=outer](5.76,0.31006148){0.13333334}{0.0}{180.0}}
\psline[linecolor=black, linewidth=0.02](5.62,0.3167948)(5.0866666,0.3167948)
\psline[linecolor=black, linewidth=0.02](5.886667,0.3167948)(6.153333,0.3167948)
\psline[linecolor=black, linewidth=0.02, arrowsize=0.05291667cm 2.0,arrowlength=1.4,arrowinset=0.0]{-<}(7.253267,0.9435282)(7.519933,0.9435282)(7.519933,1.2101948)
\rput[bl](1.5932667,1.6911538){\rm Digital}
\rput[bl](1.3605057,1.290787785){\rm Precoder}
\rput[bl](1.932667,-0.47551286){$\bv_{q}^{\rm d}$}
\rput[bl](4.2153845,1.6911538){Analog Precoder $\bV_q^{\rm RF}$}
\psline[linecolor=black, linewidth=0.02](13.692307,-1.1854115)(13.134169,-1.1854115)
\psline[linecolor=black, linewidth=0.02](13.152708,-1.6319231)(13.152708,-0.8319231)(13.152708,-0.8319231)
\psline[linecolor=black, linewidth=0.02](13.152708,-1.6319231)(12.7527075,-1.6319231)
\psline[linecolor=black, linewidth=0.02](13.152708,-0.8319231)(12.7527075,-0.8319231)
\rput{-180.0}(25.105415,-3.2238462){\pscircle[linecolor=black, linewidth=0.02, dimen=outer](12.552708,-1.6119231){0.2}}
\rput{-180.0}(25.105415,-1.6238463){\pscircle[linecolor=black, linewidth=0.02, dimen=outer](12.552708,-0.81192315){0.2}}
\psline[linecolor=black, linewidth=0.02](12.372707,-1.6319231)(10.372707,-1.6319231)
\rput{-180.0}(20.785416,-2.8638463){\pscircle[linecolor=black, linewidth=0.02, dimen=outer](10.392708,-1.4319232){0.2}}
\psline[linecolor=black, linewidth=0.02](10.372707,-1.2319231)(11.706041,-1.2319231)(11.706041,0.36807686)(12.372707,0.36807686)
\psline[linecolor=black, linewidth=0.02](10.392775,-1.5853231)(10.392775,-1.3186564)(10.392775,-1.3186564)
\psline[linecolor=black, linewidth=0.02](10.526108,-1.4519898)(10.259441,-1.4519898)
\psline[linecolor=black, linewidth=0.02](13.692307,0.8145885)(13.134169,0.8145885)
\psline[linecolor=black, linewidth=0.02](13.152708,0.36807686)(13.152708,1.1680769)(13.152708,1.1680769)
\psline[linecolor=black, linewidth=0.02](13.152708,0.36807686)(12.7527075,0.36807686)
\psline[linecolor=black, linewidth=0.02](13.152708,1.1680769)(12.7527075,1.1680769)
\rput{-180.0}(25.105415,0.77615374){\pscircle[linecolor=black, linewidth=0.02, dimen=outer](12.552708,0.38807687){0.2}}
\rput{-180.0}(25.105415,2.3361537){\pscircle[linecolor=black, linewidth=0.02, dimen=outer](12.552708,1.1680769){0.2}}
\rput{-180.0}(20.785416,1.9361538){\pscircle[linecolor=black, linewidth=0.02, dimen=outer](10.392708,0.9680769){0.2}}
\psline[linecolor=black, linewidth=0.02](10.392708,0.83467686)(10.392708,1.1013435)(10.392708,1.1013435)
\psline[linecolor=black, linewidth=0.02](10.526041,0.9680102)(10.259375,0.9680102)
\rput{-181.10217}(23.38071,-1.8752688){\psarc[linecolor=black, linewidth=0.02, dimen=outer](11.699374,-0.82518977){0.13333334}{0.0}{180.0}}
\psline[linecolor=black, linewidth=0.02](11.839375,-0.8319231)(12.372707,-0.8319231)
\psline[linecolor=black, linewidth=0.02](11.572708,-0.8319231)(11.306041,-0.8319231)
\psline[linecolor=black, linewidth=0.02](11.3061075,-0.8319231)(11.3061075,0.7680769)(10.372774,0.7680769)(10.372774,0.7680769)
\psline[linecolor=black, linewidth=0.02](10.372707,1.1480769)(12.372707,1.1480769)(12.372707,1.1480769)
\rput{-180.0}(23.021313,-0.5151283){\psframe[linecolor=black, linewidth=0.02, linestyle=dashed, dash=0.17638889cm 0.10583334cm, dimen=outer](13.254246,1.4860256)(9.767067,-2.001154)}
\psline[linecolor=black, linewidth=0.02, arrowsize=0.05291667cm 2.0,arrowlength=1.4,arrowinset=0.0]{-<}(10.193795,-1.4457334)(9.988667,-1.4457334)(9.988667,-1.2406052)
\psline[linecolor=black, linewidth=0.02, arrowsize=0.05291667cm 2.0,arrowlength=1.4,arrowinset=0.0]{-<}(10.193795,0.9542666)(9.988667,0.9542666)(9.988667,1.1593949)
\psline[linecolor=black, linewidth=0.02, arrowsize=0.05291667cm 2.0,arrowlength=1.4,arrowinset=0.0]{->}(12.726666,0.94448715)(12.326667,1.4367948)
\psline[linecolor=black, linewidth=0.02, arrowsize=0.05291667cm 2.0,arrowlength=1.4,arrowinset=0.0]{->}(12.726666,0.14448713)(12.326667,0.6367948)
\psline[linecolor=black, linewidth=0.02, arrowsize=0.05291667cm 2.0,arrowlength=1.4,arrowinset=0.0]{->}(12.726666,-1.0555129)(12.326667,-0.5632052)
\psline[linecolor=black, linewidth=0.02, arrowsize=0.05291667cm 2.0,arrowlength=1.4,arrowinset=0.0]{->}(12.726666,-1.8555129)(12.326667,-1.3632052)
\psframe[linecolor=black, linewidth=0.02, dimen=outer, framearc=0.1](14.436052,1.1623948)(13.691867,0.41820878)
\rput[bl](13.826667,0.6967948){\rm RF}
\psframe[linecolor=black, linewidth=0.02, dimen=outer, framearc=0.1](14.436052,-0.8376052)(13.691867,-1.5817913)
\rput[bl](13.826667,-1.2832052){\rm RF}
\psline[linecolor=black, linewidth=0.02](14.427067,0.7903018)(14.985207,0.7903018)
\psline[linecolor=black, linewidth=0.02](14.427067,-1.2096982)(14.985207,-1.2096982)
\psframe[linecolor=black, linewidth=0.02, dimen=outer, framearc=0.1](16.77364,0.9289949)(14.9970665,-1.3976719)
\psline[linecolor=black, linewidth=0.02](16.735922,0.330261)(17.405235,0.330261)
\psline[linecolor=black, linewidth=0.02](16.735922,-0.669739)(17.405235,-0.669739)
\rput[bl](15.352381,1.6911538){\rm Digital}
\rput[bl](15.032464,1.290787785){\rm Combiner}
\rput[bl](15.448236,-0.47551286){$\bW_q^{\rm d}$}

\rput[bl](9.5,1.6056837){\rm{Analog Combiner} $\bW_q^{\rm RF}$}
\rput[bl](8.022222,-0.3720052){$\bH_{M\times N}$}

\psdots[linecolor=black, dotsize=0.04](3.5378666,-0.18980518)
\psdots[linecolor=black, dotsize=0.04](3.5378666,-0.2919385)
\psdots[linecolor=black, dotsize=0.04](3.5378666,-0.41433853)
\psdots[linecolor=black, dotsize=0.04](6.817867,-0.16980518)
\psdots[linecolor=black, dotsize=0.04](17.017866,0.06980518)
\psdots[linecolor=black, dotsize=0.04](17.017866,-0.0719385)
\psdots[linecolor=black, dotsize=0.04](17.017866,-0.19433852)
\psdots[linecolor=black, dotsize=0.04](13.817866,-0.16980518)
\psdots[linecolor=black, dotsize=0.04](13.817866,-0.2719385)
\psdots[linecolor=black, dotsize=0.04](13.817866,-0.39433852)
\psdots[linecolor=black, dotsize=0.04](10.497867,-0.16980518)
\psdots[linecolor=black, dotsize=0.04](10.497867,-0.2719385)
\psdots[linecolor=black, dotsize=0.04](10.497867,-0.39433852)

\psdots[linecolor=black, dotsize=0.04](12.517867,0.67980518)
\psdots[linecolor=black, dotsize=0.04](12.517867,0.7719385)
\psdots[linecolor=black, dotsize=0.04](12.517867,0.87433852)

\psdots[linecolor=black, dotsize=0.04](12.517867,-1.07980518)
\psdots[linecolor=black, dotsize=0.04](12.517867,-1.1719385)
\psdots[linecolor=black, dotsize=0.04](12.517867,-1.27433852)

\psdots[linecolor=black, dotsize=0.04](4.917867,0.65980518)
\psdots[linecolor=black, dotsize=0.04](4.917867,0.7519385)
\psdots[linecolor=black, dotsize=0.04](4.917867,0.85433852)

\psdots[linecolor=black, dotsize=0.04](4.917867,-1.19980518)
\psdots[linecolor=black, dotsize=0.04](4.917867,-1.2919385)
\psdots[linecolor=black, dotsize=0.04](4.917867,-1.39433852)

\rput[bl](2.6488667,-0.4720941){$N_{\rm t}^{\rm RF}$}
\rput[bl](14.048866,-0.4220941){$N_{\rm r}^{\rm RF}$}

\rput[bl](1.3,-0.2720052){$s_q$}
\rput[bl](17.088889,-0.42200518){$N_{\rm r}^{\rm RF}$}
\rput[bl](17.088889,0.5720052){$\by_q $}
\pscircle[linecolor=black, linewidth=0.02, dimen=outer](7.0666666,-1.4832052){0.2}
\psline[linecolor=black, linewidth=0.02](7.0666666,-1.3498052)(7.0666666,-1.6164719)(7.0666666,-1.6164719)
\psline[linecolor=black, linewidth=0.02](6.9333334,-1.4831386)(7.2,-1.4831386)
\psline[linecolor=black, linewidth=0.02](6.1532664,0.3167948)(6.1532664,-1.2832052)(7.0866,-1.2832052)(7.0866,-1.2832052)
\psline[linecolor=black, linewidth=0.02](7.0866666,-1.6632051)(5.0866666,-1.6632051)(5.0866666,-1.6632051)
\psline[linecolor=black, linewidth=0.02, arrowsize=0.05291667cm 2.0,arrowlength=1.4,arrowinset=0.0]{-<}(7.253267,-1.4564718)(7.519933,-1.4564718)(7.519933,-1.1898052)
\psframe[linecolor=black, linewidth=0.02, linestyle=dashed, dash=0.17638889cm 0.10583334cm, dimen=outer](7.6923075,1.4860256)(4.205128,-2.001154)
\psdots[linecolor=black, dotsize=0.04](6.817867,-0.2719385)
\psdots[linecolor=black, dotsize=0.04](6.817867,-0.39433852)
\psframe[linecolor=black, linewidth=0.02, dimen=outer, framearc=0.1](3.7760527,1.0423762)(3.0318666,0.29819018)
\rput[bl](3.1666667,0.59679484){\rm RF}
\end{pspicture}
}
\caption{System model for channel training during the $q$th time slot, where $q=1,2, \ldots, Q$. }
 \label{system_model}
\end{figure*}

\section{System Model}\label{sec:Sys_model}
\textbf{Channel Model:} We consider a narrow-band frequency-flat MIMO channel model, where the signals at the transmitter and receiver antenna arrays are correlated. Such a correlated channel is expressed by the following canonical statistical model, known as the \emph{Kronecker model} \cite{985982, 1261339}:
\begin{align} \label{kron_ch}
\bH=\bS_{\rm r}^{1/2} \bH_w \bS_{\rm t}^{1/2},
\end{align}
where the entries of $\bH_w$ are random variables drawn identically and independently from a complex Gaussian distribution with zero mean and unit variance. The matrices $\bS_{\rm t}$ and $\bS_{\rm r}$ are, respectively, the transmitter and receiver correlation matrices. This model has been considered in \cite{985982, 887129, 892194, 1261339, 7727938, 8227727} and has also been verified through the measurements \cite{965098, 1018011}.
To further characterize the channel model in \eqref{kron_ch}, let us denote the eigenvalue decomposition of $\bS_{\rm t}$ and $\bS_{\rm r}$, respectively, as $\bU_{\rm t}\bLambda_{\rm t}\bU_{\rm t}^H$ and $\bU_{\rm r}\bLambda_{\rm r}\bU_{\rm r}^H$. The matrices $\bU_{\rm t}$ and $\bU_{\rm r}$ are, respectively, the  eigenvector matrices of $\bS_{\rm t}$ and $\bS_{\rm r}$, while $\bLambda_{\rm t}$ and $\bLambda_{\rm r}$ are diagonal matrices capturing the eigenvalues of  $\bS_{\rm t}$ and $\bS_{\rm r}$, respectively. Using \eqref{kron_ch}, we can write
 \begin{align} \label{kron_ch_decompose}
\bH&=\bU_{\rm r}\bLambda_{\rm r}^{1/2}\bU_{\rm r}^H \bH_w \bU_{\rm t}
\bLambda_{\rm t}^{1/2}\bU_{\rm t}^H= \bU_{\rm r}\bH_v\bU_{\rm t}^H,
\end{align}
 where $\bH_v \triangleq \bLambda_{\rm r}^{1/2}\bar{\bH}_w \bLambda_{\rm t}^{1/2}$ and $\bar{\bH}_w \triangleq\bU_{\rm r}^H \bH_w \bU_{\rm t}$. Note that, using the fact that the entries of $\bH_w$ are independently and identically distributed (i.i.d.)\, and that  $\bU_{\rm t}$ and $\bU_{\rm r}$ are unitary matrices, we can show that the entries of $\bar{\bH}_w $ remain i.i.d.\ \cite{1261339}. Also, with pre- and post- multiplication of $\bLambda_{\rm r}^{1/2}$ and $\bLambda_{\rm t}^{1/2}$, the entries of $\bH_v $ are uncorrelated with diagonal covariance matrix $\bR_v$ defined as \cite{1261339}:
\begin{align} \label{Rv}
\bR_v\triangleq \mathbb{E}\{\bh_v\bh_v^H\}= \bLambda_{\rm t}
 \otimes \bLambda_{\rm r}= \bLambda,
\end{align}
where $\bh_v = {\rm vec}\{\bH_v\}$ and $\bLambda$ is a diagonal matrix.
\label{rankdefR}Note that we herein assume that $\bR_v$ is full rank.
The Kronecker structure of $\bLambda$ in \eqref{Rv} plays an important role in our proposed channel estimation technique.

\textbf{Signal Model:}
We consider a narrow-band point-to-point MIMO system with $N$ transmit antennas and $M$ receive antennas. We assume that the transmitter and the receiver are, respectively, equipped with $N_{\rm t}^{\rm RF}$ and $N_{\rm r}^{\rm RF}$ RF chains. We also assume that the channel training is performed in a maximum\footnote{\label{Qconst}Indeed, $Q$ is the maximum number of time slots available for channel training. As we will show in this paper, not all $Q$ time slots might be needed.}
 of  $Q$ time slots (we  elaborate on the minimum number of training time slots  in Section \ref{sec:Ch_est}). As shown in Fig.~\ref{system_model}, at the $q$th time slot, $q=1,2,\ldots, Q$, the training symbol, $s_q$, is first linearly  precoded by multiplying with a digital precoder $\bv_q^{\rm d} \in \mathbb{C}^{N_{\rm t}^{\rm RF} \times 1}$. The baseband signal, after the digital precoder, is written as $\bx_q =\bv_q^{\rm d}s_q$.

The signal is then up-converted to the carrier frequency by  $N_{\rm t}^{\rm RF}$ RF chains. To construct the final transmit signal, the $N\times N_{\rm t}^{\rm RF}$ RF precoder, $\bV_q^{\rm RF}$, is applied to the signal $\bx_q$ at the carrier frequency. 
In the literature, different architectures for $\bV_q^{\rm RF}$ have been proposed. In this paper, we use the fully-connected phase shifter structure, in which $\bV_q^{\rm RF}(i,j)$ is a complex number with $|\bV_q^{\rm RF}(i,j)|=1$. Under such hybrid beamforming, the transmitted signal can be represented as:
\begin{align} \label{transmit signal}
\bar{\bx}_q&= \bV_q^{\rm RF}{\bx_q}= \bV_q^{\rm RF}\bv_q^{\rm d}s_q.
\end{align}
This signal propagates through the $M \times N$ channel matrix $\bH$. We wish to estimate $\bH$. At the receiver, the signal is first processed by the analog combiner, $\bW_q^{\rm RF}\in \mathbb{C}^{M\times N_{\rm r}^{\rm RF}}$, implemented using analog adders and analog phase shifters, implying that $|\bW_q^{\rm RF}(i,j)|=1$,  $ \forall \, i, j$.  By passing through $N_{\rm r}^{\rm RF}$ RF chains, the signal is then  down-converted to baseband. Hence, the received symbol vector $\br_q$ after the down-conversion can be written as:
\begin{align}
\br_q \! \triangleq \! {\bW_q^{\rm RF}}^{H}\bH\bV_q^{\rm RF} \bx_q + \bW_q^{\rm RF}\bn_q
 \!= \! {\bW_q^{\rm RF}}^{H}\bH\bV_q^{\rm RF}\bv_q^{\rm d}s_q+ \bar{\bn}_q. \nonumber
\end{align}
Here, $\bar{\bn}_q\triangleq \bW_q^{\rm RF}\bn_q$, where $\bn_q\sim \mathcal{CN}(\b0, \sigma_{\rm n}^2\bI_M)$ is the receiver's noise vector during the $q$th time slot. Further baseband processing is carried out by applying an $N_{\rm r}^{\rm RF}\times N_{\rm r}^{\rm RF} $ digital combiner $\bW^{\rm d}_q$. Hence, the final processed received signal can be written as:
\begin{align} \label{received signal_final_train}
\by_q& \triangleq {\bW_q^{\rm d}}^H \br_q = \bW_q^H \bH \bv_q s_q+ \bW_q{\bn}_q,
\end{align}
where $\bW_q\triangleq \bW_q^{\rm RF}\bW_q^{\rm d}$ and $\bv_q\triangleq \bV_q^{\rm RF}\bv_q^{\rm d}$ are the overall hybrid precoder and combiner, respectively. Note that, in order to estimate $\bH$, we must design the hybrid beamformers, i.e., $\bW_q^{\rm RF}$, $\bW_q^{\rm d}$, $\bV_q^{\rm RF}$, and $\bv_q^{\rm d}$. Throughout the paper, for the ease of exposition, we set $N^{\rm RF}=N_{\rm t}^{\rm RF}=N_{\rm r}^{\rm RF}$ and { $N^{\rm RF}\geq 2$.}

To efficiently estimate the channel matrix $\bH$, we note that the beamformers across different training time slots may not necessarily be the same. Let $\left\{ \bv_q,  \bW_q \right\}_{q=1}^Q$ denote the set of beamformers used at the $q$th time slot.  Due to the hybrid beamforming structure, $\bv_q$ and $\bW_q$ are constructed as  cascades of analog and digital components, i.e.,
\begin{align}
\bv_q&= \bV_q^{\rm RF}\bv_q^{\rm d}, \qquad  \bW_q= \bW_q^{\rm RF}\bW_q^{\rm d}.\label{vq_Wq_split}
\end{align}
 Without loss of generality, we consider $s_q=1$ as the training symbol. From \eqref{received signal_final_train}, the received signal at the $q$th time slot is given by:
\begin{align} \label{received_signal_multi__qth}
\by_q&=\bW_q^H\bH \bv_q s_q + \bW_q^H \bn_q =\bW_q^H\bU_{\rm r}\bH_v\bU_{\rm t}^H \bv_q + \bW_q^H \bn_q \nonumber\\
&= (\bv_q^T \otimes\bW_q^H){\rm vec}\{\bU_{\rm r}\bH_v\bU_{\rm t}^H\}  + \bW_q^H \bn_q \nonumber\\
&= (\bv_q^T \otimes\bW_q^H)\bPsi\bh_v  + \bW_q^H \bn_q,
\end{align}
where $\bPsi\triangleq \left(\left(\bU_{\rm t}^H\right)^T \otimes\bU_{\rm r}\right)= \left[\bPsi_1\;\; \bPsi_2\;\; \cdots\;\; \bPsi_{Q}\right]$.
In order to characterize the $MN \times N^{\rm RF}$ matrix $\bPsi_q$, using $\nu \triangleq M/N^{\rm RF}$, and
\begin{align}
n_q=   \lceil q/\nu \rceil \label{nq_multi}, \,\,\,
m_q= \left\{
        \begin{array}{ll}
         \nu, & \hbox{{ if }} \, {\rm mod}\, (q , \nu)=0;\\
          {\rm mod}\, (q , \nu), & \hbox{ otherwise,}
        \end{array}
      \right.
\end{align}
we define
$\tilde{\bu}_{\rm t}^{(q)}\triangleq {\bu}_{\rm t}^{(n_q)}$ and
$\tilde{\bU}_{\rm r}^{(q)}\triangleq {\bU}_{\rm r}^{(m_q)}$,
where ${\bu}_{\rm t}^{(i)}$ is the $i$th column of $\bU_{\rm t}^* $, and ${\bU}_{r}^{(i)}$ is an $M \times N^{\rm RF}$ sub-matrix of ${\bU}_{\rm r}$ with the $(m,n)$th element given by ${\bU}_{r}^{(i)}(m,n) \triangleq {\bU}_{r}(m, (i-1)N^{\rm RF} +n) $. Finally, $\bPsi_q$ is written as:
$\bPsi_q\triangleq \tilde{\bu}_{\rm t}^{(q)} \otimes \tilde{\bU}_{\rm r}^{(q)}$.

After $Q$ training time slots, we stack all $L\triangleq QN^{\rm RF} $ measurements in the vector $\by $ as

\begin{align} \label{y_stack_multi}
\by
\triangleq\left[
\begin{array}{c}
\by_1\\
\by_2 \\
\vdots  \\
\by_{Q}
\end{array}
\right] &= \underbrace{\left[
\begin{array}{c}
\bv_1^T \otimes\bW_1^H\\
\bv_2^T \otimes\bW_2^H \\
\vdots  \\
\bv_{Q}^T \otimes\bW_{Q}^H
\end{array}
\right]}_{\triangleq \bPhi(\mathcal{V}, \mathcal{W})} \bPsi\bh_v  + \left[
\begin{array}{c}
\bW_1^H \bn_1\\
\bW_2^H \bn_2 \\
\vdots  \\
\bW_{Q}^H \bn_Q
\end{array}
\right]\nonumber\\
&=\bF(\mathcal{V}, \mathcal{W}) \bh_v +\bar{\bn}.
\end{align}
Here, we define $\bF(\mathcal{V}, \mathcal{W})\triangleq \bPhi(\mathcal{V}, \mathcal{W})\bPsi$, $\mathcal{V}$ and $\mathcal{W}$ are tuples defined, respectively, as
$\mathcal{V}\triangleq \left(\bv_1, \bv_2, \ldots, \bv_Q\right)$ and $\mathcal{W}\triangleq \left(\bW_1, \bW_2, \ldots, \bW_{Q}\right)$,
and we use the following definitions: $\bar{\bn} \triangleq \bar{\bW}\bn$, $\bar{\bW} \triangleq {\rm blkdg}\left[\bW_1^H, \ldots, \bW_{Q}^H\right]$, and $\bn\triangleq [\bn_1^T, \ldots, \bn_{Q}^T]^T$. Next, we describe the channel estimation problem formulation.

\section{Channel Estimation Problem Formulation} \label{sec:Ch_est}
In this section, we formulate the channel estimation problem for the system described in Section \ref{sec:Sys_model}. Assuming the availability of long-term statistics of \changeblack{transmitter and receiver }correlation matrices at both transmitter and receiver (i.e., $\bS_{\rm t}$ and $\bS_{\rm r}$), we aim to optimally design the hybrid precoders and combiners, $\bV_q^{\rm RF}$, $\bv_q^{\rm d}$, $\bW_q^{\rm d}$, and $\bW_q^{\rm RF}$ for $q=1,2, \cdots, Q$, in order to estimate the channel matrix using the MMSE criterion. In general, to estimate $\bH$, we must estimate all $MN$ unknown entries of the channel matrix. Since, we only have $N^{\rm RF}$ RF chains at the receiver side,  only $N^{\rm RF}$ observations are available in each time slot. \label{Q_dep}Therefore, to estimate the entire channel matrix, in general, we require at least $MN/N^{\rm RF}$ training time slots. Note that, from \eqref{kron_ch_decompose}, we obtain
\begin{align} \label{kron_ch_decompose1}
\bH_v= \bU_{\rm r}^H\bH\bU_{\rm t},
\end{align}
which implies that $\bH$ and $\bH_v$ are unitarily equivalent. \label{itcanbeshow}It can be shown that for highly correlated channel matrix $\bH$, the virtual channel matrix (also referred to as the eigen-domain channel\cite{1261339}) $\bH_v$ tends to be a \emph{sparse-like}\footnote{\label{semisparse} Based on \eqref{kron_ch_decompose1}, $\bH_v$ is the channel representation in the eigen-domain. Due to the non-zero entries of correlation matrices $\bS_{\rm t}$ and $\bS_{\rm r}$, the channel components along different eigenvectors have different levels of significance measured by the eigenvalues of these correlation matrices (i.e., the significance of the entries of $\bh_v={\rm vec}(\bH_v)$ is given by the diagonal entries of $\bLambda$). For uncorrelated channels, where we have $\bS_t = \bS_r= \bI$, and therefore, $\bLambda =\bI$,  the entries of $\bh_v$ have the same level of significance. However as the channel entries become more correlated, some of the diagonal components of $\bLambda$ become more significant compared to others, meaning that some of the entries of  $\bh_v$  are  more significant than the others. That is, we state that $\bH_v$ tends to be a sparse-like matrix. Further details on this issue is given in \cite{1261339}.} matrix, meaning that most of its elements have very small amplitudes. Our goal here is to exploit such semi-sparsity to effectively estimate $\bH_v$ within a fewer time slots than $MN/N^{\rm RF}$.

We now aim to formulate the channel estimation problem. The goal is to design the beamformers in order to estimate $\bh_v$ based on the MMSE criterion. The linear MMSE estimate of $\bh_v$ is given by:
\begin{align} \label{MMSE_estimate}
\hat{\bh}_v= \bA_{\rm o}(\mathcal{V}, \mathcal{W})\by,
\end{align}
where $\bA_{\rm o}(\mathcal{V}, \mathcal{W})$ is obtained as:
\begin{align} \label{MMSE_estimate1}
&\bA_{\rm o}(\mathcal{V}, \mathcal{W})= \argmin_{\bA}\mathbb{E}\left\{\parallel{\bh}_v - \bA\by\parallel^2\right\}\\
&= \bR_v\bF^H(\mathcal{V}, \mathcal{W})\left(\bF(\mathcal{V}, \mathcal{W}) \bR_v\bF^H(\mathcal{V}, \mathcal{W}) + \bR_n(\mathcal{W})\right)^{-1}, \nonumber
\end{align}
where $\bR_n (\mathcal{W}) \triangleq \mathbb{E}\left\{\bar{\bn}\bar{\bn}^H\right\}= \bar{\bW} \mathbb{E}\left\{{\bn}{\bn}^H\right\} \bar{\bW}^H = \sigma_{\rm n}^2 \bar{\bW}\bar{\bW}^H ={ \rm blkdg} \left[\sigma_{\rm n}^2\bW_q^H\bW_q \right]_{q=1}^Q$. The covariance matrix of the estimation error is given by:
\begin{align} \label{MMSE_estimate_cov_mtx}
&\bC_{\rm MMSE}(\mathcal{V}, \mathcal{W})= \mathbb{E}\left\{\left({\bh}_v - \hat{\bh}_v\right)\left({\bh}_v - \hat{\bh}_v\right)^H\right\}\nonumber\\
&= \resizebox{.95\hsize}{!}{$\bR_v - \bR_v\bF^H(\mathcal{V}, \mathcal{W})\left(\bF(\mathcal{V}, \mathcal{W})\bR_v\bF^H(\mathcal{V}, \mathcal{W})+\bR_n(\mathcal{W})\right)^{-1}$}\nonumber\\
&\qquad \times \bF(\mathcal{V}, \mathcal{W})\bR_v \nonumber\\
&=\left(\bR_v^{-1} + \bF^H(\mathcal{V}, \mathcal{W})\bR_n^{-1}(\mathcal{W})\bF(\mathcal{V}, \mathcal{W}) \right)^{-1}.
\end{align}
Accordingly, the MMSE estimation error is computed as:
\begin{align}  \label{MMSE_est_error}
&J_{\rm MMSE}(\mathcal{V}, \mathcal{W})= {\rm Tr}\left(\bC_{\rm MMSE}(\mathcal{V}, \mathcal{W}) \right)\nonumber\\
&= {\rm Tr}\left(\left(\bR_v^{-1} + \bF^H(\mathcal{V}, \mathcal{W})\bR_n^{-1}(\mathcal{W})\bF(\mathcal{V}, \mathcal{W}) \right)^{-1} \right).
\end{align}
Using \eqref{Rv}, and defining
\begin{align} \label{gamma2_def}
\bGamma^2(\mathcal{V},  \mathcal{W}) \triangleq \bF^H(\mathcal{V},  \mathcal{W})\bR_n^{-1}(\mathcal{W})\bF(\mathcal{V},  \mathcal{W}),
\end{align}
we obtain $J_{\rm MMSE}(\mathcal{V}, \mathcal{W})= {\rm Tr}\left(\left(\bLambda^{-1}  + \bGamma^2(\mathcal{V},  \mathcal{W})  \right)^{-1}\right)$. Under a total training energy budget $E_T$, the MMSE-based beamforming design amounts to solving the following optimization problem:
\begin{subequations}\label{min_MMSE_HB}
\begin{align}
\min_{\mathcal{V}, \mathcal{W}}\,\,\,\, &J_{\rm MMSE}(\mathcal{V}, \mathcal{W})\,\\ {\rm s.t.}\,\,\,\,  &\sum_{q=1}^{Q}\|\bv_q\|^2\leq E_T,\label{min_MMSE_cons1}\\
&\bv_q= \bV_q^{\rm RF}\bv_q^{\rm d},\label{vq_split_const}\\
&\bW_q= \bW_q^{\rm RF}\bW_q^{\rm d},\label{Wq_split_const}\\
& \bV_q^{\rm RF}\in \mathcal{A}_v \,\,\,\,\,{\rm and }\,\,\,\, \bW_q^{\rm RF}\in \mathcal{A}_w,\label{norm_cons}
\end{align}
\end{subequations}
where $P_T= E_T/Q$ is  the average transmit power during training, and
sets $\mathcal{A}_w$ and $\mathcal{A}_v$ are defined as $\mathcal{A}_w\triangleq\Big\{\bW \in \mathcal{C}^{M\times N^{\rm RF}} \Bigm| |\bW(i,j)|^2=1\Big\}$ and $\mathcal{A}_v\triangleq\Big\{\bV \in \mathcal{C}^{N\times N^{\rm RF}}\Bigm| |\bV(i,j)|^2=1\Big\}$.
Note that the optimization problem in \eqref{min_MMSE_HB} is non-convex due to the constraints in \eqref{norm_cons}, and may not be amenable to a computationally  efficient   solution. To tackle this problem, we first relax the  constraints in \eqref{norm_cons} (thereby turning  \eqref{vq_split_const} and \eqref{Wq_split_const} trivial), and solve for the optimal $\mathcal{V}$, and $\mathcal{W}$, denoted as $\mathcal{V}^{\rm o}$, and $\mathcal{W}^{\rm o}$, respectively. This solution is referred to as unconstrained \emph{fully-digital} (FD) solution\footnote{Note that the unconstrained {fully-digital} case is different from the widely known FD beamformers, where a dedicated RF chain is assigned to each antenna. Here, we refer to $\mathcal{V},\mathcal{\tilde{V}}, \mathcal{W}$ the unconstrained FD beamformers where the structures in \eqref{vq_split_const} and \eqref{Wq_split_const} are ignored. }. \changeblack{Then, using the so-obtained  $\mathcal{V}^{\rm o}$ and $ \mathcal{W}^{\rm o}$} along with \eqref{vq_split_const}, \eqref{Wq_split_const} and the  constraints in \eqref{norm_cons}, we seek to find the hybrid beamfomers $\bV_q^{\rm RF},\bv_q^{\rm d},\bW_q^{\rm RF}$, and $\bW_q^{\rm d}$ such that \eqref{vq_split_const} and \eqref{Wq_split_const} are satisfied. We study the unconstrained {fully-digital} solution for Case 1) $Q=MN/N^{\rm RF}$, i.e., the number of time slots required to estimate the entire channel matrix; and Case 2) $Q<MN/N^{\rm RF}$ \footnote{Note that the case $Q>MN/N^{\rm RF}$ is not considered since, under the energy constraint, increasing the number of training time slots beyond $MN/N^{\rm RF}$ does not improve the channel estimation from an MMSE point of view \cite{5464944, 1597555}. }.

\section{Fully-digital Beamforming Design} \label{sec:FD_design}
In this section, we consider a fully-digital beamforming design scenario where there is no constraint on $\bv_q$ and $\bW_q$, except the total energy constraint. We seek to find optimal $\bv_q$ and $\bW_q$ based on the MMSE criterion. Before we specify the fully-digital problem formulation, let us simplify $\bGamma^2(\mathcal{V}
,\mathcal{W})$. Let $\bW_q= \bK_q\bD_q\bZ_q^H$ denote the singular value decomposition of $\bW_q$, where $\bK_q \in \mathbb{C}^{M\times N^{\rm RF}}$ and $\bZ_q\in \mathbb{C}^{N^{\rm RF}\times N^{\rm RF}}$ are, respectively, the matrices of the left and the right singular vectors, and $\bD_q\in \mathbb{R}_+^{N^{\rm RF}\times N^{\rm RF}}$ is the diagonal matrix of the singular values. We show in Appendix \ref{Simp_gamma2} that $\bGamma^2(\mathcal{V},  \mathcal{W})$ is independent of $\bD_q$ and $\bZ_q$, and it can be expressed as:
\begin{align} \label{gamma_long_mtx_body}
\bGamma^2(\mathcal{Z}, \mathcal{\tilde V}, \mathcal{{K}})= \frac{1}{\sigma_{\rm n}^2} \bPsi^H \bUpsilon \left(\mathcal{\tilde{V}}, \mathcal{{K}}\right) \mathbfcal{D}\left(\mathcal{Z}\right) \bUpsilon^H\left(\mathcal{\tilde{V}}, \mathcal{{K}}\right) \bPsi.
\end{align}
Here, with small abuse of notation, we use $\bGamma^2(\mathcal{Z},\mathcal{\tilde V},  \mathcal{{K}})$ instead of  $\bGamma^2(\mathcal{V},  \mathcal{{W}})$ where $\mathcal{Z}=   (z_1, z_2,\ldots, z_Q) \triangleq (\|\bv_1\|, \|\bv_2\|,\ldots, \|\bv_Q\|)\succcurlyeq {\bf 0}$, $\mathcal{{K}}\triangleq \left(\bK_1, \cdots, \bK_Q\right)$, $ \mathcal{\tilde V}\triangleq \left(\tilde \bv_1, \tilde \bv_2, \ldots, \tilde \bv_Q\right)$, and $\tilde \bv_q \triangleq \bv_q /\| \bv_q\|$. Also, $\mathbfcal{D}\left(\mathcal{Z}\right)$ and $\bUpsilon \left(\mathcal{\tilde{V}}, \mathcal{{K}}\right)$ are, respectively, an $L\times L$ diagonal matrix and an $MN\times L$ matrix defined as:
\begin{align} \label{gamma_long_mtx_D_def}
\mathbfcal{D} \left(\mathcal{Z}\right) &\triangleq  {\rm blkdg}\left(
  z_1^2  \bI,z_2^2  \bI , \; \ldots \;, z_Q^2  \bI
  \right)
,\\
\bUpsilon \left(\mathcal{\tilde{V}}, \mathcal{{K}}\right)&\triangleq
\left(
  \begin{array}{cccc}
  \tilde{\bv}_1^T\otimes \bK_1^H  \\
     \tilde{\bv}_2^T\otimes \bK_2^H  \\
      \vdots  \\
      \tilde{\bv}_{Q}^T\otimes \bK_{Q}^H
  \end{array}
\right)^H. \label{uupppsiloon}
\end{align}
In the rest of the paper, we replace $J_{\rm MMSE}(\mathcal{V}, \mathcal{W})$ with $J_{\rm MMSE}(\mathcal{Z}, \mathcal{\tilde V}, \mathcal{W})$, thereby emphasizing that the optimization variables are now $\mathcal{Z}$, $\mathcal{\tilde V}$, and $\mathcal{W}$. Note that $\mathcal{Z}$ denotes the power in the transmit beamformers, and $\mathcal{\tilde V}$ is their corresponding directions. To specify the fully-digital beamforming design problem, we relax the minimization \eqref{min_MMSE_HB} by ignoring {\eqref{vq_split_const}, \eqref{Wq_split_const}, and \eqref{norm_cons}, and solve the following minimization:
\begin{align}  \label{J_MMSE_FD_opt}
\min_{\mathcal{Z},\mathcal{\tilde{V}}, \mathcal{K}}\, &{\rm Tr}\left(\left({\bLambda}^{-1}  + {\bGamma}^2\left(\mathcal{Z}, \mathcal{\tilde V},  \mathcal{{K}} \right) \right)^{-1}\right), \;\;
{\rm s.t.}\, &\sum_{q=1}^{Q}z_q^2\leq E_T
\end{align}
where we have dropped the non-negative constraints on  $\{z_q\}_{q=1}^Q $, simply because if, at the optimum, $z_q < 0$, for some $q$,   flipping the sign of $z_q$  will not violate the constraint in  \eqref{J_MMSE_FD_opt}, neither does it affect the cost function.
Using the fact that
\begin{eqnarray} \label{diag_fact}
{\rm Tr} (\bA^{-1}) \ge \sum_i  (\bA^{}(i,i))^{-1},
\end{eqnarray}
with equality if and only if $\bA$ is a diagonal matrix \cite{1597555}, we can easily see that solving the minimization
\begin{align}  \label{J_MMSE_FD_opt_lowerbound}
\min_{\mathcal{Z},\mathcal{\tilde{V}}, \mathcal{K}}\,  &\sum_i\left({\bLambda}^{-1}_{ii}  + {\bGamma}^2_{ii}\left(\mathcal{Z}, \mathcal{\tilde V},  \mathcal{{K}} \right) \right)^{-1}\,
 {\rm s.t.}\, &\sum_{q=1}^{Q}z_q^2\leq E_T
\end{align}
yields a lower bound to the optimization problem in \eqref{J_MMSE_FD_opt}. From \eqref{diag_fact}, the only way to achieve this lower bound is when ${\bGamma}^2\left(\mathcal{Z}, \mathcal{\tilde V},  \mathcal{{K}} \right)$ is diagonal. By choosing $\bPsi^H \bUpsilon \left(\mathcal{\tilde{V}}, \mathcal{{K}}\right) = \bI$, or equivalently, by choosing $\mathcal{\tilde{V}}$ and $ \mathcal{K}$ such that $ \bUpsilon \left(\mathcal{\tilde{V}}, \mathcal{{K}}\right) = \bPsi$ holds true, we can ensure that $ {\bGamma}^2\left(\mathcal{Z}, \mathcal{\tilde V},  \mathcal{{K}} \right)$ is diagonal. We emphasize that the optimality of $ {\bGamma}^2\left(\mathcal{Z}, \mathcal{\tilde V},  \mathcal{{K}} \right)$ diagonal is yet to be proven. Due to the block structure of $\mathbfcal{D} \left(\mathcal{Z}\right)$ given in \eqref{gamma_long_mtx_D_def}, the existing proof for diagonality of  ${\bGamma}^2\left(\mathcal{Z}, \mathcal{\tilde V},  \mathcal{{K}} \right)$ in the literature (like in \cite{1261339, 1597555}), does not apply here. Note that, since the unitary matrix $\bPsi$ and $ \bUpsilon \left(\mathcal{\tilde{V}}, \mathcal{{K}}\right) $ are $MN \times MN$ and $MN\times L$ matrices, we first need to ensure $L = MN$. This is possible only when $Q=MN/N^{\rm RF}$. In what follows, we show how to find $\left(\mathcal{Z}, \mathcal{\tilde V},  \mathcal{{K}} \right)$ first for the case $Q=MN/N^{\rm RF}$  and later look at the case  $Q<MN/N^{\rm RF}$.

\subsection{Fully-Digital Beamforming Design for $Q=MN/N^{\rm RF}$} \label{Section:All_Q}

We now show how the values of $\mathcal{\tilde{V}}$ and $\mathcal{K}$ can be chosen such that
$\bUpsilon\left(\mathcal{\tilde{V}}^{\rm o}, \mathcal{K}^{\rm o} \right)=\bPsi $ holds true, where $\mathcal{\tilde{V}}^{\rm o}, \mathcal{K}^{\rm o}$ are used to denote the solution of $\mathcal{\tilde{V}}$ and $\mathcal{K}$ in \eqref{J_MMSE_FD_opt}, respectively. To this end, we choose
\begin{align} \label{mathcal_V_tilde_opt}
\mathcal{\tilde{V}}^{\rm o}&=\left(\tilde{\bu}_{\rm t}^{(1)},  \ldots, \tilde{\bu}_{\rm t}^{(Q)}\right),\nonumber\\
 \mathcal{K}^{\rm o}&=\left(\tilde{\bU}_{\rm r}^{(1)}, \ldots, \tilde{\bU}_{\rm r}^{(Q)} \right).
\end{align}
To find $\mathcal{{Z}}^{\rm o}$, we solve the following minimization problem:
\begin{align} \label{min_MMSE_FD_mim_min_outer}
\min_{\mathcal{Z}}\, &{\rm Tr}\left(\left(\bLambda^{-1}  + \bGamma^2(\mathcal{Z}, \mathcal{\tilde{V}}^{\rm o}, \mathcal{K}^{\rm o}) \right)^{-1}\right),\,
 {\rm s.t.} &\sum_{q=1}^{Q}z_q^2\leq E_T,
\end{align}
where
\begin{align} \label{gamma_short_mtx_v_tilde_opt}
\bGamma^2(\mathcal{Z}, \mathcal{\tilde{V}}^{\rm o}, \mathcal{K}^{\rm o})=\frac{1}{\sigma_{\rm n}^2} {\rm blkdg}\left(z_1^2 \bI_{N^{\rm RF}}, \cdots, z_Q^2\bI_{N^{\rm RF}}\right).
\end{align}

Note that, at the optimum, the total energy constraint in \eqref{min_MMSE_FD_mim_min_outer} has to be satisfied with equality. Otherwise, if at the optimum  $\sum_{q=1}^{Q}z_q^{,2} < E_T$ holds, then $\{z_q\}_{q=1}^Q$ can be scaled up such that this constraint is satisfied with equality. Doing so reduces the objective function, thereby contradicting the optimality. Defining $\alpha_q\triangleq z_q^2$, we can now define  the Lagrangian function as:
\begin{align} \label{Lagrangian_multi1}
\mathcal{L}(\alpha_1, \ldots,\alpha_Q, \mu)&= \sum_{q=1}^{Q}{\rm Tr}\left(\left( \left({{\tilde{\bLambda}}^{(q)}}\right)^{-1}+ \frac{1}{\sigma_{\rm n}^2}\alpha_q \bI_{N^{\rm RF}}\right)^{-1}\right)\nonumber\\
&+\mu_0 \left[ \sum_{q=1}^{Q} \alpha_q-E_T\right] -\sum_{q=1}^Q \mu_q \alpha_q,
\end{align}
where $\mu_0$ is the lagrange multiplier, and ${\tilde{\bLambda}}^{(q)}$ denotes an $N^{\rm RF}\times N^{\rm RF}$ diagonal matrix corresponding to $q$th block diagonal matrix  of ${{{\bLambda}}}$. For $\alpha_q > 0$ (which implies $\mu_q = 0$, due to the complementary slackness condition), setting $\partial \mathcal{L}(\alpha_1, \ldots,\alpha_Q, \mu)/\partial \alpha_q = 0$, $ \forall \, q$, yields
\begin{align} \label{Lagrangian_multi1_derive}
\left[ \sum_{k=1}^{N^{\rm RF}}\left(\left({{\tilde{\lambda}}_q^{(k)}}\right)^{-1}+\frac{1}{\sigma_{\rm n}^2}\alpha_q \right)^{-2}- \mu_0 \sigma_{\rm n}^2\right]=0,
\end{align}
where ${\tilde{\lambda}}_q^{(k)}$ denotes the $k$th diagonal entry of ${{\tilde{\bLambda}}^{(q)}}$. Therefore, the optimal positive $\alpha_q$ is obtained as the positive solution of the following equation:
  \begin{align}
 &\mu_0= \frac{1}{\sigma_{\rm n}^2} \sum_{k=1}^{N^{\rm RF}}\left(\left({{\tilde{\lambda}}_q^{(k)}}\right)^{-1}+\frac{1}{\sigma_{\rm n}^2}\alpha_q \right)^{-2}, \qquad \forall q. \label{water_fill_mod1_eqs}
\end{align}
while satisfying $\sum_{q=1}^{Q}\alpha_q =  E_T$.

Note that if \eqref{water_fill_mod1_eqs} does not have a positive solution for $\alpha_q$, then we have to choose $\alpha_q = 0$. Note also that the solution for $\alpha_q$ in \eqref{water_fill_mod1_eqs} is a non-increasing function of $\mu_0$. For a given $\mu_0$, once all $\alpha_q$'s are obtained, we need to verify the constraint $
        \sum_{q=1}^{Q}\alpha_q =  E_T$. If $\sum_{q=1}^{Q}\alpha_q >  E_T$, then  at least one of the  $\alpha_q$'s has to be reduced meaning that the optimal value of $\mu_0$ is larger than the given  $\mu_0$,  If $\sum_{q=1}^{Q}\alpha_q <  E_T$, then at least one of the $\alpha_q$'s has to be increased meaning that the optimal value of $\mu_0$ is smaller than the given  $\mu_0$. Based on this explanation, we can devise a bisection-based algorithm to find the optimal value of $\mu_0$, and consequently, the optimal values of $\{\alpha_q\}_{q=1}^{Q}$.

{\bf Algorithm to solve \eqref{water_fill_mod1_eqs}}:
 Let $\mu_0$ be a value in the interval $\left[\mu_{\rm min}, \mu_{\rm max}\right]$, where $\mu_{\rm min}$ and  $\mu_{\rm max}$ are respectively the lower- and upper-bound for $\mu_0$. We set $\mu_0= \frac{1}{2}\left(\mu_{\rm min}+\mu_{\rm max}\right)$ and solve for $\alpha_q$ in \eqref{water_fill_mod1_eqs} using the \emph{Newton-Raphson} method\cite{Ortega_book}. If $\alpha_q < 0$, then we set  $\alpha_q =0$. Now, if $\sum_{q=1}^{Q}\alpha_q < E_T$, implying that one or more of $\left\{\alpha_q\right\}_{q=1}^{Q} $ have to be increased. To do so, we decrease $\mu_{\rm max}$ by setting $\mu_{\rm max}=\mu_0$. Otherwise, if $\sum_{q=1}^{Q}\alpha_q >  E_T$, then $\mu_{\rm min}=\mu_0$. We continue this process until $|\sum_{q=1}^{Q}\alpha_q -  E_T|^2\leq \varepsilon$, for some small $\varepsilon$.

We now explain how $\mu_{\rm min}$ and  $\mu_{\rm max}$ can be chosen. For extremely high $E_T$, i.e., $E_T \rightarrow \infty$, each $\left\{\alpha_q\right\}_{q=1}^{Q}$ is unbounded, implying that $\mu_0 \rightarrow 0$. Therefore, we set $\mu_{\rm min}=0$. On the other hand, when $E_T \rightarrow 0$, each $\left\{\alpha_q\right\}_{q=1}^{Q}$ becomes zero. Defining $\displaystyle\mu_0^{(q)}\triangleq\frac{1}{\sigma_{\rm n}^2} \sum_{k=1}^{N^{\rm RF}} \left({{\tilde{\lambda}}_q^{(k)}}\right)^{2}$
by  setting $\alpha_q=0$ in \eqref{water_fill_mod1_eqs}, we can then
choose \begin{align} \label{mu_max}
\mu_{\rm max}=\max\left\{\mu_0^{(1)},\mu_0^{(2)}, \cdots, \mu_0^{(Q)}\right\}.
\end{align}
The bisection-based algorithm is summarized in Algorithm \ref{alg_water}.
\begin{algorithm}
    \caption{Algorithm to solve \eqref{water_fill_mod1_eqs}  }
    \textbf{Inputs}: $\bLambda$, $E_T$, $M$, $N$, $N^{\rm RF}$, and $\varepsilon$.\\
    \textbf{Outputs}: $\alpha_q^{\rm o}$ for $q=1,2, \cdots, Q$.
    \begin{algorithmic}[1]
        \State \label{Step2}Set $\mu_{\rm min}=0$ and obtain $\mu_{\rm max}$ as in \eqref{mu_max}, and choose an arbitrarily small value for $\varepsilon$,
        \State \hskip0.0em \textbf{Repeat}
        \State \hskip1.0em  Choose $\mu_0= \frac{1}{2}\left(\mu_{\rm min}+\mu_{\rm max}\right)$.
        \State \hskip1.0em Solve for $\alpha_q$ in system of equations in \eqref{water_fill_mod1_eqs} using the \emph{Newton-Raphson} method
        \State \hskip1.0em \textbf{If}  $\alpha_q < 0$ : Set $\alpha_q = 0$.
        \State \hskip1.0em \textbf{If}  $\sum_{q=1}^{\tilde{Q}}\alpha_q < E_T$: Set $\mu_{\rm max}=\mu_0$.
        \State \hskip1.0em \textbf{else If}  $\sum_{q=1}^{\tilde{Q}}\alpha_q > E_T$: Set $\mu_{\rm min}=\mu_0$.
        \State \hskip0.0em \textbf{Until} $|\sum_{q=1}^{\tilde{Q}}\alpha_q -   E_T|^2\leq \varepsilon$
    \end{algorithmic}
    \label{alg_water}
\end{algorithm}

Once we calculate $\alpha_q$, $\forall q$, we obtain $\mathcal{Z}^{\rm o}$ as:
\begin{align} \label{mathcal_Z_tilde_opt}
\mathcal{{Z}}^{\rm o}=\left( \sqrt{\alpha_1^{\rm o}},  \sqrt{\alpha_2^{\rm o}}, \cdots,  \sqrt{\alpha_Q^{\rm o}}\right).
\end{align}
Now, given $\left(\mathcal{Z}^{\rm o}, \mathcal{\tilde{V}}^{\rm o}, \mathcal{K}^{\rm o}\right)$, we explain how to find the beamformers $\mathcal{V}^{\rm o}$ and $ \mathcal{W}^{\rm o}$. To find $\mathcal{W}^{\rm o}$, we note that the objective function as well as the constraint in \eqref{J_MMSE_FD_opt} do not depend on $\bD_q$ and $\bZ_q$ (see \eqref{gamma_long_mtx_body}), therefore, there are infinite solution for $\mathcal{W}^{\rm o}$. For the sake of simplicity, we choose  $\bD_q=\bI$ and $\bZ_q=\bI$. One solution for $\mathcal{W}^{\rm o}$ as well as the solution for  $\mathcal{V}^{\rm o}$ are given by:
\begin{align} \label{V_W_opt_sol}
\mathcal{W}^{\rm o}&=\mathcal{K}^{\rm o}, \nonumber\\
\mathcal{V}^{\rm o}&=\left(\| \bv_1^{\rm o}\| \tilde{\bv}_1^{\rm o}, \| \bv_2^{\rm o}\| \tilde{\bv}_2^{\rm o}, \cdots, \| \bv_Q^{\rm o}\| \tilde{\bv}_Q^{\rm o} \right).
\end{align}
Recall that $\tilde{\bv}_q^{\rm o}$ and $\| \bv_q^{\rm o}\|$ are the $q$th entries of $\mathcal{\tilde{V}}^{\rm o}$ and $\mathcal{Z}^{\rm o}$, respectively.
\subsection{Fully-Digital Beamforming Design for $Q< MN/N^{\rm RF}$}\label{SectionB}
In this subsection, we consider the case where the number of available training time slots, $Q$, is less than $MN/N^{\rm RF}$, and aim to solve \eqref{J_MMSE_FD_opt}.
 Since $Q< MN/N^{\rm RF}$ (or equivalently $L < MN$), the dimensions of $\bUpsilon \left(\mathcal{\tilde{V}}, \mathcal{{K}}\right) $ and $\mathbfcal{D}\left(\mathcal{Z}\right)$ are different from the case where $Q= MN/N^{\rm RF}$. Indeed, $\bGamma^2(\mathcal{Z}, \mathcal{\tilde{V}}, \mathcal{{K}})$ is an $MN \times MN$ matrix whose rank is  at most $L$. As we have seen in the case $Q = MN/N^{\rm RF}$, not all $Q$ time slots may be used. Indeed, given the  transmitter and receiver correlation matrices\footnote{Although obtaining $\bS_{\rm t}$ and $\bS_{\rm r}$ requires some training, we argue that the rate at which $\bS_{\rm t}$ and $\bS_{\rm r}$ change depends on large-scale variations in the scattering environment (such as array geometry, pattern of applied antenna, path loss exponent, and delay profile) and is typically slow \cite{1494825, 1021913, 656151, Gazor_space}. The actual channel coefficients, on the other hand, can vary at a much faster rate due to the phase variations induced by the relative motion between the transmitter/receiver and  scattering objects. Thus, in general, $\bS_{\rm t}$ and $\bS_{\rm r}$ vary much more slowly and do not need to be estimated in each coherence interval. This allows for the possibility of covariance feedback to the transmitter \cite{937059}. Such an assumption is well adopted in the literature, see for example  \cite{4803746, 8227727, Eldar_9026804, 1261339}.}, i.e., $\bS_{\rm t}$ and $\bS_{\rm r}$, we can obtain the corresponding eigenvalue matrices $ \bLambda_{\rm t}$ and $\bLambda_{\rm r}$, and then, calculate $ \bLambda = \bLambda_{\rm t} \otimes \bLambda_{\rm r}$, from which we can obtain the power of each transmit beamformer. That is, we can in advance determine the number of transmit beamformers with non-zero powers. We use $Q_{\rm nz}$ to denote this number, with $Q_{\rm nz} \in \{1,2, \cdots, MN/N^{\rm RF}\}$. If $Q_{\rm nz} < Q$, the solution can be obtained from the approach obtained in the previous subsection. Otherwise, the solution in the previous subsection cannot be applied and we use the solution presented below.

 To efficiently solve the  problem in this case, inspired by the solution in the previous subsection, we impose a diagonal structure on $\bGamma^2\left(\mathcal{Z}, \mathcal{\tilde{V}}, \mathcal{{K}} \right)$, and solve the following optimization problem instead:
\begin{align} \label{min_MMSE_FD_mim_min_outer_less_snap_diag}
 \min_{\mathcal{Z}, \mathcal{\tilde{V}}, \mathcal{K}}\,\,\,\, &{\rm Tr}\left(\left(\bLambda^{-1}  + \bGamma^2(\mathcal{Z}, \mathcal{\tilde{V}}, \mathcal{{K}}) \right)^{-1}\right) \\
{\rm s.t.}\,\,\,\, & \sum_{q=1}^{Q}z_q^2\leq E_T,\quad {\rm and } \quad
{\bGamma}^2\left(\mathcal{Z},\mathcal{\tilde{V}}, \mathcal{K} \right) \,\,\, \mbox{ is diagonal}.\nonumber
\end{align}
Let  $\tilde{\bLambda}\triangleq {\rm diag}\left({\rm Tr}\left({\tilde{\bLambda}}^{(1)}\right), {\rm Tr}\left({\tilde{\bLambda}}^{(2)}\right), \ldots, {\rm Tr}\left({\tilde{\bLambda}}^{(Q)}\right)\right)$ be a $Q \times Q$ diagonal matrix, with  ${\tilde{\bLambda}}^{(q)}$ defined right after \eqref{Lagrangian_multi1}. Let also $\tilde{\bLambda} = \mathbfcal{\tilde{U}} \tilde{\bLambda}_{\rm s}  \mathbfcal{\tilde{U}}^H$ denote the eigenvalue decomposition of $\tilde{\bLambda}$
where, $\tilde{\bLambda}_{\rm s}$ is a diagonal matrix of eigenvalues in decreasing order, and $\mathbfcal{\tilde{U}}$ is the corresponding matrix of eigenvectors. Since $\tilde{\bLambda}_{\rm s}$ and $\tilde{\bLambda}$ are both diagonal matrices, $\mathbfcal{\tilde{U}}$ is just a permutation matrix. We define ${\mathbfcal{U}}\triangleq \mathbfcal{\tilde{U}} \otimes \bI_{N^{\rm RF}}$, and $ {\bLambda}_{\rm s} \triangleq\mathbfcal{{U}}^H {\bLambda}   {\mathbfcal{U}} $, and using the fact that $\mathbfcal{\tilde{U}}$ is a permutation matrix, the diagonal entries of ${\bLambda}_{\rm s}$ and ${\bLambda}$ are the same but in different order}. Using this notation, the objective function in \eqref{min_MMSE_FD_mim_min_outer_less_snap_diag} is expressed as:
\begin{align} \label{obj_svd}
&J_{\rm MMSE}\left(\mathcal{Z},\mathcal{\tilde{V}}, \mathcal{K} \right) = {\rm Tr}\left(\left( \mathbfcal{{U}} {\bLambda}_{\rm s}^{-1}  {\mathbfcal{U}}^H  + \bGamma^2(\mathcal{Z}, \mathcal{\tilde{V}}, \mathcal{{K}}) \right)^{-1}\right) \nonumber\\
&= {\rm Tr}\left(\left({\bLambda}_{\rm s}^{-1}  + \frac{1}{\sigma_{\rm n}^2}  \tilde{ \bPsi}^H \bUpsilon \left(\mathcal{\tilde{V}}, \mathcal{{K}}\right) \mathbfcal{D}\left(\mathcal{Z}\right) \bUpsilon^H \left(\mathcal{\tilde{V}}, \mathcal{{K}}\right)  \tilde{\bPsi} \right)^{-1}\right)\nonumber\\
&\triangleq {\rm Tr}\left(\left({\bLambda}_{\rm s}^{-1}  + \tilde{\bGamma}^2\left(\mathcal{Z},\mathcal{\tilde{V}}, \mathcal{K} \right) \right)^{-1}\right),
\end{align}
where $ \tilde{\bPsi}\triangleq \bPsi  \mathbfcal{{U}} $. Therefore, the optimization problem \eqref{min_MMSE_FD_mim_min_outer_less_snap_diag} is equivalent to:
\begin{align} \label{min_MMSE_FD_mim_min_outer_less_snap_order}
\min_{\mathcal{Z}, \mathcal{\tilde{V}}, \mathcal{K}}\,\,\,\, & {\rm Tr}\left(\left({\bLambda}_{\rm s}^{-1}  + \tilde{\bGamma}^2\left(\mathcal{Z},\mathcal{\tilde{V}}, \mathcal{K} \right) \right)^{-1}\right)\\
{\rm s.t.}\,\,\,\, & \sum_{q=1}^{Q}z_q^2\leq E_T, \quad {\rm and } \quad \tilde{\bGamma}^2\left(\mathcal{Z},\mathcal{\tilde{V}}, \mathcal{K} \right) \,\,\, \mbox{ is diagonal}. \nonumber
\end{align}
Since $\tilde{\bGamma}^2\left(\mathcal{Z} ,\mathcal{\tilde{V}} , \mathcal{K}  \right)$ in \eqref{min_MMSE_FD_mim_min_outer_less_snap_order} is an $MN \times MN$ diagonal matrix of rank $L < MN$, then, there have to be at least $MN-L$ zero entries on its diagonal. We show in the Appendix \ref{gama_prop} that at the optimum, the non-zero diagonal entries of
$\tilde{\bGamma}^2\left(\mathcal{Z^{\rm o}},\mathcal{\tilde{V}}^{\rm o},
\mathcal{K^{\rm o}} \right)$ are the same as the diagonal entries of
$\mathbfcal{D}(\mathcal{Z}^{\rm o})$. This implies that the non-zero diagonal entries of $\tilde{\bGamma}^2\left(\mathcal{Z}^{\rm o},\mathcal{\tilde{V^{\rm o}}},
\mathcal{K^{\rm o}} \right)$ have multiplicity order $N^{\rm RF}$, therefore, it can be expressed as $\tilde{\bGamma}^2\left(\mathcal{Z^{\rm o}},
\mathcal{\tilde{V}}^{\rm o}, \mathcal{K^{\rm o}} \right)= \frac{1}{\sigma_{\rm n}^2} \bSigma^{\rm o}\otimes \bI_{N^{\rm RF }} $,  where $\bSigma^{\rm o} \triangleq {\rm diag}\left(\gamma_1^{\rm o}, \gamma_2^{\rm o} \ldots,
\gamma^{\rm o}_{\frac{MN}{N^{\rm RF}}}\right)$. Given this observation, in order to solve \eqref{min_MMSE_FD_mim_min_outer_less_snap_order}, we use the following strategy. We first find  $\bSigma^{\rm o}$ by solving
\begin{align} \label{gamma_opt_prob}
\min_{\bSigma \in \mathcal{D}}\,\,\,\, &{\rm Tr}\left(\left({\bLambda}_{\rm s}^{-1}  + \frac{1}{\sigma_{\rm n}^2} \bSigma \otimes \bI_{N^{\rm RF }} \right)^{-1}
\right)\nonumber\\
{\rm s.t.}\,\,\,\,  &{\rm Rank}\left(\bSigma\right)\le Q, \qquad {\rm Tr}\left(\bSigma\right)\leq  E_T,
\end{align}
where $\mathcal{D}\triangleq\left\{\bSigma \mid \bSigma \in R_{+}^{ \frac{MN}{N^{\rm RF}}\times \frac{MN}{N^{\rm RF}}}, \hbox{$\bSigma$ is a diagonal matrix}\right\}$. Then, using the obtained $\bSigma^{\rm o}$, we find
$\mathcal{V^{\rm o}}$, $\mathcal{\tilde{V^{\rm o}}}$ and $\mathcal{K^{\rm o}} $ such that
\begin{align} \label{opt_eq}
\tilde{\bGamma}^2\left(\mathcal{Z}^{\rm o},\mathcal{\tilde{V}}^{\rm o}, \mathcal{K}^{\rm o} \right)=\frac{1}{\sigma_{\rm n}^2} \bSigma^{\rm o} \otimes \bI_{N^{\rm RF }}.
\end{align}
Next, we show how to find $\mathcal{Z}^{\rm o}$, $\mathcal{\tilde{V}}^{\rm o}$, and $\mathcal{K}^{\rm o} $ based on $\bSigma^{\rm o}$.

\subsubsection{Finding $\mathcal{Z}^{\rm o}$}
  Since $\bSigma^{\rm o}$ is a rank-$Q$ diagonal matrix, it has $Q$ non-zero entries on its diagonal. Let
\begin{align} \label{svd_Sigma}
\bSigma^{\rm o}\otimes \bI_{N^{\rm RF }} = \mathbfcal{F}\,\, {\rm blkdg}\left(\bSigma_{\rm s}^{\rm o}, \b0\right)\,\, \mathbfcal{F}^H
\end{align}
denote the eigenvalue decomposition of $\bSigma^{\rm o}\otimes \bI_{N^{\rm RF }}$, where $\bSigma_{\rm s}^{\rm o}$ is an $L\times L$ diagonal matrix\footnote{Recall that $L=QN^{\rm RF}$.} with entries in decreasing order and $\mathbfcal{F}$ is a unitary matrix. Using the fact that $\mathbfcal{D}\left(\mathcal{Z}^{\rm o}\right) =\bSigma_{\rm s}^{\rm o}$ (see Appendix \ref{gama_prop}), $\mathcal{Z}^{\rm o}$ can be directly obtained from the diagonal entries of $\bSigma_{\rm s}^{\rm o}$. Note that the diagonal entries of $\bSigma_{\rm s}^{\rm o}$ have multiplicity order of $N^{\rm RF}$, so we can write $\bSigma_{\rm s}^{\rm o}= \check{\bSigma}_{\rm s}^{\rm o}\otimes \bI_{N^{\rm RF}}$, where $\check{\bSigma}_{\rm s}^{\rm o}$ is a $Q\times Q$ diagonal matrix with unit multiplicity order. Therefore, $\mathcal{Z}^{\rm o} = {\rm diag}\left(\check{\bSigma}_{\rm s}^{\rm o}\right)$.

\subsubsection{Finding $\mathcal{\tilde{V}}^{\rm o}$ and $\mathcal{K}^{\rm o} $} Using \eqref{opt_eq} and \eqref{svd_Sigma} along with the fact that $\mathbfcal{D}\left(\mathcal{Z}^{\rm o}\right) =\bSigma_{\rm s}^{\rm o}$, we can write
\begin{align} \label{vtild_K_Qless}
 &\mathbfcal{F}^H\tilde{ \bPsi}^H \bUpsilon \left(\mathcal{\tilde{V}}^{\rm o}, \mathcal{{K}}^{\rm o}\right) \mathbfcal{D}\left(\mathcal{Z}^{\rm o}\right) \bUpsilon^H \left(\mathcal{\tilde{V}}^{\rm o}, \mathcal{{K}}^{\rm o}\right)  \tilde{\bPsi} \mathbfcal{F}\nonumber\\
 &= {\rm blkdg}\left(\mathbfcal{D}\left(\mathcal{Z}^{\rm o}\right), \b0\right).
\end{align}
To satisfy \eqref{vtild_K_Qless}, we choose $\mathcal{\tilde{V}}^{\rm o}$ and $\mathcal{{K}}^{\rm o}$ such that the $MN\times L$ matrix $\bUpsilon\left(\mathcal{\tilde{V}}^{\rm o}, \mathcal{{K}}^{\rm o}\right)$ corresponds to the first $L$ columns of $ \tilde{\bPsi} \mathbfcal{F}$, or equivalently, $ \bPsi  \mathbfcal{{U}} \mathbfcal{F}$. Specifically, due to the structure of $\mathbfcal{{U}}$ and  $\mathbfcal{F}$, we express $\mathbfcal{{U}} \mathbfcal{F} = \mathbfcal{{T}} \otimes \bI_{N^{\rm RF}}$, where $\mathbfcal{{T}}$ is a permutation matrix. Let $i_q$ denote the index of the non-zero entry of the $q$th column of $\mathbfcal{{T}}$, then  $\mathcal{\tilde{V}}^{\rm o}$ and $\mathcal{K}^{\rm o}$ are given by:
\begin{align} \label{mathcal_V_tilde_opt_case_B}
&\mathcal{\tilde{V}}^{\rm o}=\left(\tilde{\bu}_{\rm t}^{(i_1)}, \ldots, \tilde{\bu}_{\rm t}^{(i_Q)}\right), \,\,\,\mathcal{K}^{\rm o}=\left(\tilde{\bU}_{\rm r}^{(i_1)}, \ldots, \tilde{\bU}_{\rm r}^{(i_Q)} \right).
\end{align}

\subsubsection{Solving \eqref{gamma_opt_prob} to find $\bSigma^{\rm o}$} Due to the rank constraint, the optimization problem \eqref{gamma_opt_prob} is non-convex, and in general, may not be amenable to a computationally affordable solution. However, we propose an algorithm to find a suboptimal but efficient solution. To this end, we propose to allocate $E_T$ to the first $Q$ diagonal entries of $\bSigma^{\rm o} $ while the last $\frac{MN}{L}-Q$ diagonal entries of $\bSigma$ are set to zero, i.e.,  $\gamma_q^{\rm o}=0$ for $q=\frac{MN}{L}-Q+1, \dots, \frac{MN}{L}$. To find $\gamma_q^{\rm o}$, $\forall\, q$, let $\check{\bSigma} \triangleq {\rm blkdg}\left(\gamma_1\otimes \bI_{N^{\rm RF}}, \gamma_2\otimes \bI_{N^{\rm RF}}, \ldots, \gamma_Q\otimes \bI_{N^{\rm RF}}\right)$, where $\gamma_q \in \mathbb{R}_+$, $\forall\, q$. Let also $\check{\tilde{{\bLambda}}}_{\rm s}$ be an $L \times L$ diagonal matrix that captures the first $L$ diagonal entries of ${\tilde{{\bLambda}}}_{\rm s}$. Then, $\left\{\gamma_q^{\rm o}\right\}_{q=1}^{Q}$ is the solution to the following optimization problem:
\begin{align} \label{gamma_opt_prob_reduced}
\min_{\gamma_1, \ldots, \gamma_Q}\,{\rm Tr}\left(\left(\check{\tilde{{\bLambda}}}_{\rm s}^{-1}  + \frac{1}{\sigma_{\rm n}^2} \check{\bSigma} \right)^{-1}\right), \,\,\,{\rm s.t.}\,\sum_{q=1}^{Q}\gamma_q\leq E_T.
\end{align}
The above minimization is the same as \eqref{min_MMSE_FD_mim_min_outer}. Therefore, we use Algorithm \ref{alg_water} to find $\left\{\gamma_1^{\rm o}, \ldots, \gamma_Q^{\rm o}\right\}$.

The proposed fully-digital beamforming design procedure is summarized in the following algorithm.
\begin{algorithm}
    \caption{Fully-digital beamforming design algorithm}
    \textbf{Inputs}: $\bLambda$, $E_T$, $M$,$N$, $N^{\rm RF}$, and $Q$.\\
    \textbf{Outputs}:  $\mathcal{V}^{\rm o}$  and $\mathcal{W}^{\rm o}$.
    \begin{algorithmic}[1]
        \State \label{Step1}Find $\alpha_q^{\rm o}$, $q=1,2, \cdots, MN/N^{\rm RF}$ using the Algorithm \ref{alg_water}.
        \State Find $Q_{\rm nz}$, the number of non-zero $\alpha_q^{\rm o}$ obtained in Step \ref{Step1}.
        \State \textbf{If} $Q_{\rm nz} \leq Q$:
        \State \hskip1.0em Obtain $\mathcal{Z}^{\rm o} $, $\mathcal{\tilde{V}}^{\rm o}$ and $\mathcal{K}^{\rm o} $ from \eqref{mathcal_Z_tilde_opt} and \eqref{mathcal_V_tilde_opt}, respectively.
        \State \textbf{else}:
        \State \hskip1.0em Obtain $\mathcal{\tilde{V}}^{\rm o}$ and $\mathcal{K}^{\rm o} $ from \eqref{mathcal_V_tilde_opt_case_B}.
        \State \hskip1.0em Find $\check{\bSigma}_{\rm s}^{\rm o}$ by solving \eqref{gamma_opt_prob_reduced} and using  Algorithm \ref{alg_water}. Then, set $\mathcal{Z}^{\rm o} = {\rm diag}\left(\check{\bSigma}_{\rm s}^{\rm o}\right)$.
        \State Obtain $\mathcal{V}^{\rm o}$  and $\mathcal{W}^{\rm o}$ using \eqref{V_W_opt_sol}.
    \end{algorithmic}
    \label{alg_summary}
\end{algorithm}

\section{Hybrid Beamformering Design} \label{sec:HB_design}
So far, we assumed that $\bv_q$ and $\bW_q$ are both \emph{fully-digital} without any constraint except the total energy constraint imposed on $\bv_q$. In this section, we consider \emph{hybrid} structures for  both $\bv_q$ and $\bW_q$, where they are both constructed from the cascades of digital-only and analog-only beamformers as in \eqref{vq_Wq_split}, respectively. Specifically, given the unit norm constraints in \eqref{norm_cons}, we find the analog and digital precoders and combiners to realize $\bv_q$ and $\bW_q$ by minimizing the Euclidean distance between the corresponding fully-digital beamformers and the hybrid structured beamformers. This widely-used technique is shown to be effective in the literature \cite{7370753, 7397861, DBLP:journals/corr/abs-1712-03485, 6717211}.

\subsection{Hybrid Precoder Design} \label{HB_Precoder_design}
Here, we aim to show how to split the unconstrained precoders, designed in the previous section, into digital and RF precoders. Specifically, given $\bv_q$ we seek to find  $\bV_q^{\rm RF}$ and $\bv_q^{\rm d}$ such that $\bv_q= \bV_q^{\rm RF}\bv_q^{\rm d}$. Specifically, we solve
\begin{align}\label{precoder_split}
\min_{\bV_q^{\rm RF},\bv_q^{\rm d}}&\parallel \bv_q - \bV_q^{\rm RF} {\bv_q^{\rm d}} \parallel^2,\;\;\;\;
 {\rm s.t.}\,\,\,&\bV_q^{\rm RF} \in \mathcal{A}_v.
\end{align}
To solve \eqref{precoder_split}, we use the following proposition.

{\textbf{Proposition 1}} \label{proposition}: Any fully-digital beamformer $\bV_q \in \mathbb{C}^{N\times t}$ can be \emph{ideally} realized by a cascade of an analog beamformer   $\bV_q^{\rm RF} \in \mathcal{A}_v$ and a digital beamformer   $\bV_q^{\rm d}\in \mathbb{C}^{N^{\rm RF}\times t}$, if $N^{\rm RF}\geq 2t$.

\emph{Proof}: See \cite{Foad_JSTSP_2016}.

Since $\bv_q$ is an $N^{\rm RF}\times 1$ vector and $N^{\rm RF}\geq 2$, we use the above proposition to express $\bv_q$ as a cascade of  $\bV_q^{\rm RF} \in \mathcal{A}_v$ and $\bv_q^{\rm d}\in \mathcal{C}^{N^{\rm RF}\times 1}$. Let $ f_{i}e^{j\omega_{i}}$ be the $i$th entry of $\bv_q$, $\bv_q^{\rm d} = [v_{1}^{\rm d}\; \cdots\; v_{N^{\rm RF}}^{\rm d}]^T$, and $\bV_q^{\rm RF} (i,k)=e^{j\phi_{i,k}}$. Then, following the result of \label{cite_foad}\cite{Foad_JSTSP_2016}, if we choose
\begin{align}\label{v_sol}
v_{k}^{\rm d}&=
\left\{
  \begin{array}{ll}
    f_{\rm max}, &k=1,2, \\
    0, &k=3,\ldots, N^{\rm RF}
  \end{array}
\right., \nonumber\\
\phi_{i,k}&=
\left\{
  \begin{array}{ll}
    \omega_{i}-\cos^{-1}\left(\frac{f_i}{2f_{\rm max}}\right), & k=1, \\
    \omega_{i}+\cos^{-1}\left(\frac{f_i}{2f_{\rm max}}\right), &  k=2,\\
     0,  &   k=3, \ldots, N^{\rm RF},
  \end{array}
\right.
\end{align}
where $f_{\rm max}= \max_{i}f_{i}$, then $\bv_q= \bV_q^{\rm RF}\bv_q^{\rm d}$ holds true. Next, we explain how to find the hybrid combiners.
\subsection{Hybrid Combiner Design}\label{HB_Combiner_design}
Given the unconstrained optimal fully-digital combiner matrices $\{\bW_q\}_{q=1}^{Q}$, in this subsection, we explain how to split $\bW_q$ into its digital and analog counterparts. Considering the norm constraint on each entries of $\bW_q^{\rm RF}$, we aim to find $\bW_q^{\rm RF}$ and $\bW_q^{\rm d}$ such that their product is sufficiently close to $\bW_q$. Mathematically, we solve the following minimization problem:
\begin{align}\label{Joint_min}
\min_{\bW_q^{\rm RF},\bW_q^{\rm d}}&\parallel \bW_q - \bW_q^{\rm RF}\bW_q^{\rm d} \parallel^2,
 \,\,\,{\rm s.t.}\,\,\,\, \bW_q^{\rm RF} \in \mathcal{A}_w.
\end{align}
Due to the non-convex unit-modulus constraint in \eqref{Joint_min}, the joint minimization in \eqref{Joint_min} is highly complicated, and in general, may not be amenable to a computationally affordable  solution. \label{PE_Altmin}Inspired by the work in \cite{7397861}, we tackle \eqref{Joint_min} by decoupling the optimization over $\bW_q^{\rm RF}$ and $\bW_q^{\rm d}$. Indeed, we alternately solve for $\bW_q^{\rm RF}$ while fixing $\bW_q^{\rm d}$ and vice versa. To this end, two alternating approaches, namely MO-AltMin technique and PE-AltMin algorithm, have been proposed in \cite{7397861}. In this paper, we rely on the PE-AltMin technique which is summarized as follows. In the first step of this technique, given $\bW_q^{\rm RF}$, we obtain $\bW_q^{\rm d}$ as $\bW_q^{\rm d}= \bV_q\bU_{q}^H$,
where $\bU_{q}$  and $\bV_q$ are $N^{\rm RF}\times N^{\rm RF}$ matrices whose columns are, respectively,  the left and the right singular vectors of $\bW_q^H \bW_q^{\rm RF}$. In the second step, given $\bW_q^{\rm d}$, we obtain the phase of each elements of $\bW_q^{\rm RF}$ equal to the phase of the corresponding element in $\bW_q {\bW_q^{\rm d}}^H$. The PE-AltMin algorithm is summarized in Algorithm \ref{alg_HB}. Compared to PE-AltMin technique, the MO-AltMin algorithm is computationally more demanding while does not provide much performance gain as it is shown in our simulation results.

\textbf{Remark 1}: \label{remark1} It is worth mentioning that in our proposed scheme, determining $\mathcal{W}^{\rm o}$ requires only the knowledge of $\bU_{\rm r}$, (i.e., the eigen-vectors of $\bS_{\rm r}$, and not that of $\bS_{\rm t}$). On the other hand, obtaining  $\mathcal{\tilde{V}}^{\rm o}$ only requires the knowledge of $\bU_{\rm t}$ (i.e.,  the eigenvectors of $\bS_{\rm t}$, and not that of $\bS_{\rm r}$). In addition, finding $\mathcal{Z}^{\rm o}$ does not require full knowledge of matrix  $\bS_{\rm r}$ but only  the eigenvalues of $\bS_{\rm r}$ (i.e., $\bLambda_{r}$). These eigenvalues can be made available to the transmitter with a low feedback overhead.

\begin{algorithm}[t]
    \caption{PE-AltMin algorithm}
    \textbf{Inputs}: $\bW_q$, $N^{\rm RF}$.\\
    \textbf{Outputs}: $\bW_q^{\rm RF}$ and $\bW_q^{\rm d}$.
    \begin{algorithmic}[1]
        \State Choose ${\bW_q^{\rm RF}}^{(0)} $ with random phases and set $k=0$.
        \State \hskip0.0em \textbf{Repeat}:
        \State \hskip1.0em Compute the SVD $\bW_q^H{\bW_q^{\rm RF}}^{(k)} = \bU_{q}^{(k)}\bS^{(k)}{\bV_q^{(k)}}^H$.
        \State \hskip1.0em Set ${\bW_q^{\rm d}}^{(k)}= \bV_q^{(k)}{\bU_q^{(k)}}^H$.
        \State \hskip1.0em Set $\arg[{\bW_q^{\rm RF}}^{(k+1)}]_{ij}= \arg[\bW_q {{\bW_q^{\rm d}}^{(k)}}^H ]_{ij}$.
        \State \hskip1.0em Set $k \leftarrow k+1$.
        \State \hskip0.0em \textbf{Until} a stoping criterion triggers.
    \end{algorithmic}
    \label{alg_HB}
\end{algorithm}

{\bf Remark 2:} \label{remark2}
The main computational cost for each iteration of Algorithm \ref{alg_water} is to solve the non-linear equation in \eqref{water_fill_mod1_eqs} using {Newton-Raphson} method. For each value of $q$, $q= 1, \cdots, Q$, the complexity of solving \eqref{water_fill_mod1_eqs} is given by $\mathcal{O}\left( N^{\rm RF} \log(1/\varepsilon) \right)$, where $\varepsilon$ is the tolerance error term. Therefore, for $Q=MN/N^{\rm RF}$, the overall complexity of \eqref{water_fill_mod1_eqs} is given by $\mathcal{O}\left( MN \log(1/\varepsilon) \right)$. Moreover, the number of iterations needed
for the bisection method is $\lceil \log_2 ({\mu_{\rm max}}/{\varepsilon}) \rceil$ (which is directly
proportional to the logarithm of the initial value $\mu_{\rm max}$). Thus, a proper selection
for $\mu_{\rm max}$, such as in \eqref{mu_max}, will reduce the total cost.
In summary, the computational cost of Algorithm \ref{alg_water} is given by  $\mathcal{O}\left( MN \log(1/\varepsilon)\log_2 ({\mu_{\rm max}}/{\varepsilon}) \right)$.

In Algorithm \ref{alg_summary}, we first use Algorithm \ref{alg_water}  to find $\alpha_q^{\rm o}$ and $Q_{\rm nz}$, with complexity $\mathcal{O}\left( MN \log(1/\varepsilon)\log_2 ({\mu_{\rm max}}/{\varepsilon}) \right)$. The complexity of the remaining part of Algorithm \ref{alg_summary} is dominated by the eigenvalue decomposition of $\bS_t$ and $\bS_r$, which are of complexity $\mathcal{O}\left( N^3 \right)$ and $\mathcal{O}\left( M^3 \right)$, respectively. Overall, the complexity of Algorithm \ref{alg_summary} is  $\mathcal{O}\left( (MN)^3 \log(1/\varepsilon)\log_2 ({\mu_{\rm max}}/{\varepsilon}) \right)$. In each iteration of Algorithm \ref{alg_HB}, the
update of the analog precoder is simply realized by the SVD of $\bW_q^H{\bW_q^{\rm RF}}^{(k)}$ followed by  a phase extraction operation of the matrix $\bW_q {{\bW_q^{\rm d}}^{(k)}}^H$, whose dimension is $M \times N^{\rm RF}$. Therefore, the computational complexity of the Algorithm  \ref{alg_HB} is dominated by  the SVD operation in each iteration which is $\mathcal{O}\left({N^{\rm RF}}^3\right)$.  It should be noted that there are numerous algorithms to reduce the computation complexity of SVD such as the techniques in \cite{5158128, 7345023}.
 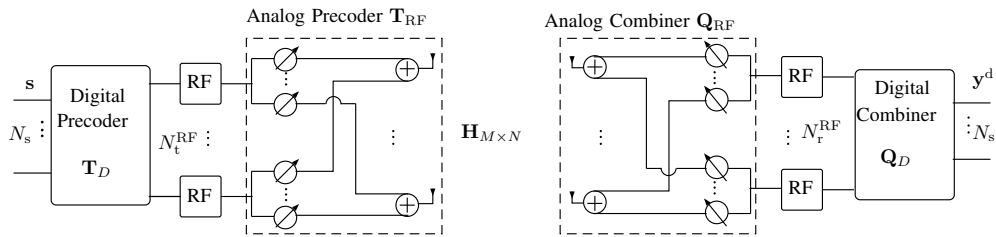
\begin{figure*}
    \centering
\psscalebox{0.75 .75} 
{
\begin{pspicture}(0,-2.0011537)(17.668888,2.0011537)
\psframe[linecolor=black, linewidth=0.02, dimen=outer, framearc=0.1](2.5185184,1.001555)(0.7453281,-1.4901876)
\psline[linecolor=black, linewidth=0.02](0.087066665,0.37990654)(0.75510585,0.37990654)
\psline[linecolor=black, linewidth=0.02](0.087066665,-0.9139004)(0.75510585,-0.9139004)
\psline[linecolor=black, linewidth=0.02](2.4870667,0.6702832)(3.0452063,0.6702832)
\psline[linecolor=black, linewidth=0.02](2.4870667,-1.3297168)(3.0452063,-1.3297168)
\psline[linecolor=black, linewidth=0.02](2.4870667,-1.3297168)(3.0452063,-1.3297168)
\psframe[linecolor=black, linewidth=0.02, dimen=outer, framearc=0.1](3.7760527,-0.9576238)(3.0318666,-1.7018099)
\rput[bl](3.1666667,-1.4232051){\rm RF}
\psline[linecolor=black, linewidth=0.02](3.7670667,0.6702832)(4.3252063,0.6702832)
\psline[linecolor=black, linewidth=0.02](4.306667,1.1167948)(4.306667,0.3167948)(4.306667,0.3167948)
\psline[linecolor=black, linewidth=0.02](4.306667,1.1167948)(4.7066665,1.1167948)
\psline[linecolor=black, linewidth=0.02](4.306667,0.3167948)(4.7066665,0.3167948)
\pscircle[linecolor=black, linewidth=0.02, dimen=outer](4.9066668,1.0967948){0.2}
\psline[linecolor=black, linewidth=0.02, arrowsize=0.05291667cm 2.0,arrowlength=1.4,arrowinset=0.0]{->}(4.7356324,0.9236914)(5.1494255,1.3374845)
\pscircle[linecolor=black, linewidth=0.02, dimen=outer](4.9066668,0.29679483){0.2}
\psline[linecolor=black, linewidth=0.02, arrowsize=0.05291667cm 2.0,arrowlength=1.4,arrowinset=0.0]{->}(4.7356324,0.123691365)(5.1494255,0.53748447)
\psline[linecolor=black, linewidth=0.02](5.0866666,1.1167948)(7.0866666,1.1167948)
\pscircle[linecolor=black, linewidth=0.02, dimen=outer](7.0666666,0.91679484){0.2}
\psline[linecolor=black, linewidth=0.02](7.0866666,0.7167948)(5.7533336,0.7167948)(5.7533336,-0.8832052)(5.0866666,-0.8832052)
\psline[linecolor=black, linewidth=0.02](7.0666,1.0701948)(7.0666,0.80352813)(7.0666,0.80352813)
\psline[linecolor=black, linewidth=0.02](6.9332666,0.93686146)(7.1999335,0.93686146)
\psline[linecolor=black, linewidth=0.02](3.7670667,-1.3297168)(4.3252063,-1.3297168)
\psline[linecolor=black, linewidth=0.02](4.306667,-0.8832052)(4.306667,-1.6832051)(4.306667,-1.6832051)
\psline[linecolor=black, linewidth=0.02](4.306667,-0.8832052)(4.7066665,-0.8832052)
\psline[linecolor=black, linewidth=0.02](4.306667,-1.6832051)(4.7066665,-1.6832051)
\pscircle[linecolor=black, linewidth=0.02, dimen=outer](4.9066668,-0.90320516){0.2}
\psline[linecolor=black, linewidth=0.02, arrowsize=0.05291667cm 2.0,arrowlength=1.4,arrowinset=0.0]{->}(4.7356324,-1.0763086)(5.1494255,-0.6625155)
\pscircle[linecolor=black, linewidth=0.02, dimen=outer](4.9066668,-1.7032052){0.2}
\psline[linecolor=black, linewidth=0.02, arrowsize=0.05291667cm 2.0,arrowlength=1.4,arrowinset=0.0]{->}(4.7356324,-1.8763087)(5.1494255,-1.4625155)
\rput{-1.1021739}(-0.0048984517,0.110853136){\psarc[linecolor=black, linewidth=0.02, dimen=outer](5.76,0.31006148){0.13333334}{0.0}{180.0}}
\psline[linecolor=black, linewidth=0.02](5.62,0.3167948)(5.0866666,0.3167948)
\psline[linecolor=black, linewidth=0.02](5.886667,0.3167948)(6.153333,0.3167948)
\psline[linecolor=black, linewidth=0.02, arrowsize=0.05291667cm 2.0,arrowlength=1.4,arrowinset=0.0]{-<}(7.253267,0.9435282)(7.519933,0.9435282)(7.519933,1.2101948)
\rput[bl](1.0932667,0.27519482){\rm Digital}
\rput[bl](0.8605057,-0.10787785){\rm Precoder}
\rput[bl](1.2932667,-0.97551286){$\bT_D$}
\rput[bl](4.2153845,1.6911538){Analog Precoder $\bT_{\rm RF}$}
\psline[linecolor=black, linewidth=0.02](13.692307,-1.1854115)(13.134169,-1.1854115)
\psline[linecolor=black, linewidth=0.02](13.152708,-1.6319231)(13.152708,-0.8319231)(13.152708,-0.8319231)
\psline[linecolor=black, linewidth=0.02](13.152708,-1.6319231)(12.7527075,-1.6319231)
\psline[linecolor=black, linewidth=0.02](13.152708,-0.8319231)(12.7527075,-0.8319231)
\rput{-180.0}(25.105415,-3.2238462){\pscircle[linecolor=black, linewidth=0.02, dimen=outer](12.552708,-1.6119231){0.2}}
\rput{-180.0}(25.105415,-1.6238463){\pscircle[linecolor=black, linewidth=0.02, dimen=outer](12.552708,-0.81192315){0.2}}
\psline[linecolor=black, linewidth=0.02](12.372707,-1.6319231)(10.372707,-1.6319231)
\rput{-180.0}(20.785416,-2.8638463){\pscircle[linecolor=black, linewidth=0.02, dimen=outer](10.392708,-1.4319232){0.2}}
\psline[linecolor=black, linewidth=0.02](10.372707,-1.2319231)(11.706041,-1.2319231)(11.706041,0.36807686)(12.372707,0.36807686)
\psline[linecolor=black, linewidth=0.02](10.392775,-1.5853231)(10.392775,-1.3186564)(10.392775,-1.3186564)
\psline[linecolor=black, linewidth=0.02](10.526108,-1.4519898)(10.259441,-1.4519898)
\psline[linecolor=black, linewidth=0.02](13.692307,0.8145885)(13.134169,0.8145885)
\psline[linecolor=black, linewidth=0.02](13.152708,0.36807686)(13.152708,1.1680769)(13.152708,1.1680769)
\psline[linecolor=black, linewidth=0.02](13.152708,0.36807686)(12.7527075,0.36807686)
\psline[linecolor=black, linewidth=0.02](13.152708,1.1680769)(12.7527075,1.1680769)
\rput{-180.0}(25.105415,0.77615374){\pscircle[linecolor=black, linewidth=0.02, dimen=outer](12.552708,0.38807687){0.2}}
\rput{-180.0}(25.105415,2.3361537){\pscircle[linecolor=black, linewidth=0.02, dimen=outer](12.552708,1.1680769){0.2}}
\rput{-180.0}(20.785416,1.9361538){\pscircle[linecolor=black, linewidth=0.02, dimen=outer](10.392708,0.9680769){0.2}}
\psline[linecolor=black, linewidth=0.02](10.392708,0.83467686)(10.392708,1.1013435)(10.392708,1.1013435)
\psline[linecolor=black, linewidth=0.02](10.526041,0.9680102)(10.259375,0.9680102)
\rput{-181.10217}(23.38071,-1.8752688){\psarc[linecolor=black, linewidth=0.02, dimen=outer](11.699374,-0.82518977){0.13333334}{0.0}{180.0}}
\psline[linecolor=black, linewidth=0.02](11.839375,-0.8319231)(12.372707,-0.8319231)
\psline[linecolor=black, linewidth=0.02](11.572708,-0.8319231)(11.306041,-0.8319231)
\psline[linecolor=black, linewidth=0.02](11.3061075,-0.8319231)(11.3061075,0.7680769)(10.372774,0.7680769)(10.372774,0.7680769)
\psline[linecolor=black, linewidth=0.02](10.372707,1.1480769)(12.372707,1.1480769)(12.372707,1.1480769)
\rput{-180.0}(23.021313,-0.5151283){\psframe[linecolor=black, linewidth=0.02, linestyle=dashed, dash=0.17638889cm 0.10583334cm, dimen=outer](13.254246,1.4860256)(9.767067,-2.001154)}
\psline[linecolor=black, linewidth=0.02, arrowsize=0.05291667cm 2.0,arrowlength=1.4,arrowinset=0.0]{-<}(10.193795,-1.4457334)(9.988667,-1.4457334)(9.988667,-1.2406052)
\psline[linecolor=black, linewidth=0.02, arrowsize=0.05291667cm 2.0,arrowlength=1.4,arrowinset=0.0]{-<}(10.193795,0.9542666)(9.988667,0.9542666)(9.988667,1.1593949)
\psline[linecolor=black, linewidth=0.02, arrowsize=0.05291667cm 2.0,arrowlength=1.4,arrowinset=0.0]{->}(12.726666,0.94448715)(12.326667,1.4367948)
\psline[linecolor=black, linewidth=0.02, arrowsize=0.05291667cm 2.0,arrowlength=1.4,arrowinset=0.0]{->}(12.726666,0.14448713)(12.326667,0.6367948)
\psline[linecolor=black, linewidth=0.02, arrowsize=0.05291667cm 2.0,arrowlength=1.4,arrowinset=0.0]{->}(12.726666,-1.0555129)(12.326667,-0.5632052)
\psline[linecolor=black, linewidth=0.02, arrowsize=0.05291667cm 2.0,arrowlength=1.4,arrowinset=0.0]{->}(12.726666,-1.8555129)(12.326667,-1.3632052)
\psframe[linecolor=black, linewidth=0.02, dimen=outer, framearc=0.1](14.436052,1.1623948)(13.691867,0.41820878)
\rput[bl](13.826667,0.6967948){\rm RF}
\psframe[linecolor=black, linewidth=0.02, dimen=outer, framearc=0.1](14.436052,-0.8376052)(13.691867,-1.5817913)
\rput[bl](13.826667,-1.2832052){\rm RF}
\psline[linecolor=black, linewidth=0.02](14.427067,0.7903018)(14.985207,0.7903018)
\psline[linecolor=black, linewidth=0.02](14.427067,-1.2096982)(14.985207,-1.2096982)
\psframe[linecolor=black, linewidth=0.02, dimen=outer, framearc=0.1](16.77364,0.9289949)(14.9970665,-1.3976719)
\psline[linecolor=black, linewidth=0.02](16.735922,0.330261)(17.405235,0.330261)
\psline[linecolor=black, linewidth=0.02](16.735922,-0.669739)(17.405235,-0.669739)
\rput[bl](15.352381,0.41523412){\rm Digital}
\rput[bl](15.032464,0.004977723){\rm Combiner}
\rput[bl](15.448236,-0.81553507){$\bQ_D$}
\rput[bl](9.5,1.6056837){\rm{Analog Combiner} $\bQ_{\rm RF}$}
\rput[bl](8.022222,-0.3720052){$\bH_{M\times N}$}
\psdots[linecolor=black, dotsize=0.047876142](0.5979428,-0.0080518)
\psdots[linecolor=black, dotsize=0.047876142](0.5979428,-0.1519385)
\psdots[linecolor=black, dotsize=0.047876142](0.5979428,-0.25433852)
\psdots[linecolor=black, dotsize=0.04](3.5378666,-0.18980518)
\psdots[linecolor=black, dotsize=0.04](3.5378666,-0.2919385)
\psdots[linecolor=black, dotsize=0.04](3.5378666,-0.41433853)
\psdots[linecolor=black, dotsize=0.04](6.817867,-0.16980518)
\psdots[linecolor=black, dotsize=0.04](17.017866,0.06980518)
\psdots[linecolor=black, dotsize=0.04](17.017866,-0.0719385)
\psdots[linecolor=black, dotsize=0.04](17.017866,-0.19433852)
\psdots[linecolor=black, dotsize=0.04](13.817866,-0.16980518)
\psdots[linecolor=black, dotsize=0.04](13.817866,-0.2719385)
\psdots[linecolor=black, dotsize=0.04](13.817866,-0.39433852)
\psdots[linecolor=black, dotsize=0.04](10.497867,-0.16980518)
\psdots[linecolor=black, dotsize=0.04](10.497867,-0.2719385)
\psdots[linecolor=black, dotsize=0.04](10.497867,-0.39433852)

\psdots[linecolor=black, dotsize=0.04](12.517867,0.67980518)
\psdots[linecolor=black, dotsize=0.04](12.517867,0.7719385)
\psdots[linecolor=black, dotsize=0.04](12.517867,0.87433852)

\psdots[linecolor=black, dotsize=0.04](12.517867,-1.07980518)
\psdots[linecolor=black, dotsize=0.04](12.517867,-1.1719385)
\psdots[linecolor=black, dotsize=0.04](12.517867,-1.27433852)

\psdots[linecolor=black, dotsize=0.04](4.917867,0.65980518)
\psdots[linecolor=black, dotsize=0.04](4.917867,0.7519385)
\psdots[linecolor=black, dotsize=0.04](4.917867,0.85433852)

\psdots[linecolor=black, dotsize=0.04](4.917867,-1.19980518)
\psdots[linecolor=black, dotsize=0.04](4.917867,-1.2919385)
\psdots[linecolor=black, dotsize=0.04](4.917867,-1.39433852)

\rput[bl](2.6488667,-0.5720941){$N_{\rm t}^{\rm RF}$}
\rput[bl](14.048866,-0.3720941){$N_{\rm r}^{\rm RF}$}
\rput[bl](0.0,-0.3720052){$N_{\rm s}$}
\rput[bl](0.3,0.5720052){$\bs $}
\rput[bl](17.088889,-0.35200518){$N_{\rm s}$}
\rput[bl](17.088889,0.5720052){$\by^{\rm d} $}
\pscircle[linecolor=black, linewidth=0.02, dimen=outer](7.0666666,-1.4832052){0.2}
\psline[linecolor=black, linewidth=0.02](7.0666666,-1.3498052)(7.0666666,-1.6164719)(7.0666666,-1.6164719)
\psline[linecolor=black, linewidth=0.02](6.9333334,-1.4831386)(7.2,-1.4831386)
\psline[linecolor=black, linewidth=0.02](6.1532664,0.3167948)(6.1532664,-1.2832052)(7.0866,-1.2832052)(7.0866,-1.2832052)
\psline[linecolor=black, linewidth=0.02](7.0866666,-1.6632051)(5.0866666,-1.6632051)(5.0866666,-1.6632051)
\psline[linecolor=black, linewidth=0.02, arrowsize=0.05291667cm 2.0,arrowlength=1.4,arrowinset=0.0]{-<}(7.253267,-1.4564718)(7.519933,-1.4564718)(7.519933,-1.1898052)
\psframe[linecolor=black, linewidth=0.02, linestyle=dashed, dash=0.17638889cm 0.10583334cm, dimen=outer](7.6923075,1.4860256)(4.205128,-2.001154)
\psdots[linecolor=black, dotsize=0.04](6.817867,-0.2719385)
\psdots[linecolor=black, dotsize=0.04](6.817867,-0.39433852)
\psframe[linecolor=black, linewidth=0.02, dimen=outer, framearc=0.1](3.7760527,1.0423762)(3.0318666,0.29819018)
\rput[bl](3.1666667,0.59679484){\rm RF}
\end{pspicture}
}
\caption{System model for data transmission. }
 \label{system_model1}
\end{figure*}
\section{Simulation Results} \label{sec:Simulations}
 \subsection{Experiment Setup}
 We consider a correlated massive MIMO\footnote{\label{ft8} For example, this could be a massive MIMO system implemented with less than  half a wavelength antenna spacing. Note that  a massive MIMO system, even with antenna spacing greater than  half a wavelength, may exhibit correlations among the transmit and receive antennas mainly due to lack of a sufficient number of scatters around the transmitter or receiver \cite{6415388}. In the case where the propagation environment is not scattering-rich, significant correlation exists among the transmit antennas  and among receiver antennas \cite{Rappaport_book1, Janaswamy_book}, \cite[Section~8.2.2]{Ravi_course}). Similar analysis has been carried out in \cite{4277071}, where the inter-element correlation is shown to depend not only on  the antenna spacing but also on the  azimuth power spectrum and on the mean AOA/AOD of each clustered scatters.}
system in a rich-scattering environment. The spatial correlation matrices  $\bS_{\rm t}$ and $\bS_{\rm r}$ are given as \cite{4803746}:
\begin{align}
\bS_{\rm t} (i,j)= \rho_{\rm t}^{|i-j|}  \,\,\,\, {\rm and } \,\,\,\, \bS_{\rm r} (i,j)= \rho_{\rm r}^{|i-j|}, \nonumber
\end{align}
where we assume $\rho_{\rm r}=\rho_{\rm t} \triangleq |\rho | e^{j\theta}$,  $|\rho |<1$, and $\theta\in [0, 2\pi]$.

Given the hybrid precoder and combiner design provided in Sections \ref{HB_Precoder_design} and \ref{HB_Combiner_design}, the estimated $\bH$, denoted as $\hat{\bH}$, can be directly obtained from \eqref{kron_ch_decompose1} and  \eqref{MMSE_estimate}. To quantify the performance of the proposed channel estimation method, we use the normalized mean squared error in channel estimation defined as $ {\rm NMSE} = {||\hat{\bH}- {\bH}||_2^2}/{||{\bH}||_2^2}$ . We also compare the spectral efficiency of the proposed water-filling-type scheme with that of an equal power scheme and with the perfect CSI case. Both schemes use the same set of beamformers but different power allocations. In the equal power allocation scheme, we set $\alpha_q=E_T/Q$, for $q=1,2, \ldots, Q$, while in water-filling scheme $\alpha_q$, is obtained using Algorithm \ref{alg_water}. The annotated numbers on each simulation point indicate that, out of $Q$, how many training time slots are essentially being used.

To evaluate the spectral efficiency, we use the communication scheme  shown in Fig.~\ref{system_model1}. The system in  Fig.~\ref{system_model1} is the same as the one in Fig.~\ref{system_model}, except that we use $N_{\rm s}\leq N^{\rm RF}$ data streams at the input and collect $N_{\rm s}$ data streams at the output. The received signal $\by$ is written in terms of $\bs$ as:
\begin{align} \label{received signal_final}
\by^{\rm d} = \bQ_{\rm t}^H \bH \bT_{\rm t} \bs+ \bQ_{\rm t}{\bn},
\end{align}
where $\bs \triangleq [s_1 \,\, s_2 \,\,\ldots\,\ s_{N_{\rm s}} ]^T$ is the vector of transmit symbols, $\by^{\rm d} \triangleq [y^{\rm d}_1 \,\, y^{\rm d}_2 \,\,\ldots\,\ y^{\rm d}_{N_{\rm s}} ]^T$ is the vector of the final processed received signal. We define  $\bQ_{\rm t}\triangleq \bQ_{\rm RF}\bQ_D$, and $\bT_{\rm t}\triangleq \bT_{\rm RF}\bT_D$, where $\bT_{\rm RF}\in \mathbb{C}^{N\times N^{\rm RF}}$ and $\bQ_{\rm RF}\in \mathbb{C}^{M\times N^{\rm RF}}$ are, respectively, the analog precoder and combiner, implemented using the phase shifters, i.e., $|\bQ_{\rm RF}(i,j)|^2=1$, $|\bT_{\rm RF}(i,j)|^2=1$. Further, $\bT_D \in \mathbb{C}^{N^{\rm RF} \times N_{\rm s}}$ and $\bQ_D \in \mathbb{C}^{N^{\rm RF} \times N_{\rm s}}$ are, respectively, baseband digital linear precoder and combiner.

Before we calculate the spectral efficiency, we first need to find the optimal hybrid beamformers, denoted as $\bT_{\rm t}^{\rm o}$ and $\bQ^{\rm o}$, using the channel matrix estimates  $\hat{\bH}$. This can be done by any of the beamforming design approaches proposed in \cite{Foad_JSTSP_2016} and \cite{7397861}. To calculate the spectral efficiency, let $\bH_{\rm e}$ denote the error in channel estimation, defined as $\bH_{\rm e}\triangleq\bH-\hat{\bH}$; then $\by^{\rm d}$ in \eqref{received signal_final} can be written as:
\begin{align} \label{received signal_est}
\by^{\rm d}&={\bQ_{\rm t}^{\rm o}}^H \left(\hat{\bH}+\bH_{\rm e}\right) \bT_{\rm t}^{\rm o} \bs+ {\bQ_{\rm t}^{\rm o}}^H{\bn}\nonumber\\
&={\bQ_{\rm t}^{\rm o}}^H \hat{\bH} \bT_{\rm t}^{\rm o} \bs+ {\bQ_{\rm t}^{\rm o}}^H {\bH}_{\rm e} \bT_{\rm t}^{\rm o} \bs+{\bQ_{\rm t}^{\rm o}}^H{\bn}.
\end{align}
Therefore, the spectral efficiency is calculated as:
\begin{align} \label{spectral_eff} {\rm
SE}= \eta \log\left|\bI+\bE^{-1}{\bQ_{\rm t}^{\rm o}}^H \hat{\bH} \bT_{\rm t}^{\rm o}{\bT_{\rm t}^{\rm o}}^H\hat{\bH}^H \bQ_{\rm t}^{\rm o}\right|,
\end{align}
where $\bE \triangleq {\bQ_{t}^{\rm o}}^H \left({\bH}_{\rm e} \bT_{\rm t}^{\rm o} {\bT_{\rm t}^{\rm o}}^H{\bH}_{\rm e}^H + \sigma^2_{n} \bI\right){\bQ_{\rm t}^{\rm o}}$. The pre-log factor $\eta$ is used to denote the ratio of effective channel use and given by $\eta \triangleq 1- Q_{\rm nz}/T_{\rm c}B_{w}$,
where $B_{w}$ is the communication bandwidth, $T_{\rm c}$ is the coherence time, calculated as $T_{\rm c} \triangleq \sqrt{\frac{9}{16\pi}}\frac{1}{f_m}$ \cite{Rappaport_book}, $f_m\triangleq \frac{v}{c}f_c$ is the maximum Doppler spread, $v$ is the velocity, $c$ is the speed of light, and $f_c$ is the carrier frequency. The performance of the proposed channel estimation method in terms of the channel estimation accuracy and the achievable spectral efficiency is investigated through the simulations in the next subsections. The simulation parameters are given in Table \ref{tab:sim_table1}.
\vspace{-3 mm}
\begin{table}[h]
\caption {Simulation parameters} \label{tab:sim_table1}
\centering
\begin{tabular}{ c|c|c|c|c|c|c|c|c }
\hline
$|\rho|$&$N$& $M$&$N_{\rm s}$&$\sigma_{\rm n}^2$&$\varepsilon$&$B_w$&$v$&$f_c$\\
\hline
0.8&32&16& 4&1& $10^{-6}$&1.5 MHz&50 Km/h&2 GHz \\
 \hline
 \end{tabular}
\end{table}

\begin{figure*}[t]
    \centering
    \begin{minipage}[t]{.47\textwidth}
        \centering{
\psfrag{MSE (dB)}[c][c]{NMSE (dB)}
\psfrag{Total transmit power per stream(dB)}[c][c]{$E_T/\sigma_{\rm n}^2$ (dB)}
\psfrag{Equal power, FD}{ \small{Equal Power, FD}}
\psfrag{Equal power, NRF=8, HB}{\small{ Equal Power, $N^{\rm RF}=8$, HB, {PE-AltMin \cite{7397861}}}}
\psfrag{Water-filling, FD}{\small{ Water-filling, FD}}
\psfrag{Water-filling, NRF=8, HB}{\small{ Water-filling, $N^{\rm RF}=8$, HB, PE-AltMin \cite{7397861}}}
\psfrag{Water-filling, NRF=8, MOAltmin, HB, 123456789012}{\small{ Water-filling, $N^{\rm RF}=8$, HB, MO-AltMin \cite{7397861}}}
\resizebox{!}{6.6cm}{\includegraphics{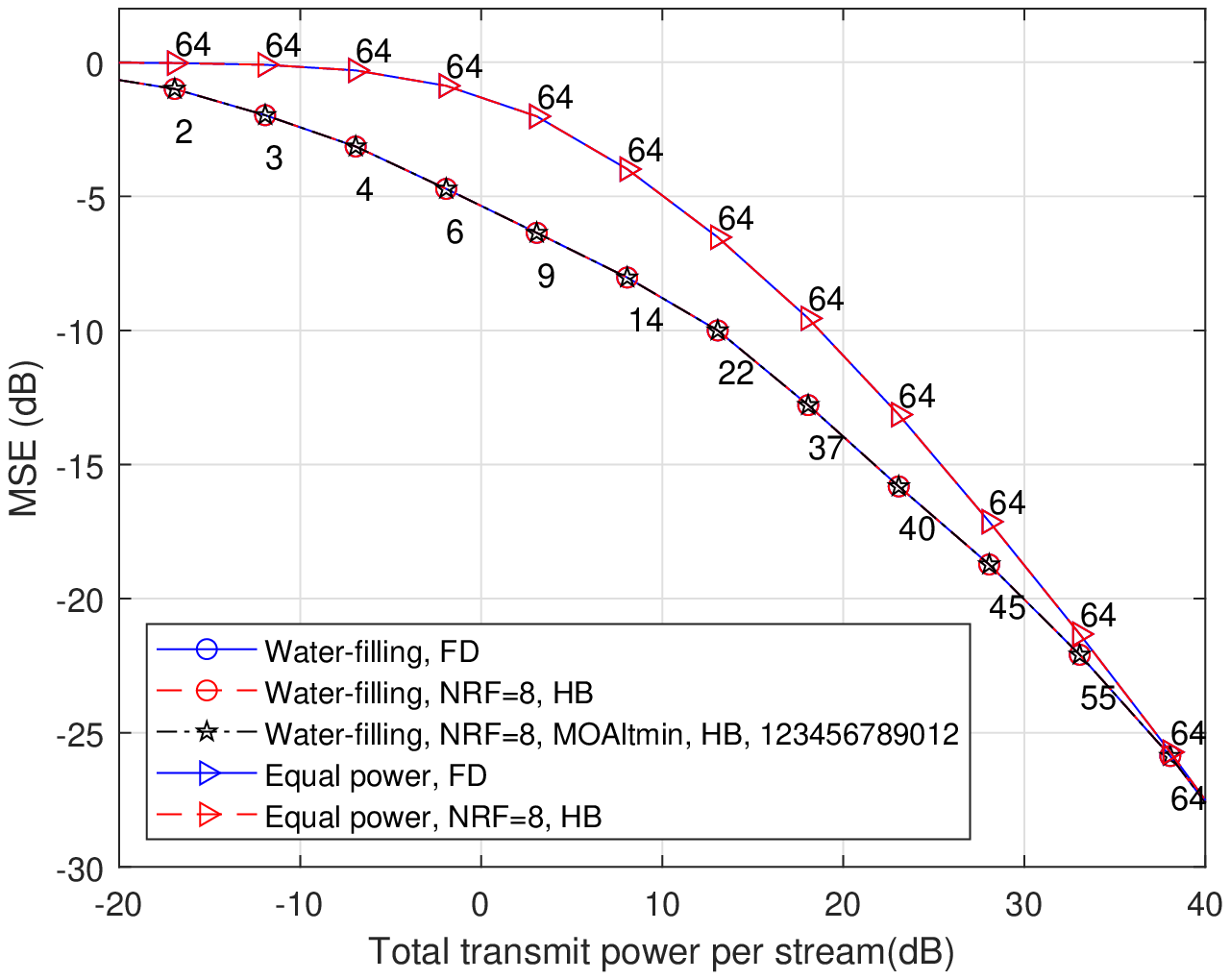}}
\caption{NMSE vs $E_T/\sigma_{\rm n}^2$, $M=16$, $N=32$, $N_{\rm s}=4$, $|\rho|=.8$, $Q=64$.}
\label{MSE_vs_P}
}
    \end{minipage}%
    \hfill
    \begin{minipage}[t]{0.47\textwidth}
       \centering{
\psfrag{Spectral efficiency }{SE (bits/s/Hz)}
\psfrag{Total transmit power (dB)}[c][c]{$E_T/\sigma_{\rm n}^2$ (dB)}
\psfrag{Perfect CSI, FD}{\small{Perfect CSI, FD}}
\psfrag{Total transmit power per stream(dB)}[c][c]{$E_T/\sigma_{\rm n}^2$ (dB)}
\psfrag{Equal power, FD}{ \small{Equal Power, FD}}
\psfrag{Equal power, NRF=8, HB}{\small{ Equal Power, $N^{\rm RF}=8$, HB, {PE-AltMin \cite{7397861}}}}
\psfrag{Water-filling, FD}{\small{ Water-filling, FD}}
\psfrag{Water-filling, NRF=8, HB}{\small{ Water-filling, $N^{\rm RF}=8$, HB, {PE-AltMin \cite{7397861}}}}
\psfrag{Water-filling, NRF=8, MOAltmin, HB, 123456789}{\small{ Water-filling, $N^{\rm RF}=8$, HB, MO-AltMin \cite{7397861}}}
\resizebox{!}{6.6cm}{\includegraphics{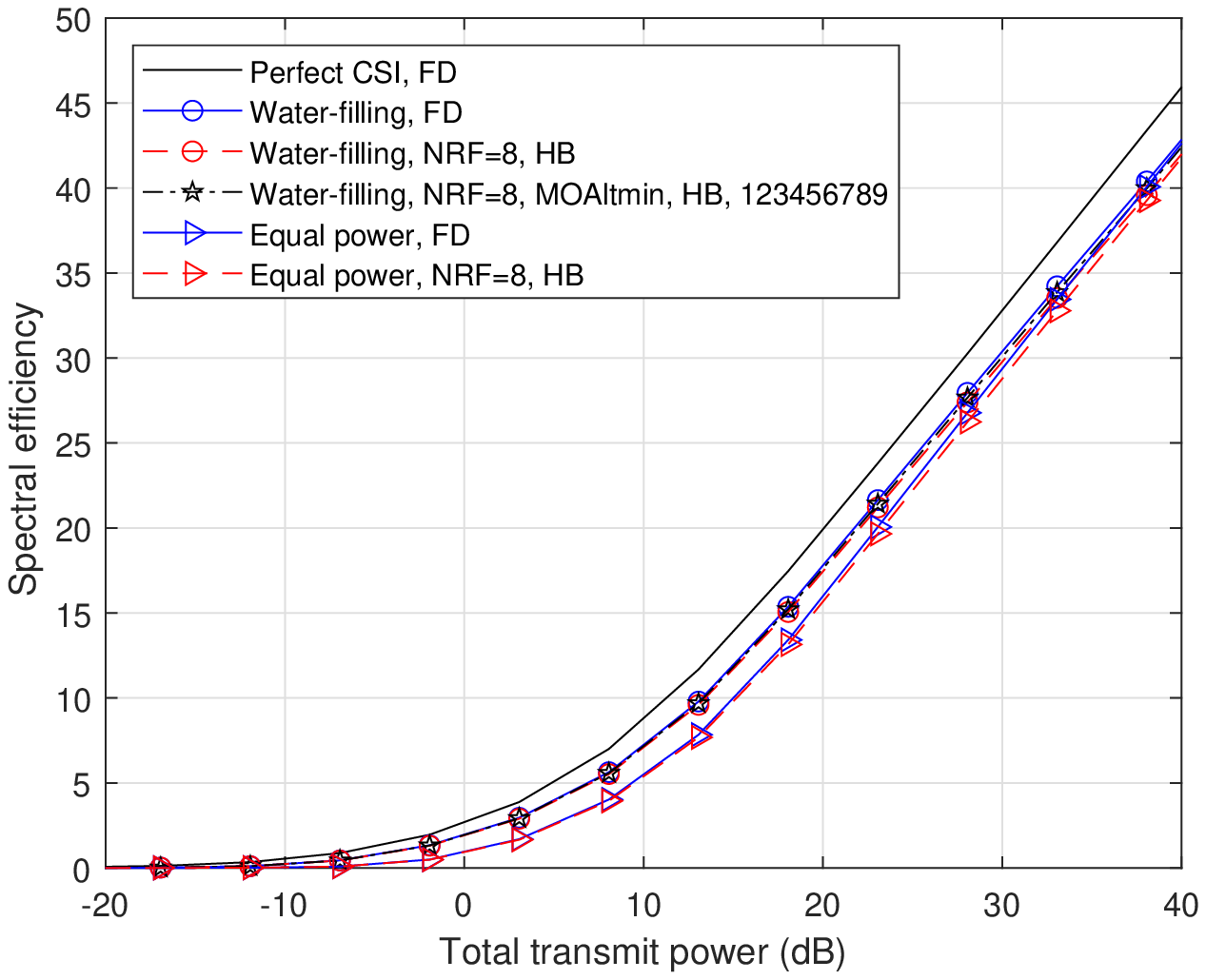}}
\caption{Spectral efficiency vs $E_T/\sigma_{\rm n}^2$, $M=16$, $N=32$, $N_{\rm s}=4$, $|\rho|=.8$, $Q=64$.}
\label{Rate_vs_P}}
    \end{minipage}
\end{figure*}

\begin{figure*}[t]
    \centering
    \begin{minipage}[t]{.47\textwidth}
        \centering{
\psfrag{MSE}{NMSE (dB)}
\psfrag{Qt}{ $Q$}
\psfrag{Equal power, FD}{ \small{Equal Power, FD}}
\psfrag{Equal power, NRF=8, HB}{\small{ Equal Power, $N^{\rm RF}=8$, HB, {PE-AltMin \cite{7397861}}}}
\psfrag{Water-filling, FD}{\small{ Water-filling, FD}}
\psfrag{Water-filling, NRF=8, HB}{\small{ Water-filling, $N^{\rm RF}=8$, HB, {PE-AltMin \cite{7397861}}}}
\psfrag{Water-filling, NRF=8, MOAltmin, HB, 123456789}{\small{ Water-filling, $N^{\rm RF}=8$, HB, MO-AltMin \cite{7397861}}}
\resizebox{!}{6.6cm}{\includegraphics{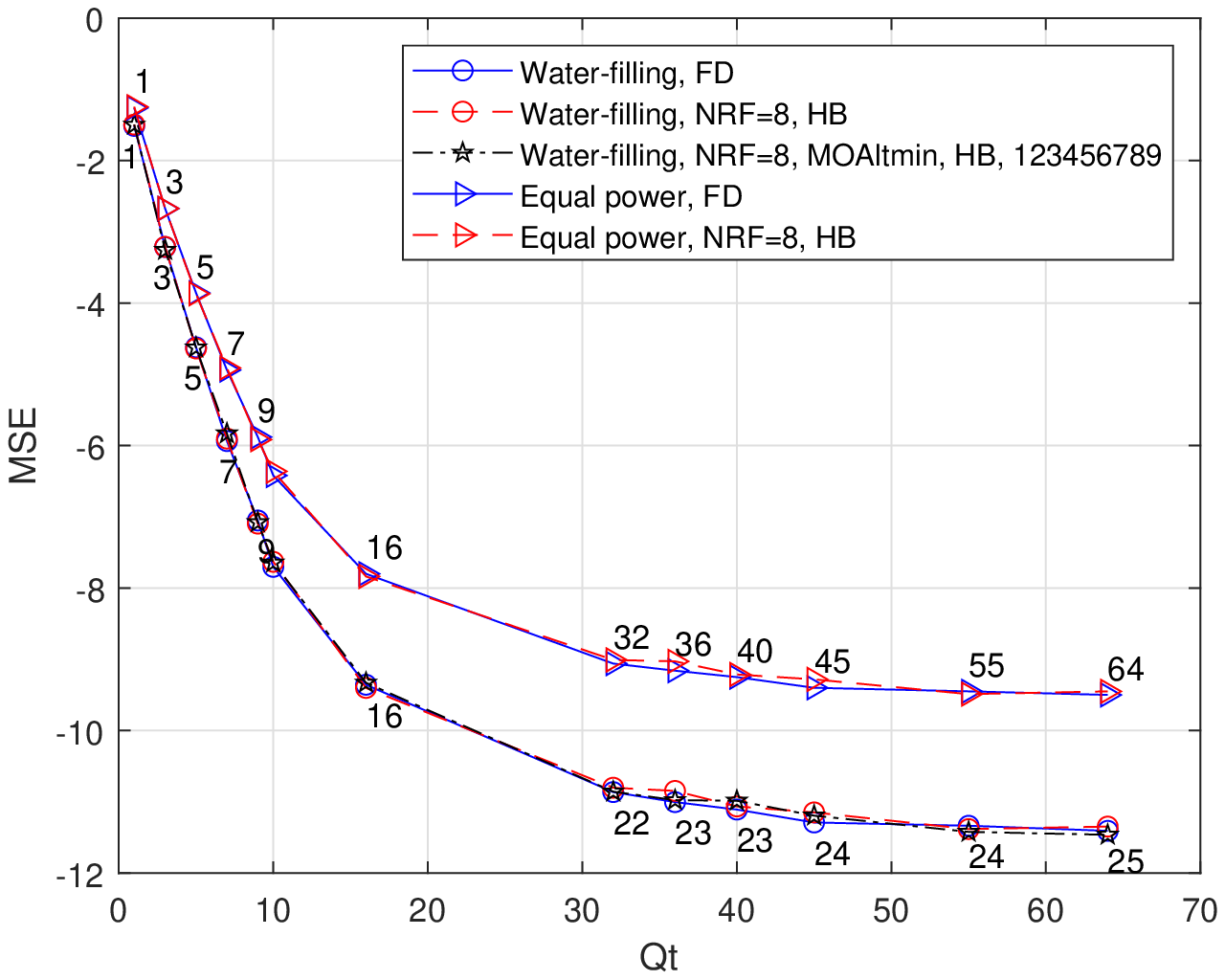}}
\caption{NMSE vs $Q$, $M=16$, $N=32$, $N_{\rm s}=4$, $|\rho|=.8$, $E_T/\sigma_{\rm n}^2 = 64$.}
\label{MSE_vs_Qt}}
    \end{minipage}%
    \hfill
    \begin{minipage}[t]{0.47\textwidth}
       \centering{
\psfrag{SE}[c][c]{SE (bits/s/Hz)}
\psfrag{Qt}{$Q$}
\psfrag{Perfect CSI, FD}{\small{Perfect CSI, FD}}
\psfrag{Equal power, FD}{ \small{Equal Power, FD}}
\psfrag{Equal power, NRF=8, HB}{\small{ Equal Power, $N^{\rm RF}=8$, HB, {PE-AltMin \cite{7397861}}}}
\psfrag{Water-filling, FD}{\small{ Water-filling, FD}}
\psfrag{Water-filling, NRF=8, HB}{\small{ Water-filling, $N^{\rm RF}=8$, HB, {PE-AltMin \cite{7397861}}}}
\psfrag{Water-filling, NRF=8, MOAltmin, HB, 123456789}{\small{ Water-filling, $N^{\rm RF}=8$, HB, MO-AltMin \cite{7397861}}}
\resizebox{!}{6.6cm}{\includegraphics{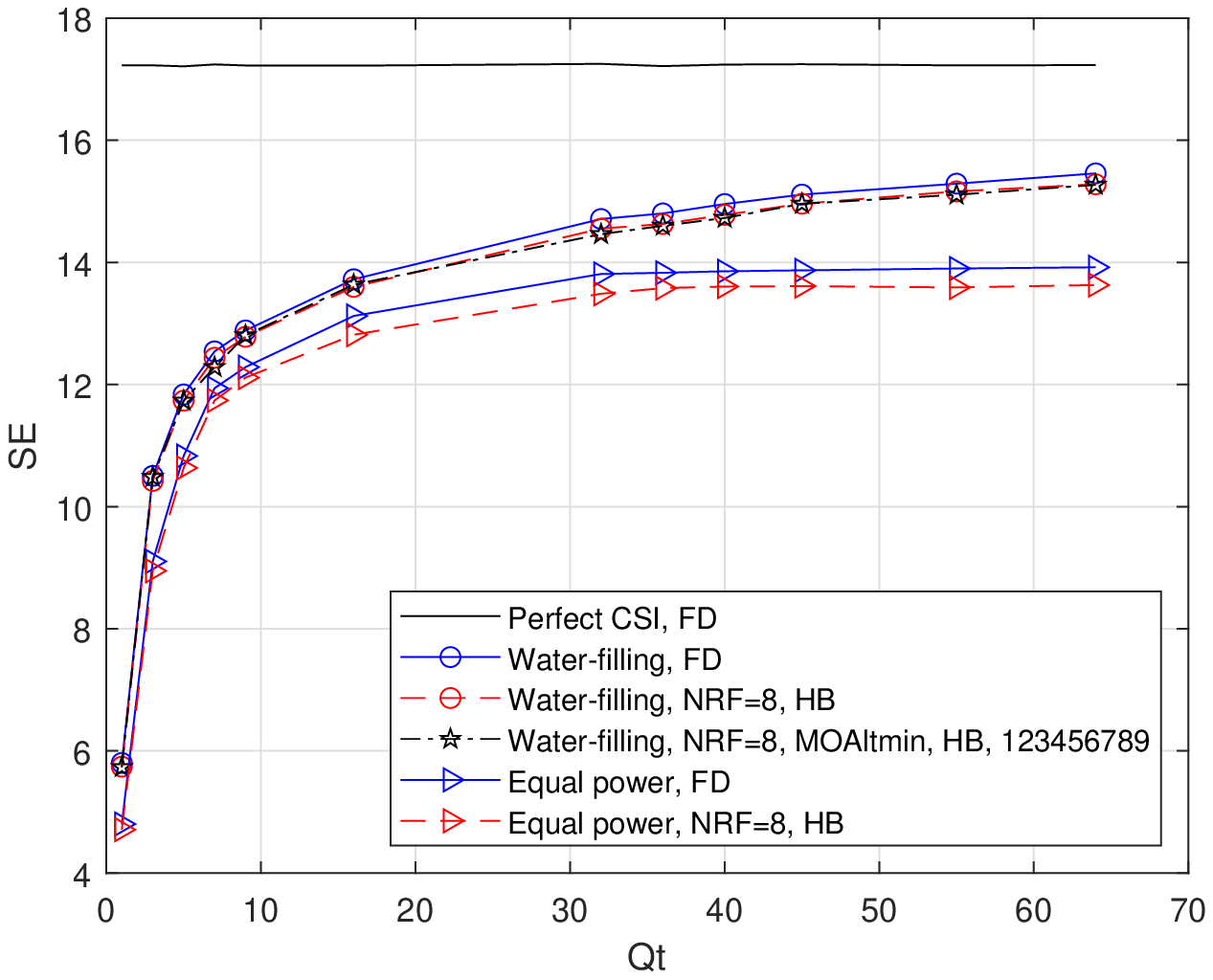}}
\caption{Spectral efficiency vs $Q$, $M=16$, $N=32$, $N_{\rm s}=4$, $|\rho|=.8$, $E_T/\sigma_{\rm n}^2 = 64$.}
\label{Rate_vs_Qt}}
    \end{minipage}
\end{figure*}

\label{sim_explanation} In all our simulations, for the sake of comparison, we provide simulation results for the MO-AltMin algorithm of \cite{7397861} that we use to solve \eqref{Joint_min}, i.e., to design the hybrid combiner.
\subsection{NMSE and Spectral Efficiency vs $E_T/\sigma_{\rm n}^2$}
In the first set of our numerical experiments, we plot the ${\rm NMSE}$ and the corresponding spectral efficiency vs. $E_T/\sigma_{\rm n}^2$. Here, we assume that $Q=MN/N^{\rm RF}$. The ${\rm NMSE}$ and the corresponding spectral efficiency vs. $E_T/\sigma_{\rm n}^2$ are depicted in Figs. \ref{MSE_vs_P} and \ref{Rate_vs_P}, respectively. We observe that for very low $E_T$, both the water-filling technique  and the equal power scheme yield almost the same ${\rm NMSE}$, because there is not enough power for channel estimation. For medium $E_T$, the water-filling method better utilizes the training energy. Instead of allocating an equal amount of power to all eigen-directions, the water-filling scheme allocates more power to the significant eigen-directions, and this performs better within fewer training time slots. At high $E_T$, both schemes perform the same, as they allocate the same amount of power to each eigen-direction. The hybrid beamforming structure performs so close to its fully-digital counterpart. This indicates the efficacy of hybrid design provided in sections \ref{HB_Precoder_design} and \ref{HB_Combiner_design}.

It is worth noting that, while the number of training time slots is fixed in the equal power scheme, the water-filling scheme utilizes fewer training time slots. Since the beamformers are aligned to the channel-eigen-directions, the channel is estimated along its eigen-directions at each time slot. At low to medium $E_T$, instead of estimating the channel components in all eigen-directions, only the components in stronger directions are estimated. The rest of the eigen-directions are not important from an MMSE point of view, hence not estimated. That is, the water-filling scheme requires fewer training time slots. Spectral efficiency-wise, we can save more time slots for data transmission.

\subsection{NMSE and Spectral Efficiency vs $Q$}
To further assess the performance of the proposed channel estimation method, we plot the ${\rm NMSE}$ and the corresponding spectral efficiency vs $Q$ in Figs. \ref{MSE_vs_Qt} and \ref{Rate_vs_Qt}. For each simulation point, we consider $E_T/\sigma_{\rm n}^2 = 64$, which has to be distributed among $Q$ time slots. This implies that, the equal power scheme allocates an equal amount of energy to all $64$ time slots.

The ${\rm NMSE}$ plot in Fig. \ref{MSE_vs_Qt} shows an improvement in channel estimation as we increase $Q$. The reason is that, with a higher $Q$, we estimate more content of the channel. However, due to the correlation among the channel entries, the rate of improvement in channel estimation declines for large $Q$. Note that, given $Q$, our channel estimation scheme aims to estimate the significant portion of the channel within $Q$ training time slots. That is, we observe a better rate of improvement for small $Q$ compared to large $Q$. As expected, the equal power scheme consumes all available $Q$ with equal amounts of power in each time slot, while the water-filling scheme employs much fewer training time slots.

In terms of the spectral efficiency,  Fig. \ref{Rate_vs_Qt} shows that, in equal power scheme, there is no further significant improvement in spectral efficiency as we increase $Q $ beyond 30 time slots. It can be seen that estimating more components of the channel matrix for large values of $Q$ does not significantly affect the spectral efficiency. Indeed, with the equal power scheme, resources are being allocated to estimate the unnecessary and insignificant components of the channel. Instead, from the MMSE point of view, the water-filling scheme allocates more resources to estimate the significant portion of the channel, which has more effect on the spectral efficiency.

\begin{figure*}
    \centering
    \begin{minipage}[t]{.47\textwidth}
      \centering{
\psfrag{MSE}{NMSE (dB)}
\psfrag{rho}{$|\rho|$}
\psfrag{Equal power, FD}{ \small{Equal Power, FD}}
\psfrag{Equal power, NRF=8, HB}{\small{ Equal Power, $N^{\rm RF}=8$, HB, {PE-AltMin \cite{7397861}}}}
\psfrag{Water-filling, FD}{\small{ Water-filling, FD}}
\psfrag{Water-filling, NRF=8, HB}{\small{ Water-filling, $N^{\rm RF}=8$, HB, {PE-AltMin \cite{7397861}}}}
\psfrag{Water-filling, NRF=8, MOAltmin, HB, 123456789}{\small{ Water-filling, $N^{\rm RF}=8$, HB, MO-AltMin \cite{7397861}}}
\resizebox{!}{6.6cm}{\includegraphics{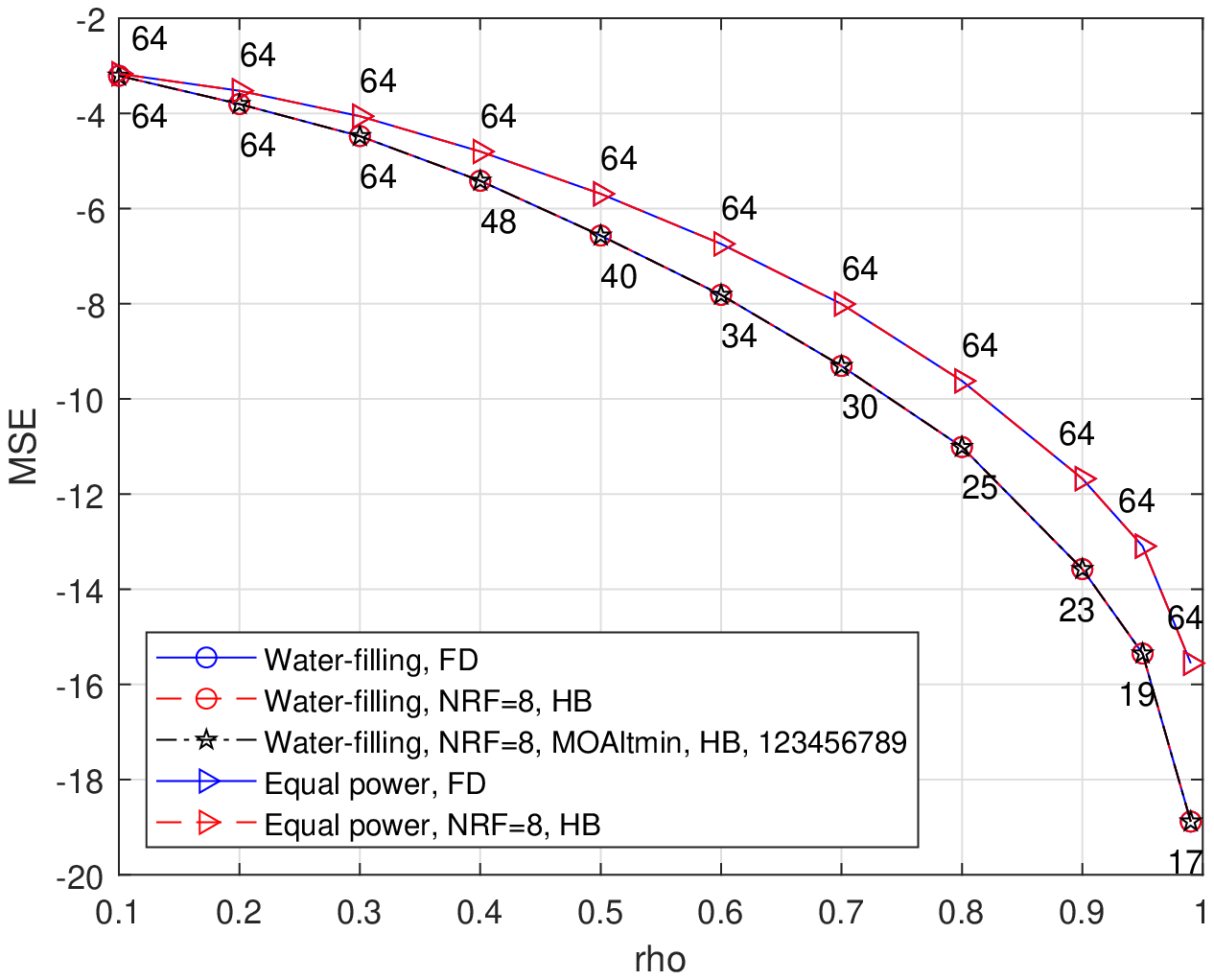}}
\caption{NMSE vs $|\rho|$, $M=16$, $N=32$, $N_{\rm s}=4$, $Q=64$, $E_T/\sigma_{\rm n}^2 = 64$ .}
\label{MSE_vs_rho}}
    \end{minipage}%
    \hfill
    \begin{minipage}[t]{0.47\textwidth}
       \centering{
\psfrag{SE}[c][c] { SE (bits/s/Hz)}
\psfrag{rho}{ $|\rho|$}
\psfrag{Perfect CSI, FD}{\small{Perfect CSI, FD}}
\psfrag{Perfect CSI, NRF=8, HB}{\small{Perfect CSI, $N^{\rm RF} = 8$, HB, {PE-AltMin \cite{7397861}}}}
\psfrag{Equal power, FD}{\small{Equal Power, FD}}
\psfrag{Equal power, NRF=8, HB}{\small{Equal Power, $N^{\rm RF}=8$, HB, {PE-AltMin \cite{7397861}}}}
\psfrag{Water-filling, FD}{\small{Water-filling, FD}}
\psfrag{Water-filling, NRF=8, HB}{\small{Water-filling, $N^{\rm RF}=8$, HB, {PE-AltMin \cite{7397861}}}}
\psfrag{Water-filling, NRF=8, MOAltmin, HB, 123456789}{\small{Water-filling, $N^{\rm RF}=8$, HB, MO-AltMin \cite{7397861}}}
\resizebox{!}{6.6cm}{\includegraphics{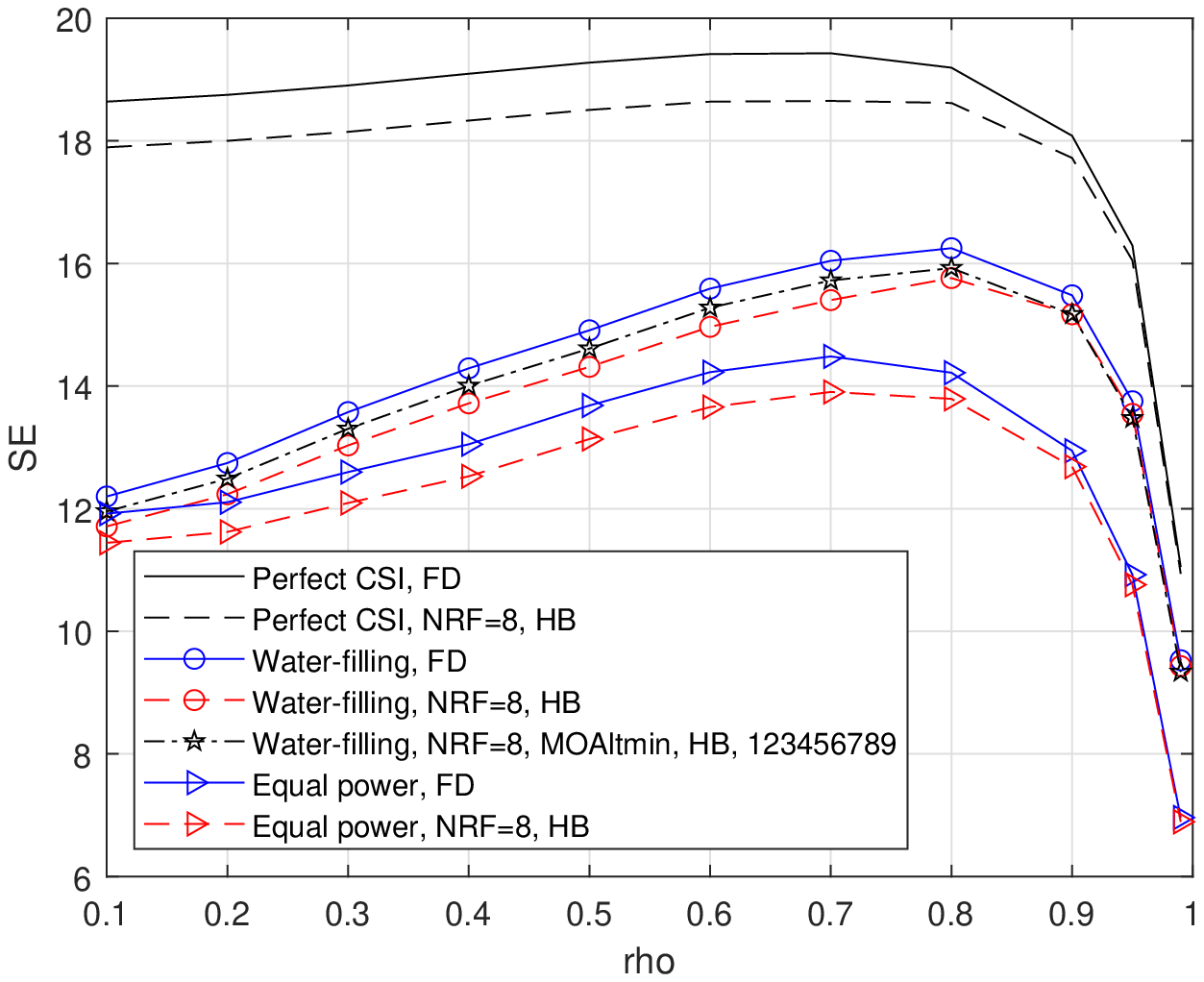}}
\caption{Spectral efficiency vs $|\rho|$, $M=16$, $N=32$, $N_{\rm s}=4$, $Q=64$, $E_T/\sigma_{\rm n}^2 = 64$.}
\label{Rate_vs_rho}}
    \end{minipage}
\end{figure*}

\begin{figure*}
    \centering
    \begin{minipage}[t]{.47\textwidth}
        \centering{
\psfrag{MSE (dB)}[c][c]{NMSE (dB)}
\psfrag{Total transmit power per stream(dB)}[c][c]{$E_T/\sigma_{\rm n}^2$ (dB)}
\psfrag{Equal power, FD}{\small{Equal Power, FD}}
\psfrag{Equal power, NRF=16, HB}{\small{Equal Power, $N^{\rm RF}=16$, HB, {PE-AltMin \cite{7397861}}}}
\psfrag{Water-filling, FD}{\small{Water-filling, FD}}
\psfrag{Water-filling, NRF=16, HB, 123456789123456789}{\small{Water-filling, $N^{\rm RF}=16$, HB,  {PE-AltMin \cite{7397861}}}}
\resizebox{!}{6.6cm}{\includegraphics{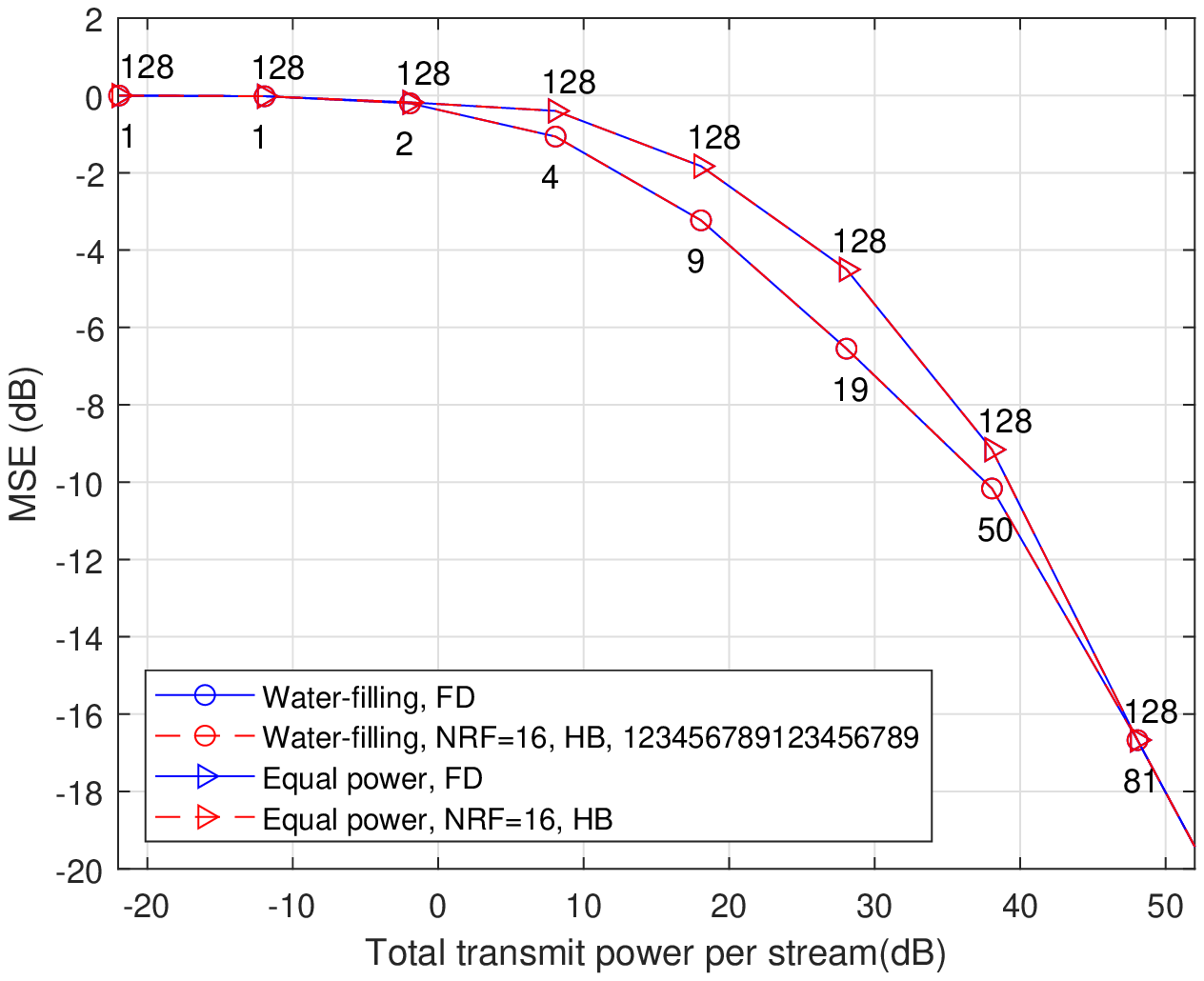}}
\caption{NMSE vs $E_T/\sigma_{\rm n}^2$, $M=32$, $N=64$, $N_{\rm s}=8$, $|\rho|=.8$, $Q=128$.}\label{MSE_vs_PM32N64}
}
\end{minipage}%
\hfill
\begin{minipage}[t]{0.47\textwidth}
       \centering{
\psfrag{Spectral efficiency }[c][c]{SE (bits/s/Hz)}
\psfrag{Total transmit power (dB)}[c][c]{$E_T/\sigma_{\rm n}^2$ (dB)}
\psfrag{Perfect CSI, FD}{\small{Perfect CSI, FD}}
\psfrag{Perfect CSI, NRF=16, HB}{\small{Perfect CSI, $N^{\rm RF}=16$, HB, {PE-AltMin \cite{7397861}}}}
\psfrag{Equal power, FD}{\small{Equal Power, FD}}
\psfrag{Equal power, NRF=16, HB}{\small{Equal Power, $N^{\rm RF}=16$, HB, {PE-AltMin \cite{7397861}}}}
\psfrag{Water-filling, FD}{\small{Water-filling, FD}}
\psfrag{Water-filling, NRF=16, HB, 123456789123456789}{\small{Water-filling, $N^{\rm RF}=16$, HB, {PE-AltMin \cite{7397861}}}}
\resizebox{!}{6.6cm}{\includegraphics{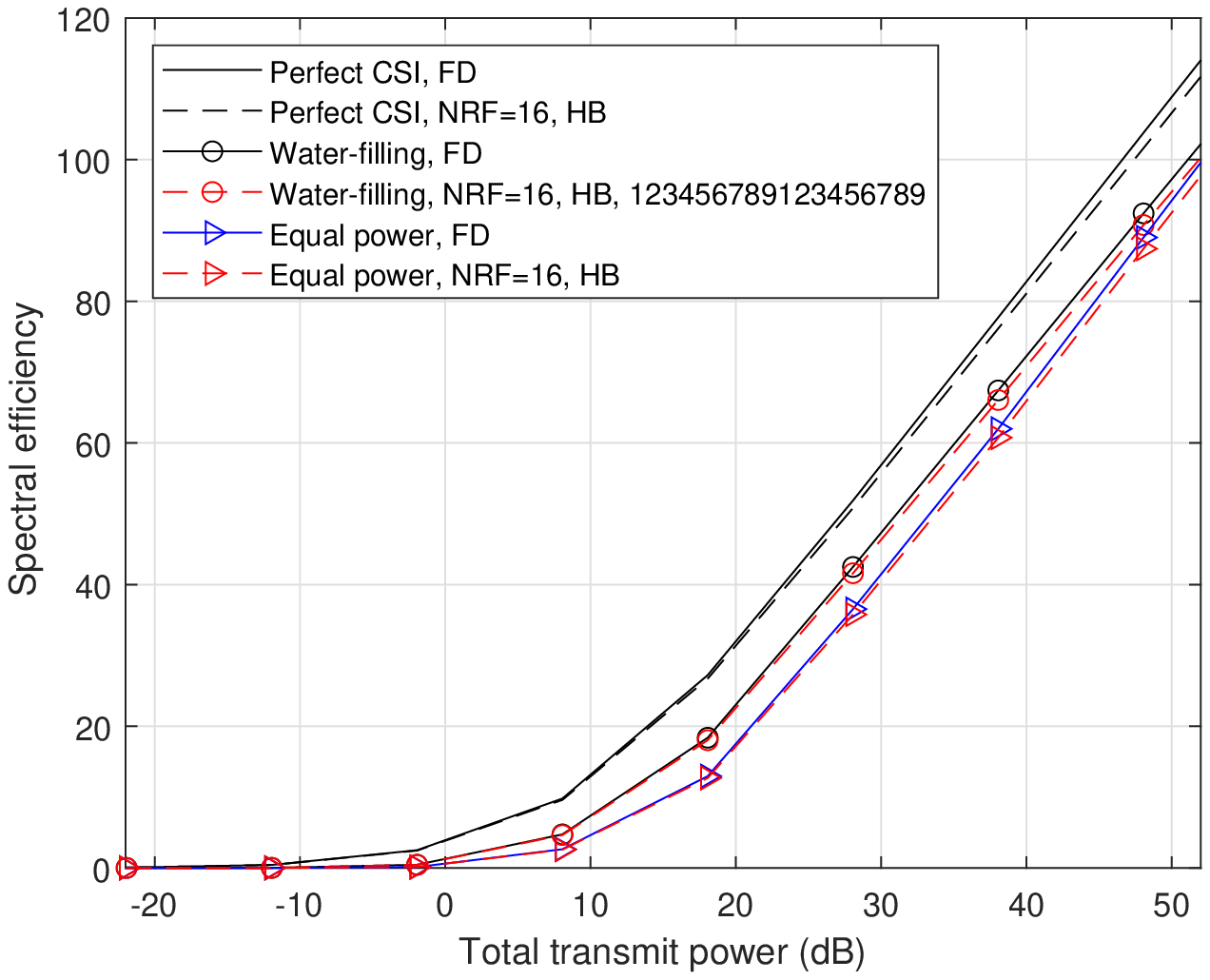}}
\caption{Spectral efficiency vs $E_T/\sigma_{\rm n}^2$, $M=32$, $N=64$, $N_{\rm s}=8$, $|\rho|=.8$, $Q=128$.}
\label{Rate_vs_PM32N64}}
    \end{minipage}
\end{figure*}

\subsection{NMSE and Spectral Efficiency vs $|\rho|$}\label{Sim_secD}
${\rm NMSE}$ and spectral efficiency vs $|\rho|$ are depicted in Figs. \ref{MSE_vs_rho} and \ref{Rate_vs_rho}. In this set of numerical examples, we assume $Q=MN/N^{\rm RF}$. We also consider $E_T/\sigma_{\rm n}^2 = 64$ for training to be distributed among the $Q$ training time slots. As we increase $|\rho|$, the ${\rm NMSE}$ decreases for both methods. The reason is that for high $|\rho|$, the channel matrix entries become more correlated. By exploiting such a higher correlation, the virtual channel in the eigen-domain can be represented with fewer parameters. Therefore, for fixed energy used for channel training, we have better estimation in both schemes. At relatively lower $|\rho|$, the water-filling scheme yields a better estimate of the channel compared to the equal power scheme. This is mainly due to the fact that, at low $|\rho|$, all eigen-directions have almost the same significance from the MMSE point of view. Therefore, both scenarios allocate an equal amount of power in all eigen-directions. However, as channel entries become more correlated, different eigen-directions have different levels of significance. Unlike the equal power allocation scheme, the water-filling scheme allocates more energy to stronger eigen-directions, resulting in a better estimate of the channel.

The spectral efficiency vs $|\rho|$ is depicted in Fig. \ref{Rate_vs_rho}. For fixed $E_T$, we observe that, for very small $|\rho|$, both water-filling and equal power schemes perform almost the same, because they have quite similar NMSE. However, as we increase $|\rho|$, there is a significant gap between these two schemes. This can be explained by the fact that at low $|\rho|$, all eigen-directions have similar significance as opposed to the case with high $|\rho|$, where the channels in certain directions are stronger than the others. Therefore, exploiting the correlation among the channel entries, not all eigen-directions receive same amount of energy. Instead, significant eigen-directions receive more energy. At very high $|\rho|$, although we obtain a better NMSE, the spectral efficiency decreases because, at this range of $|\rho|$, the channel tends to be rank-one, which has lower multiplexing gain.

\begin{figure*}
    \centering
    \begin{minipage}[t]{.47\textwidth}
        \centering{
\psfrag{MSE (dB)}[c][c]{NMSE (dB)}
\psfrag{Total transmit power per stream(dB)}[c][c]{$|\rho|$}
\psfrag{Equal power, FD}{\small{Equal Power, FD}}
\psfrag{Equal power, NRF=16, HB}{\small{Equal Power, $N^{\rm RF}=16$, HB, {PE-AltMin \cite{7397861}}}}
\psfrag{Water-filling, FD}{\small{Water-filling, FD}}
\psfrag{Water-filling, NRF=16, HB, 123456789123456789}{\small{Water-filling, $N^{\rm RF}=16$, HB, {PE-AltMin \cite{7397861}}}}
\resizebox{!}{6.6cm}{\includegraphics{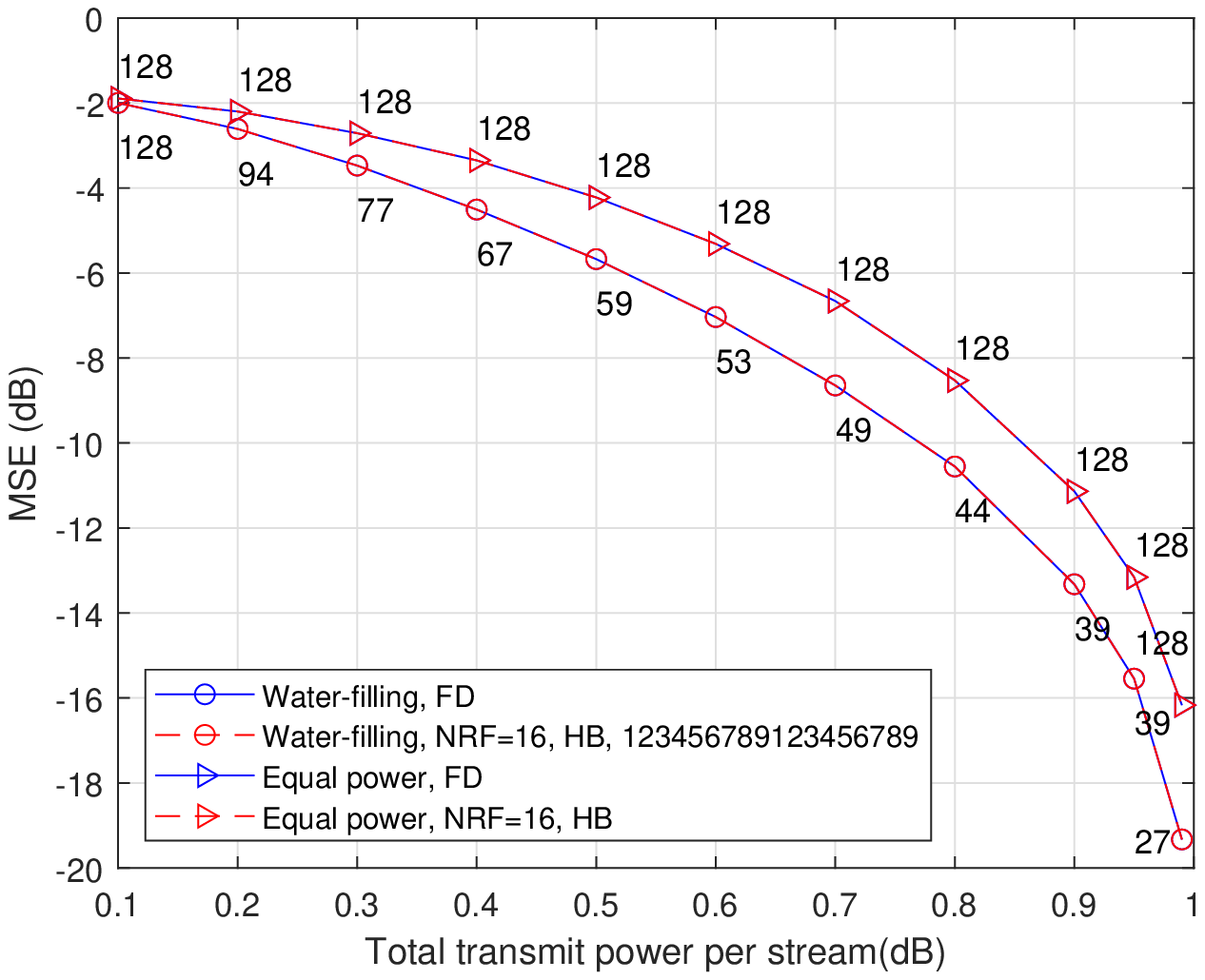}}
\caption{NMSE vs $|\rho|$, $M=32$, $N=64$, $N_{\rm s}=8$, $Q=128$, $E_T/\sigma_{\rm n}^2\thickapprox 38 {\rm (dB)}$.}
\label{MSE_vs_rhoM32N64}
}
\end{minipage}%
    \hfill
    \begin{minipage}[t]{0.47\textwidth}
       \centering{
\psfrag{SE}[c][c]{SE (bits/s/Hz)}
\psfrag{rho}[c][c]{$|\rho|$}
\psfrag{Perfect CSI, FD}{\small{Perfect CSI, FD}}
\psfrag{Perfect CSI, Nrf=16, HB}{\small{Perfect CSI, $N^{\rm RF}=16$, HB, {PE-AltMin \cite{7397861}}}}
\psfrag{Equal power, FD}{\small{Equal Power, FD}}
\psfrag{Equal power, NRF=16, HB}{\small{Equal Power, $N^{\rm RF}=16$, HB, {PE-AltMin \cite{7397861}}}}
\psfrag{Water-filling, FD}{\small{Water-filling, FD}}
\psfrag{Water-filling, NRF=16, HB, 123456789123456789}{\small{Water-filling, $N^{\rm RF}=16$, HB, {PE-AltMin \cite{7397861}}}}
\resizebox{!}{6.6cm}{\includegraphics{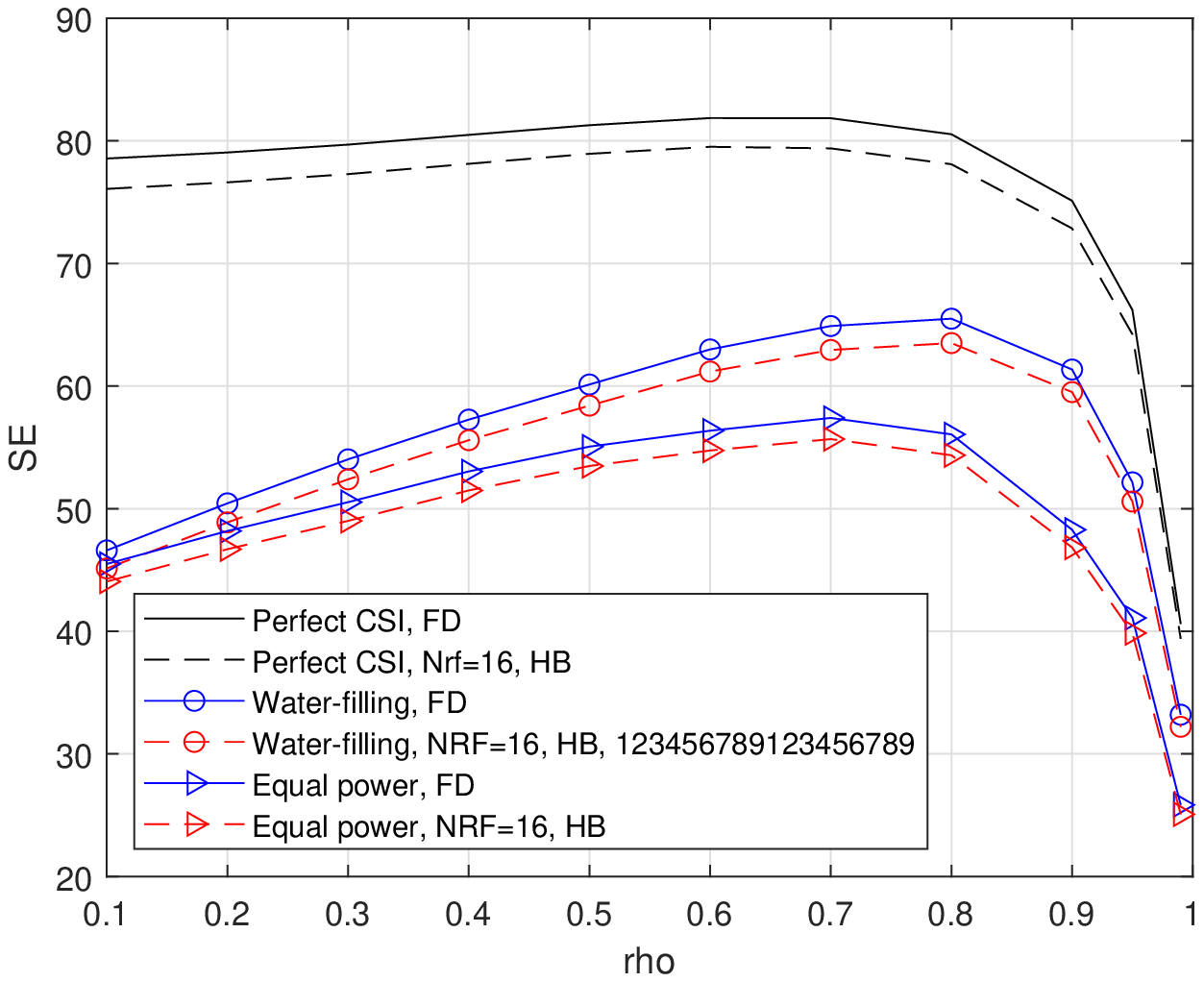}}
\caption{Spectral efficiency vs $|\rho|$, $M=32$, $N=64$, $N_{\rm s}=8$, $Q=128$, $E_T/\sigma_{\rm n}^2\thickapprox 38 {\rm (dB)}$.}
\label{Rate_vs_rhoM32N64}}
    \end{minipage}
\end{figure*}

\begin{figure*}
    \centering
    \begin{minipage}{0.47\textwidth}
        \centering{
\psfrag{MSE (dB)}[c][c]{NMSE (dB)}
\psfrag{Total transmit power per stream(dB)}[c][c]{$E_T/\sigma_{\rm n}^2$ (dB)}

\psfrag{Equal power, Ns=4, FD}{\small{Equal Power, $N^{\rm RF}=4$, $N_s=4$, FD}}
\psfrag{Equal power, NRF=4, Ns=4, FD}{\small{Equal Power, $N^{\rm RF}=4$, $N_s=4$, HB}}
\psfrag{Water-filling, Ns=4, FD}{\small{Water-filling, $N^{\rm RF}=4$, $N_s=4$, FD}}
\psfrag{Water-filling, NRF=4, Ns=4, HB, 12345}{\small{Water-filling, $N^{\rm RF}=4$, $N_s=4$, HB}}

\psfrag{Equal power, Ns=2, FD}{\small{Equal Power, $N^{\rm RF}=8$, $N_s=2$, FD}}
\psfrag{Equal power, NRF=8, Ns=2, FD}{\small{Equal Power, $N^{\rm RF}=8$, $N_s=2$, HB}}
\psfrag{Water-filling, Ns=2, FD}{\small{Water-filling, $N^{\rm RF}=8$, $N_s=2$, FD}}
\psfrag{Water-filling, NRF=8, Ns=4, HB}{\small{Water-filling, $N^{\rm RF}=8$, $N_s=2$, HB}}
\resizebox{!}{6.6cm}{\includegraphics{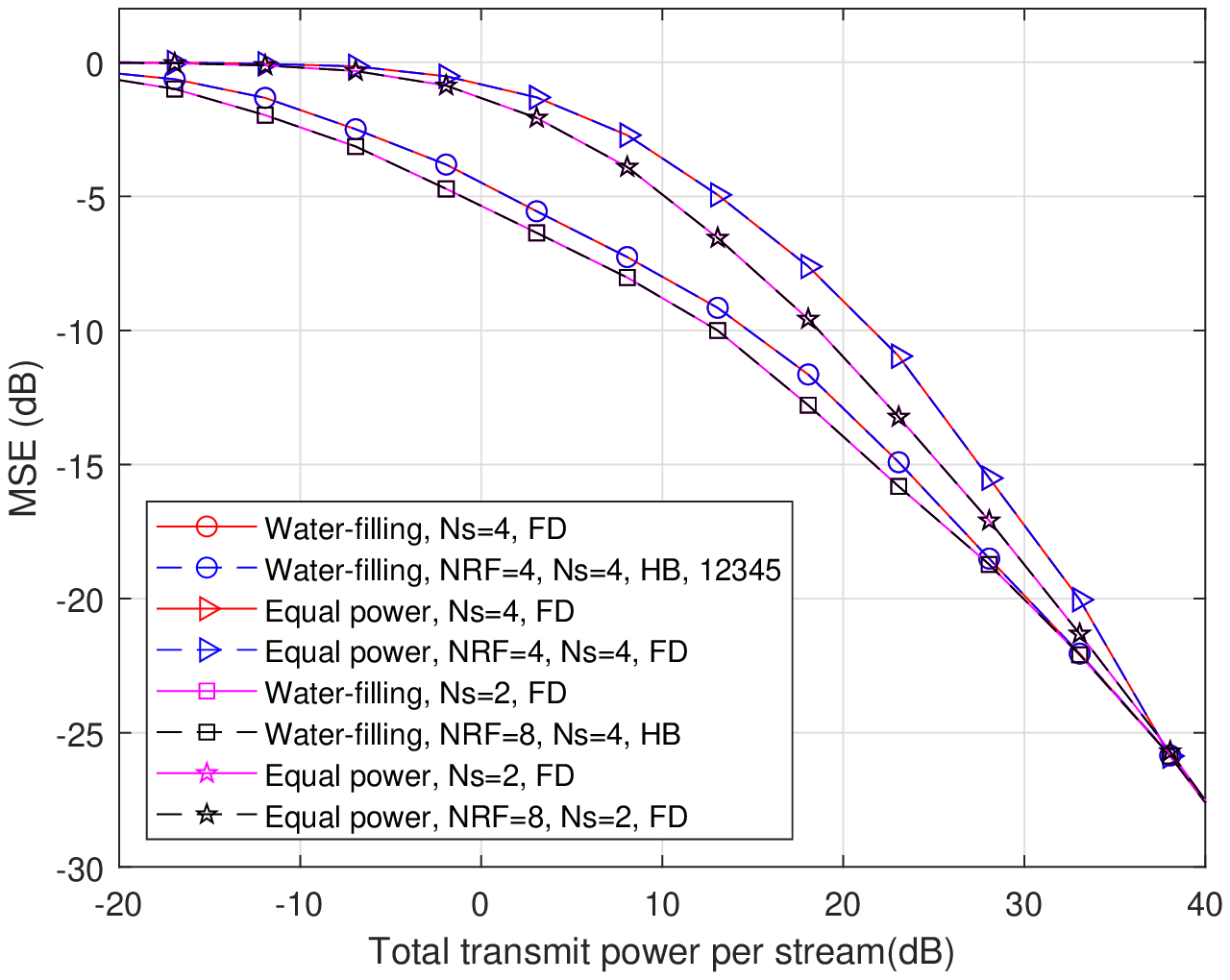}}
\caption{NMSE vs $E_T/\sigma_{\rm n}^2$, $M=16$, $N=32$, $|\rho|=.8$.}
\label{MSE_vs_PNs4Nrf4}
}
    \end{minipage}%
   \hfill
    \begin{minipage}{0.47\textwidth}
       \centering{
\psfrag{Spectral efficiency }[c][c]{SE (bits/s/Hz)}
\psfrag{Total transmit power (dB)}[c][c]{$E_T/\sigma_{\rm n}^2$ (dB)}
\psfrag{Perfect CSI, FD}{\small{Perfect CSI, FD}}
\psfrag{Equal power, Ns=4, FD}{\small{Equal Power, $N^{\rm RF}=4$, $N_s=4$, FD}}
\psfrag{Equal power, NRF=4, Ns=4, FD}{\small{Equal Power, $N^{\rm RF}=4$, $N_s=4$, HB}}
\psfrag{Water-filling, Ns=4, FD}{\small{Water-filling, $N^{\rm RF}=4$, $N_s=4$, FD}}
\psfrag{Water-filling, NRF=4, Ns=4, HB, 12345}{\small{Water-filling, $N^{\rm RF}=4$, $N_s=4$, HB}}

\psfrag{Equal power, Ns=2, FD}{\small{Equal Power, $N^{\rm RF}=8$, $N_s=2$, FD}}
\psfrag{Equal power, NRF=8, Ns=2, FD}{\small{Equal Power, $N^{\rm RF}=8$, $N_s=2$, HB}}
\psfrag{Water-filling, Ns=2, FD}{\small{Water-filling, $N^{\rm RF}=8$, $N_s=2$, FD}}
\psfrag{Water-filling, NRF=8, Ns=4, HB}{\small{Water-filling, $N^{\rm RF}=8$, $N_s=2$, HB}}
\resizebox{!}{6.6cm}{\includegraphics{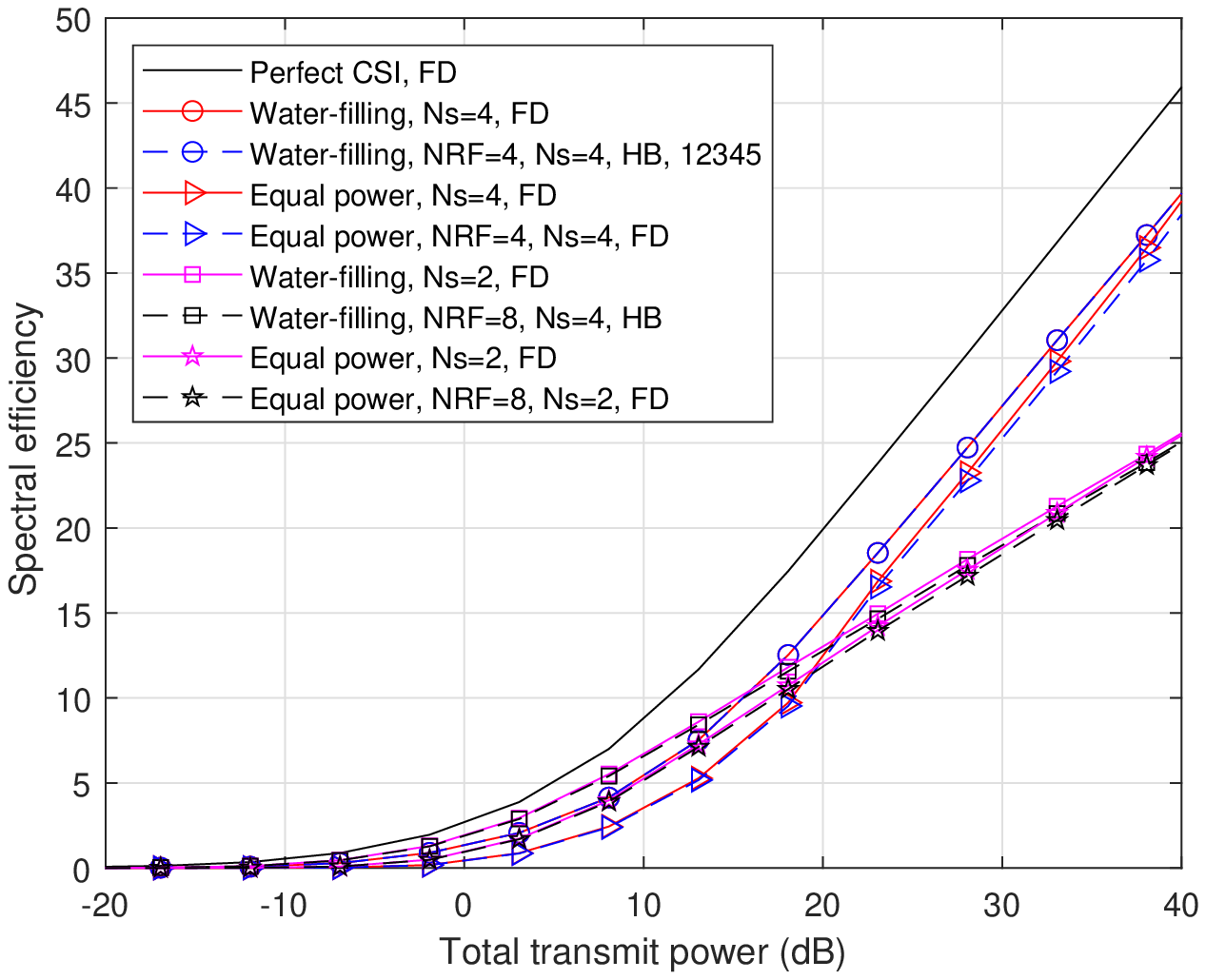}}
\caption{Spectral efficiency vs $E_T/\sigma_{\rm n}^2$, $M=16$, $N=32$, $|\rho|=.8$.}
\label{Rate_vs_PNs4Nrf4}}
    \end{minipage}
\end{figure*}

\vspace{-3mm}
\subsection{Performance for Different $(M, N)$}
We now provide further simulations for $M=32$, $N=64$, $N^{\rm RF}=16$, and $N_s=8$. The NMSE and spectral efficiency vs. $E_T/\sigma_{\rm n}^2$ are plotted in Figs. \ref{MSE_vs_PM32N64} and \ref{Rate_vs_PM32N64}, respectively. Similar to Figs. \ref{MSE_vs_P} and \ref{Rate_vs_P}, the proposed water-filling technique outperforms the equal power allocation in  practical ranges of $E_T/\sigma_{\rm n}^2$. At low $E_T/\sigma_{\rm n}^2$,  both techniques perform the same. At very large $E_T/\sigma_{\rm n}^2$, the proposed water-filling technique allocates almost equal amount of power to different eigen-directions, thereby yielding the same performance as the equal power technique does.
The NMSE and spectral efficiency vs $|\rho|$ are plotted in Figs. \ref{MSE_vs_rhoM32N64}, and \ref{Rate_vs_rhoM32N64}, respectively. For fixed $E_T$, and for relatively small $|\rho|$, both the water-filling method and the equal power technique perform almost the same. This is mainly because, at small values of $|\rho|$, the channel entries are largely uncorrelated and the channel components have almost the same levels of significance. However, as we increase $|\rho|$, the channel components have different level of significance. Leveraging this characteristic, the water-filling technique yields a better channel estimate within a smaller $Q$.
\vspace{-3mm}

\subsection{Performance for Different $(N^{\rm RF}, N_{\rm s})$}
\label{figs_Ns4Nrf4}
Figs.~\ref{MSE_vs_PNs4Nrf4} and \ref{Rate_vs_PNs4Nrf4} illustrates the NMSE and spectral efficiency curves for different pairs  of $(N^{\rm RF}, N_{\rm s})$. In terms of NMSE, the scenario with $(N^{\rm RF}, N_{\rm s})= (8,2) $ performs the same as the one with  $(N^{\rm RF}, N_{\rm s})= (8,4)$ provided in Fig. \ref{MSE_vs_P}. The reason is that in these two scenarios, we employed the same number of the RF chains. However, from a spectral efficiency perspective, the scenario with  $(N^{\rm RF}, N_{\rm s})=(8,4)$ yields a better performance compared to the one where $(N^{\rm RF}, N_{\rm s})=(8,2)$. This is mainly attributed to the multiplexing gain caused by the difference in $N_{\rm s}$.
Compared to the scenario with $(N^{\rm RF}, N_{\rm s})=(4,4)$, the scenario with $(N^{\rm RF}, N_{\rm s})=(8,4)$ performs better in terms of  NMSE mainly because more RF chains are employed. In Fig.~\ref{Rate_vs_PNs4Nrf4}, the scenario with $(N^{\rm RF}, N_{\rm s})=(8,4)$ offers a better spectral efficiency for the same $N_{\rm s}$.
The reason is that the scenario with $(N^{\rm RF}, N_{\rm s})=(8,4)$ provides a better channel estimate.

\vspace{-3mm}
{\subsection{$Q_{\rm nz}$ vs the Number of Antennas}\label{subsection_G}
To show the dependency of $Q_{\rm nz}$
on the number of antenna, in Fig.~\ref{Q_rho_test_intext}, we plot the $Q_{\rm nz}/Q$  vs. $M$ for $N^{\rm RF} = 8$, $N=16$, and $E_T/\sigma_{\rm n}^2 = 64$. It can be seen that for $|\rho| =0.1$ and for small $M$, all $Q$ training time slots are used, i.e., $Q_{\rm nz}= Q$. This is mainly due to the fact that, for this range of $|\rho|$, the channel components along all eigen-directions have \emph{almost} the same significance, and therefore, receive almost the same amount of energy during the training. As we increase $M$, and for the same training energy budget, we observe that not all training time slots are being utilized for training. Instead, the energy is being allocated to estimate the channel components along the stronger eigen-directions. This observation becomes more evident when we increase $|\rho|$ (i.e., channel correlation is increased), where the training energy is dedicated to only estimate the channel components along the stronger eigen-directions.}

\begin{figure}
\begin{minipage}[t]{.47\textwidth}
\centering{
\psfrag{M}[c][c] { $M$}
\psfrag{Q}[c][c]{$Q_{\rm nz}/Q$ $(\%)$ }
\psfrag{rho=1,123456789012345}{$|\rho|=0.1$}
\psfrag{rho=5}{$|\rho|=0.5$}
\psfrag{rho=8}{$|\rho|=0.8$}
\resizebox{!}{6.6cm}{\includegraphics{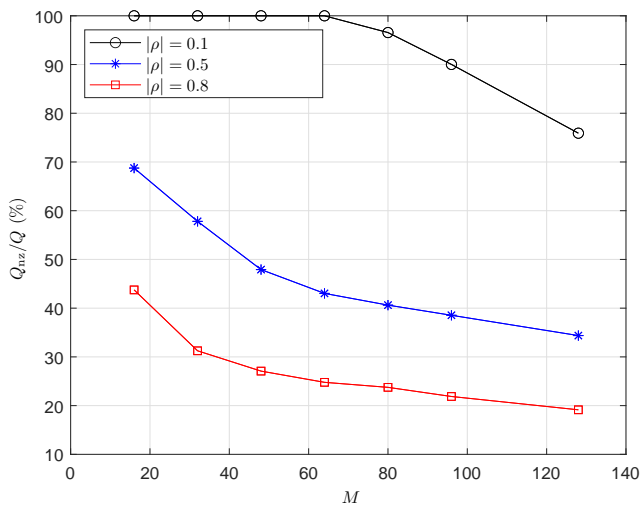}}
\caption{{$Q_{\rm nz}/Q$ vs $M$, for $Q = MN/N^{\rm RF}$, $N^{\rm RF} = 8$, $N=16$, $E_T/\sigma_{\rm n}^2 = 64$. }}
\label{Q_rho_test_intext}}
\end{minipage}
\end{figure}

\label{remark3} {\bf Remark 3:} The overall computational complexity of the proposed technique with respect to the number of training time slots is $\mathcal{O}(Q)$. This implies that adding more training time slots will linearly increase the complexity of the proposed technique. The improvement in performance, however, is not linear. Specifically, as shown in Figs. \ref{MSE_vs_Qt} and  \ref{Rate_vs_Qt}, the improvement in NMSE and spectral efficiency  is not linear. In fact, for $ Q \geq 32$, the improvement becomes marginal. Such a marginal improvement in performance is mainly due to the fact that the channel entries are correlated, and therefore, spending more time resources for channel estimation does not provide a significant improvement in channel estimation or spectral efficiency. This becomes further evident when  $|\rho|$ is very close to $1$, indicating that the channel can be represented by a few dominant eigenvectors, and estimating the channel along those eigen-directions only provides good performance.

\section{Conclusions} \label{sec:conclusion}
We proposed a training-based channel estimation technique for  correlated massive MIMO systems where hybrid beamforming is employed. Using the Kronecker model to express such correlated channels, and relying on the fact that the channel entries are uncorrelated in its eigen-domain, we estimated the channel components in this domain. Since the number of RF chains is much less than the number of transmitter/receiver antennas, the training has to be done in multiple time slots. Minimizing the channel estimate mean squared error criterion under a total training energy budget, we optimally designed the hybrid beamformers in each time slot that led to linearly estimate the channel. The optimal precoder and combiner in each training time slot are aligned to the set of strongest eigen-directions of the channel. The number of training time slots is determined by the training energy budget. Our simulation results indicate that, when the training energy budget is relatively low,  only the portion of the channel along the strongest eigen-directions is estimated. At high training energy budgets, all eigen-directions participate in channel estimation. Importantly, by exploiting the knowledge of the channel correlations, we can reduce the number of time slots required for channel training.

As a future direction to this work, one can consider   different architectures
such as partially-connected analog beamformers or  analog beamformers, which are implemented using only
analog switches. Another direction for future work is considering  more practical settings where the effects of finite-resolution  analog-to-digital converters (ADCs) and digital-to-analog converters (DACs) are taken into account.

\appendices

\section{Simplifying $\bGamma^2(\mathcal{V},  \mathcal{{W}})$} \label{Simp_gamma2}
Since $\bR_n(\mathcal{W})=\sigma_{\rm n}^2{\rm blkdg}\left[\left(\bW_1^H\bW_1\right)\;\;\cdots\;\; \left(\bW_{Q}^H\bW_{Q}\right)\right]$, we can write
\begin{align} \label{Rn_inv}
\bR_n^{-1}(\mathcal{W})&=  \frac{1}{\sigma_{\rm n}^2} {\rm blkdg}\left[\left(\bZ_1\bD_1^2\bZ_1^H\right)^{-1}\;\;\cdots\;\; \left(\bZ_{Q}\bD_{Q}^2\bZ_{Q}^H\right)^{-1}\right] \nonumber\\
&=  \frac{1}{\sigma_{\rm n}^2} {\rm blkdg}\left[\bZ_1\bD_1^{-2}\bZ_1^H\;\;\cdots\;\; \bZ_{Q}\bD_{Q}^{-2}\bZ_{Q}^H\right]\nonumber\\
&= \frac{1}{\sigma_{\rm n}^2} {\rm blkdg}\left[\bZ_1\;\;\cdots\;\; \bZ_{Q}\right]
 {\rm blkdg}\left[\bD_1^{-2}\;\;\cdots\;\; \bD_{Q}^{-2}\right] \nonumber\\
 &\qquad \times {\rm blkdg}\left[\bZ_1^H\;\;\cdots\;\; \bZ_{Q}^H\right].
\end{align}
Then, $\bGamma^2 (\mathcal{V},  \mathcal{W})$ in \eqref{gamma2_def} is written, in terms of $\tilde{\bZ}$ and $\tilde{\bD,}$ as:
\begin{align} \label{gamma_diag}
 \bGamma^2(\mathcal{V}, \mathcal{W})= \frac{1}{\sigma_{\rm n}^2}\bPsi^H\bPhi^H(\mathcal{V},  \mathcal{W})\tilde{\bZ}\tilde{\bD}\tilde{\bZ}^H\bPhi(\mathcal{V},  \mathcal{W})\bPsi.
\end{align}
To further simplify $\bGamma^2(\mathcal{V},  \mathcal{W})$, defining $\tilde \bv_q \triangleq \bv_q /\| \bv_q\|$ and $z_q = \|\bv_q\|$, we can write:
\begin{align} \label{z_phi_psi_diag}
 &\tilde{\bZ}^H\bPhi(\mathcal{V},  \mathcal{W})\bPsi = {\rm blkdg}\left[\bZ_1^{H}\;\;\cdots\;\; \bZ_{Q}^H\right]
\left(
\begin{array}{c}
\bv_1^T \otimes\bW_1^H\\
\bv_2^T \otimes\bW_2^H \\
\vdots  \\
\bv_{Q}^T \otimes\bW_{Q}^H
\end{array}
\right) \bPsi\nonumber\\
&=\resizebox{.95\hsize}{!}{$\left(
  \begin{array}{cccc}
   \|\bv_1\|\tilde{\bv}_1^T\otimes \bZ_1^H\bW_1^H  \\
    \|\bv_2\| \tilde{\bv}_2^T\otimes \bZ_1^H\bW_2^H \\
      \vdots  \\
     \|\bv_{Q}\| \tilde{\bv}_{Q}^T\otimes \bZ_{Q}^H\bW_{Q}^H
  \end{array}
\right)\bPsi=\left(
  \begin{array}{cccc}
   \|\bv_1\|\tilde{\bv}_1^T\otimes \bD_1^H\bK_1^H  \\
     \|\bv_2\|\tilde{\bv}_2^T\otimes \bD_2^H\bK_2^H  \\
      \vdots  \\
      \|\bv_{Q}\|\tilde{\bv}_{Q}^T\otimes \bD_{Q}^H\bK_{Q}^H
  \end{array}
\right) \bPsi$} \nonumber\\
&= \resizebox{.83\hsize}{!}{${\rm blkdg}\left[\|\bv_1\|\bD_1^H\;\;\cdots\;\; \|\bv_Q\| \bD_{Q}^H\right]
 \underbrace{ \left(
  \begin{array}{cccc}
  \tilde{\bv}_1^T\otimes \bK_1^H  \\
     \tilde{\bv}_2^T\otimes \bK_2^H  \\
      \vdots  \\
      \tilde{\bv}_{Q}^T\otimes \bK_{Q}^H
  \end{array}
\right)}_{\triangleq \bUpsilon^H \left(\mathcal{\tilde{V}}, \mathcal{{K}}\right)}\bPsi$},
\end{align}
where we define $ \mathcal{\tilde V}\triangleq \left(\tilde \bv_1, \tilde \bv_2, \ldots, \tilde \bv_Q\right)$ and  $\mathcal{{K}}= \left(\bK_1, \bK_2, \cdots, \bK_Q\right)$. Using \eqref{z_phi_psi_diag}, we can write \eqref{gamma_diag} as:
\begin{align} \label{gamma_long_mtx}
 \bGamma^2(\mathcal{V},  \mathcal{{K}})&= \frac{1}{\sigma_{\rm n}^2} \bPsi^H \bUpsilon \left(\mathcal{\tilde{V}}, \mathcal{{K}}\right)\mathbfcal{D}(\mathcal{Z})  \bUpsilon^H \left(\mathcal{\tilde{V}}, \mathcal{{K}}\right) \bPsi,
\end{align}
where we define: $\mathbfcal{D}(\mathcal{Z})\triangleq {\rm blkdg}\left[
  z_{1}^{2} \bI_{N^{\rm RF}} ,\;\ldots\;, z_Q^2 \bI_{N^{\rm RF}}\right ]$ and $\mathcal{Z}\triangleq (z_1, z_2,\ldots, z_Q)$.

\section{ Relation between $\tilde{\bGamma}^2\left(\mathcal{Z}^{\rm o},\mathcal{\tilde{V}}^{\rm o}, \mathcal{K}^{\rm o} \right)$ and $\mathbfcal{D}\left(\mathcal{Z}^{\rm o}\right)$ } \label{gama_prop}
Before we begin, we introduce the following lemma that will help us to show the relation between $\tilde{\bGamma}^2\left(\mathcal{Z}^{\rm o},\mathcal{\tilde{V}}^{\rm o}, \mathcal{K}^{\rm o} \right)$ and $\mathbfcal{D}\left(\mathcal{Z}^{\rm o}\right)$.

\begin{lemma}\label{trace_lemma}
For any $\mathcal{V}$, $\mathcal{\tilde{V}}$, and $\mathcal{K}$, the following equalities hold
\begin{align} \label{diag_one_eq}
{\rm diag}\left(\bUpsilon^H \left(\mathcal{\tilde{V}}, \mathcal{{K}}\right)\bUpsilon \left(\mathcal{\tilde{V}}, \mathcal{{K}}\right)\right)&={\bf 1}_{MN}
  \\ \label{trace_lemma_eq}
   {\rm Tr}\left(\bGamma^2(\mathcal{Z}, \mathcal{\tilde{V}}, \mathcal{K})\right)&= \frac{1}{\sigma_{\rm n}^2}{\rm Tr}\left(\mathbfcal{D(\mathcal{Z})}\right)
    \end{align}
\end{lemma}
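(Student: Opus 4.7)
The plan is to prove the diagonal identity first and then leverage it, together with cyclic invariance of the trace and the unitarity of $\bPsi$, to obtain the trace identity.

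For \eqref{diag_one_eq}, I would compute $\bUpsilon^H\bUpsilon$ blockwise. By \eqref{uupppsiloon}, $\bUpsilon^H$ is the $L \times MN$ matrix whose $q$-th block row is $\tilde{\bv}_q^T \otimes \bK_q^H$. Partitioning $\bUpsilon^H \bUpsilon$ conformally into $Q \times Q$ blocks of size $N^{\rm RF} \times N^{\rm RF}$ and applying the mixed-product property $(\bA\otimes \bB)(\bC\otimes \bD) = (\bA\bC)\otimes(\bB\bD)$ of the Kronecker product, the $(q,q')$-th block reduces to the scalar $\tilde{\bv}_q^T \tilde{\bv}_{q'}^*$ multiplied by $\bK_q^H \bK_{q'}$. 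Only the blocks with $q=q'$ contribute to the diagonal. For these, $\tilde{\bv}_q^T \tilde{\bv}_q^* = \|\tilde{\bv}_q\|^2 = 1$ since $\tilde{\bv}_q \triangleq \bv_q/\|\bv_q\|$, while $\bK_q^H \bK_q = \bI_{N^{\rm RF}}$ since $\bK_q$ consists of left singular vectors from the SVD $\bW_q = \bK_q \bD_q \bZ_q^H$. Hence every diagonal block equals $\bI_{N^{\rm RF}}$, giving \eqref{diag_one_eq}.

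For \eqref{trace_lemma_eq}, starting from the simplified form \eqref{gamma_long_mtx_body}, I would apply the cyclic property of the trace together with unitarity $\bPsi\bPsi^H=\bI$ to obtain
\begin{equation*}
{\rm Tr}\bigl(\bGamma^2(\mathcal{Z},\mathcal{\tilde V},\mathcal{K})\bigr) = \frac{1}{\sigma_{\rm n}^2}\,{\rm Tr}\bigl(\mathbfcal{D}(\mathcal{Z})\,\bUpsilon^H\bUpsilon\bigr).
\end{equation*}
Because $\mathbfcal{D}(\mathcal{Z})$ is diagonal, ${\rm Tr}(\mathbfcal{D}\bB) = \sum_i \mathbfcal{D}_{ii}\bB_{ii}$ for any compatible $\bB$. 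Substituting $\bB = \bUpsilon^H\bUpsilon$ and invoking \eqref{diag_one_eq}, each $(\bUpsilon^H\bUpsilon)_{ii}$ equals $1$, so the right-hand side collapses to $\tfrac{1}{\sigma_{\rm n}^2}{\rm Tr}(\mathbfcal{D}(\mathcal{Z}))$, as claimed.

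I do not anticipate a substantial obstacle; both identities are essentially one-line computations once the ingredients are in place. The only points requiring care are applying the Kronecker mixed-product rule correctly when one factor is a row vector, remembering that $\bK_q$ has orthonormal columns as a consequence of being the left-singular-vector factor in the SVD of $\bW_q$, and noting that $\tilde{\bv}_q$ is unit-norm by its defining normalization. Without these three structural facts neither identity holds, so the most error-prone step is the bookkeeping that verifies them before invoking them.
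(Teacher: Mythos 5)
Your proof is correct and follows essentially the same route as the paper's: the diagonal blocks of $\bUpsilon^H\bUpsilon$ are computed via the Kronecker mixed-product rule using $\|\tilde{\bv}_q\|=1$ and $\bK_q^H\bK_q=\bI_{N^{\rm RF}}$, and the trace identity then follows from cyclic invariance, $\bPsi\bPsi^H=\bI$, and the diagonality of $\mathbfcal{D}(\mathcal{Z})$. The only difference is that you make explicit the step ${\rm Tr}(\mathbfcal{D}\bB)=\sum_i \mathbfcal{D}_{ii}\bB_{ii}$, which the paper leaves implicit.
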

\begin{proof}
To show \eqref{diag_one_eq}, based on \eqref{gamma_long_mtx}, we note that the $q$th block diagonal entries of $\bUpsilon^H \left(\mathcal{\tilde{V}}, \mathcal{{K}}\right)\bUpsilon \left(\mathcal{\tilde{V}}, \mathcal{{K}}\right)$ is given by $\left(\tilde{\bv}_q^T\otimes \bK_q^H\right)\left(\tilde{\bv}_q^T\otimes \bK_q^H\right)^H=\bI_{N^{\rm RF}}$, for $q=1, 2, \ldots, Q$. This now implies that \eqref{diag_one_eq} holds true. To show \eqref{trace_lemma_eq}, from \eqref{gamma_long_mtx}, we can write
\begin{align} \label{trace_lemma_eq_proof}
   &{\rm Tr}\left(\bGamma^2(\mathcal{Z}, \mathcal{\tilde{V}}, \mathcal{K})\right) \\
   &= {\rm Tr}\left(\frac{1}{\sigma_{\rm n}^2} \bPsi^H \bUpsilon \left(\mathcal{\tilde{V}}, \mathcal{{K}}\right) \mathbfcal{D}\left(\mathcal{Z}\right) \bUpsilon^H\left(\mathcal{\tilde{V}}, \mathcal{{K}}\right)  \bPsi\right)\nonumber\\
   &\overset{(a)}{=} \frac{1}{\sigma_{\rm n}^2}{\rm Tr}\left(  \mathbfcal{D}\left(\mathcal{Z}\right) \bUpsilon^H\left(\mathcal{\tilde{V}}, \mathcal{{K}}\right)  \bPsi \bPsi^H \bUpsilon \left(\mathcal{\tilde{V}}, \mathcal{{K}}\right)\right)\nonumber\\
   &\overset{(b)}{=} \frac{1}{\sigma_{\rm n}^2}{\rm Tr}\left(  \mathbfcal{D}\left(\mathcal{Z}\right) \bUpsilon^H\left(\mathcal{\tilde{V}}, \mathcal{{K}}\right)  \bUpsilon \left(\mathcal{\tilde{V}}, \mathcal{{K}}\right)\right)\overset{(c)}{=} \frac{1}{\sigma_{\rm n}^2}{\rm Tr}\left(  \mathbfcal{D}\left(\mathcal{Z}\right) \right),\nonumber
\end{align}
where $(a)$ follows from the fact that ${\rm Tr}(\bA\bB)= {\rm Tr}(\bB\bA)$, $(b)$ follows from $\bPsi \bPsi^H=\bI$, and $(c)$ follows from \eqref{diag_one_eq}.
\end{proof}

Given $\resizebox{.875\hsize}{!}{$\tilde{\bGamma}^2\left(\mathcal{Z}^{\rm o},\mathcal{\tilde{V}}^{\rm o}, \mathcal{K}^{\rm o} \right)= \tilde{ \bPsi}^H \bUpsilon \left(\mathcal{\tilde{V}}^{\rm o}, \mathcal{{K}}^{\rm o}\right) \mathbfcal{D}\left(\mathcal{Z}^{\rm o}\right) \bUpsilon^H \left(\mathcal{\tilde{V}}^{\rm o}, \mathcal{{K}}^{\rm o}\right)  \tilde{\bPsi}$} $ is a diagonal matrix, we aim to show that its non-zero diagonal entries are the same as $\mathbfcal{D}\left(\mathcal{Z}^{\rm o}\right)$. For notation simplicity, let us drop the dependency of each matrix to their variables, and denote $\tilde{\bGamma}^{{\rm o}^2}\triangleq\tilde{\bGamma}^2\left(\mathcal{Z}^{\rm o},\mathcal{\tilde{V}}^{\rm o}, \mathcal{K}^{\rm o} \right)$, $\bUpsilon^{\rm o}\triangleq \bUpsilon \left(\mathcal{\tilde{V}}^{\rm o}, \mathcal{{K}}^{\rm o}\right) $ and $\mathbfcal{D}^{\rm o}\triangleq \mathbfcal{D}\left(\mathcal{Z}^{\rm o}\right)$. Since $\tilde{\bGamma}^{{\rm o}^2}$ is a rank-$L$ diagonal matrix, we can write
\begin{align} \label{gamma_opt_diag}
 \frac{1}{\sigma_{\rm n}^2}\tilde{ \bPsi}^H \bUpsilon^{\rm o}\mathbfcal{D}^{\rm o} \bUpsilon^{{{\rm o}}^H}  \tilde{\bPsi} =\tilde{\bGamma}^{{\rm o}^2} \overset{(a)}{=}\mathbfcal{Q}\underbrace{\left(\begin{array}{cc}
   \tilde{\bGamma}_{\rm nz}^{{\rm o}^2}& \\
   & \b0
  \end{array}
\right)}_{\triangleq \tilde{\bGamma}_{\rm s}^{{\rm o}^2}}\mathbfcal{Q}^H
\end{align}
where $(a)$ follows from the eigenvalue decomposition of $\tilde{\bGamma}^{{\rm o}^2}$, $\mathbfcal{Q}$ is a unitary matrix and $\tilde{\bGamma}_{\rm nz}^{{\rm o}^2}$ is an $L \times L$ diagonal matrix corresponding to the non-zero diagonal entries of $\tilde{\bGamma}^{{\rm o}^2}$. Defining $\check{\tilde{\bPsi}}\triangleq {\tilde{\bPsi}}\mathbfcal{Q}$, then \eqref{gamma_opt_diag} is expressed as:
\begin{align} \label{gamma_opt_diag_re}
 \frac{1}{\sigma_{\rm n}^2}\check{\tilde{\bPsi}}^H \bUpsilon^{\rm o}\mathbfcal{D}^{\rm o} \bUpsilon^{{{\rm o}}^H}  \check{\tilde{\bPsi}} = \tilde{\bGamma}_{\rm s}^{{\rm o}^2}.
\end{align}
Now, let us augment $\bUpsilon^{\rm o}$ and $\mathbfcal{D}^{\rm o}$ and define the following $MN \times MN$ matrices
\begin{align}
\mathbfcal{\tilde{D}}^{\rm o}&\triangleq\left(\begin{array}{c:c}
   \mathbfcal{D}^{\rm o}& \b0\\ \hdashline
 \b0  & \b0
  \end{array}
\right),\,\,\,\,
\tilde{\bUpsilon}^{\rm o}\triangleq\left(\begin{array}{c:c}
{\bUpsilon}^{\rm o}&  \mathbfcal{C}
\end{array}
\right)\label{tilde_upsilon_def}.
\end{align}
where $\mathbfcal{C}$ can be any $MN \times (MN-L)$ matrix, and $\b0$ in
$\mathbfcal{\tilde{D}}^{\rm o}$ is an $(MN-L) \times (MN-L)$ all-zero matrix. Let us choose $\mathbfcal{C}$ such that ${\rm diag}\left(\tilde{\bUpsilon}^{{\rm o}^H}\tilde{\bUpsilon}^{\rm o}\right)={\bf 1}_{MN}$, meaning that the columns of $\mathbfcal{C}$ should have unit norms. One such choice for $\mathbfcal{C}$ that satisfies this condition is to choose its columns from the basis vectors that spans the null space of ${\bUpsilon}^{\rm o}$. Then, it can be verified that $\frac{1}{\sigma_{\rm n}^2}\check{\tilde{\bPsi}}^H \tilde{\bUpsilon}^{\rm o}\tilde{\mathbfcal{D}}^{\rm o} \tilde{\bUpsilon}^{{{\rm o}}^H}  \check{\tilde{\bPsi}}= \tilde{\bGamma}_{\rm s}^{{\rm o}^2}$ holds true. Since $\tilde{\bGamma}_{\rm s}^{{\rm o}^2}$ is the optimal diagonal matrix that minimizes the objective function, along with the fact that $\check{\tilde{\bPsi}}$ is a unitary matrix, and ${\rm diag}\left(\tilde{\bUpsilon}^{{\rm o}^H}\tilde{\bUpsilon}^{\rm o}\right)={\bf 1}_{MN}$ and ${\rm Tr}\left(\tilde{\bGamma}_{\rm s}^{{\rm o}^2}\right)= \frac{1}{\sigma_{\rm n}^2}{\rm Tr}\left(\tilde{\mathbfcal{D}}^{\rm o}\right)$, we use \eqref{gamma_short_mtx_v_tilde_opt} and conclude that $\tilde{\bGamma}_{\rm s}^{{\rm o}^2}= \frac{1}{\sigma_{\rm n}^2}\tilde{\mathbfcal{D}}^{\rm o}$. This implies that $\mathbfcal{D}^{\rm o}={\sigma_{\rm n}^2} \tilde{\bGamma}_{\rm nz}^{{\rm o}^2}$. The proof is complete.

%

\bibliographystyle{IEEEtran}
\bibliography{reference}

\begin{IEEEbiography}[{\includegraphics[width=1in,height=1.25in,clip, keepaspectratio]{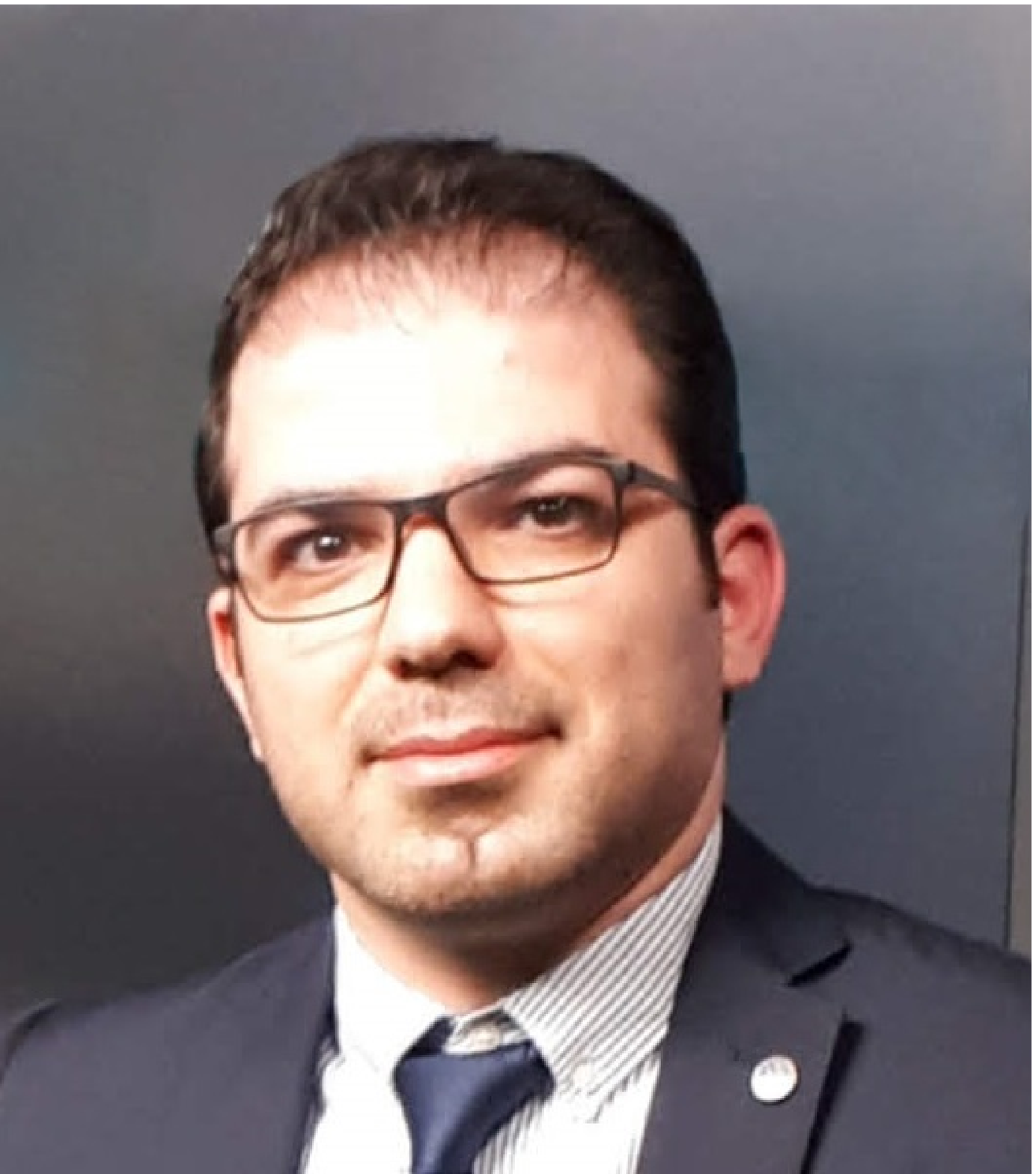}}]{Javad Mirzaei}
(S'18) was born in Tehran, Iran. He received the B.Sc. degree from Iran University of Science and Technology (IUST), Tehran, Iran, in 2010, the M.Sc. degree from Ontario Tech University, ON, Canada, in 2013, and the PhD degree from the University of Toronto, ON, Canada, in 2021, all in electrical and computer engineering. He is now a Post-Doctoral Fellow with the University of Toronto. From Feb. 2014 to Sept. 2015, he was with the Cable Shoppe Inc., Toronto, ON, Canada, where he was working in the area of broadband communication systems and IP-based 4G systems. His main research interests include wireless communication, signal processing, and machine learning.
\end{IEEEbiography}

\begin{IEEEbiography}[{\includegraphics[width=1in,height=1.25in,clip,keepaspectratio]{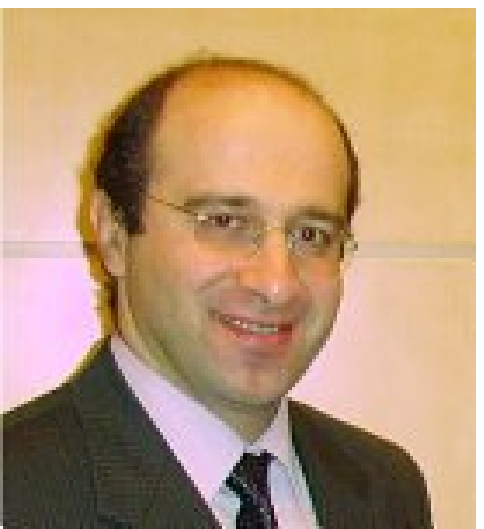}}]
{Shahram Shahbazpanahi} (M'02, SM'10) was born in Sanandaj, Kurdistan, Iran. He received the B.Sc., M.Sc., and Ph.D. degrees in electrical engineering from Sharif University of Technology, Tehran, Iran, in 1992, 1994, and 2001, respectively. From September 1994 to September 1996, he was an instructor   with the Department of Electrical Engineering, Razi University, Kermanshah, Iran. From July 2001 to March 2003, he was a Postdoctoral Fellow with the Department of Electrical and Computer Engineering, McMaster University, Hamilton, ON, Canada. From April 2003 to September 2004, he was a Visiting Researcher with the Department of Communication Systems, University of Duisburg-Essen, Duisburg, Germany. From September 2004 to April 2005, he was a Lecturer and Adjunct Professor with the Department of Electrical and Computer Engineering, McMaster University. In July 2005, he joined the Faculty of Engineering and Applied Science, University of Ontario Institute of Technology, Oshawa, ON, Canada, where he currently holds a  Professor position. His research interests include statistical and array signal processing; space-time adaptive processing; detection and estimation; multi-antenna, multi-user, and cooperative communications; spread spectrum techniques; DSP programming; and hardware/real-time software design for telecommunication systems. Dr. Shahbazpanahi has served as an Associate Editor for the IEEE TRANSACTIONS ON SIGNAL PROCESSING and the IEEE SIGNAL PROCESSING LETTERS. He has also served as a Senior Area Editor for the IEEE SIGNAL PROCESSING LETTERS. He was an elected  member of the Sensor Array and Multichannel (SAM) Technical Committee of the IEEE Signal Processing Society. He has received several awards, including the Early Researcher Award from Ontario's Ministry of Research and Innovation, the NSERC Discovery Grant (three awards), the Research Excellence Award from the Faculty of Engineering and Applied Science, the University of Ontario Institute of Technology, and the Research Excellence Award, Early Stage, from the University of Ontario Institute of Technology.
\end{IEEEbiography}

\begin{IEEEbiography}[{\includegraphics[width=1in,height=1.25in,clip,keepaspectratio]{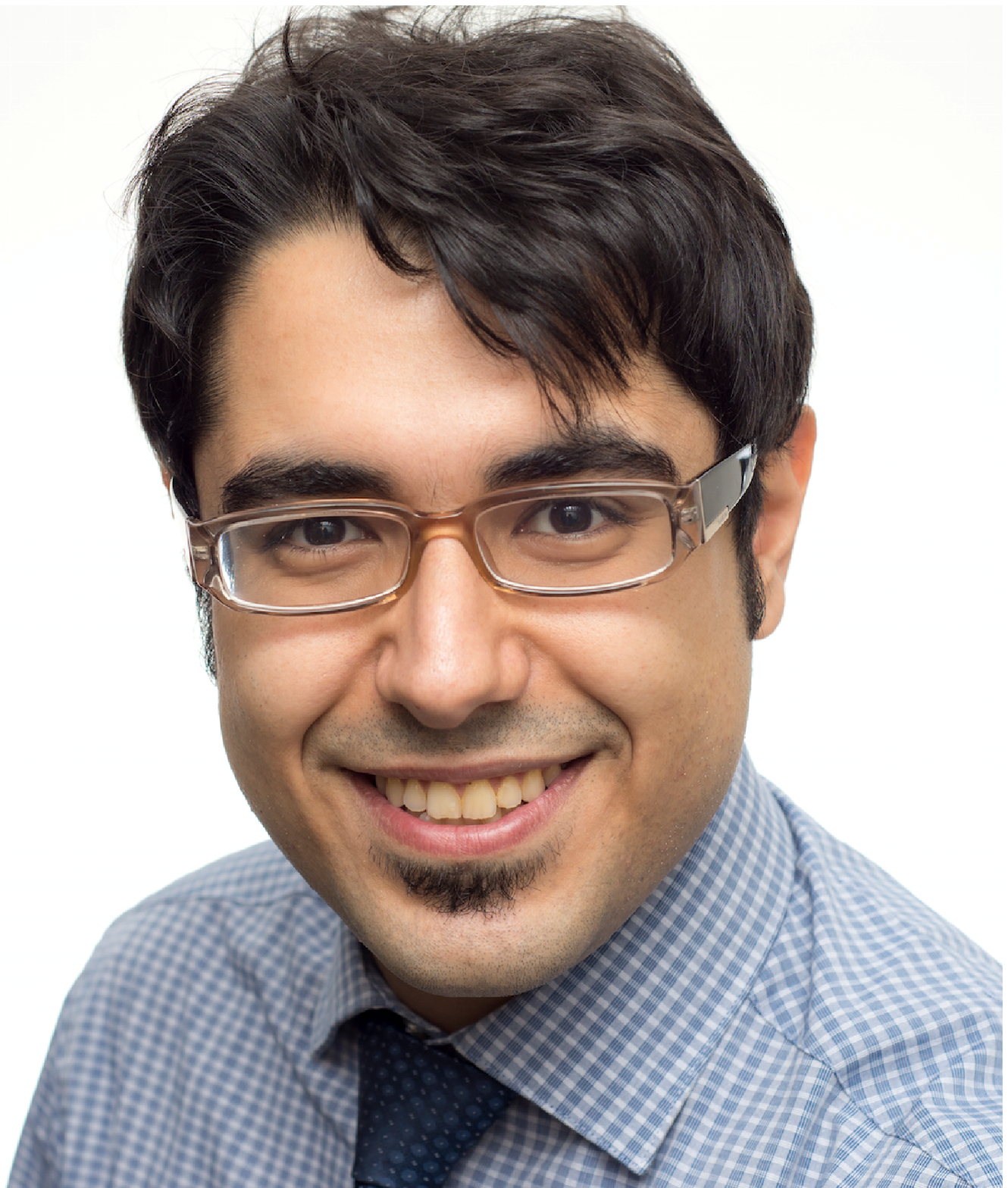}}]{Foad Sohrabi}
(S'13) received the B.A.Sc. degree from the University of Tehran, Tehran, Iran, in 2011, the M.A.Sc. degree from McMaster University, Hamilton, ON, Canada, in 2013, and the Ph.D. degree from the University of Toronto, Toronto, ON, Canada, in 2018, all in electrical and computer engineering. Since 2018, he has been a Post-Doctoral Fellow with the University of Toronto. In 2015, he was a Research Intern with Bell Labs, Alcatel-Lucent, Stuttgart, Germany. His research interests include MIMO communications, optimization theory, wireless communications, signal processing, and machine learning. He was a recipient of the IEEE Signal Processing Society Best Paper Award in 2017.
\end{IEEEbiography}

\begin{IEEEbiography}[{\includegraphics[width=1in,height=1.25in,clip,keepaspectratio]{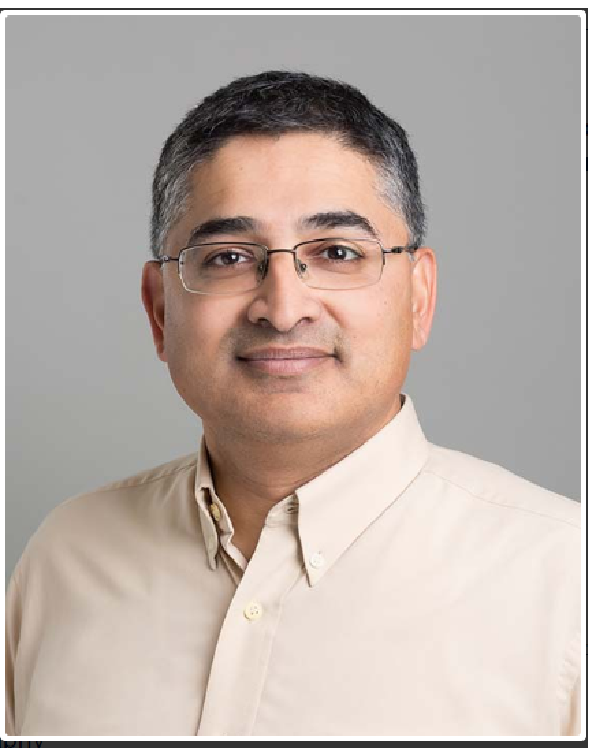}}]{Raviraj S. Adve} (S'88, M '97, SM’06, F’17) was born in Bombay, India. He received his B. Tech. in Electrical Engineering from IIT, Bombay, in 1990 and his Ph.D. from Syracuse University in 1996, His thesis received the Syracuse University Outstanding Dissertation Award. Between 1997 and August 2000, he worked for Research Associates for Defense Conversion Inc. on contract with the Air Force Research Laboratory at Rome, NY. He joined the faculty at the University of Toronto in August 2000 where he is currently a Professor. Dr. Adve’s research interests include analysis and design techniques for cooperative and heterogeneous networks, energy harvesting networks and in signal processing techniques for radar and sonar systems. He received the 2009 Fred Nathanson Young Radar Engineer of the Year award. Dr. Adve is a Fellow of the IEEE
\end{IEEEbiography}

\end{document}